    \newcommand*{\algrule}[1][\algorithmicindent]{\makebox[#1][l]{\hspace*{.5em}\thealgruleextra\vrule height \thealgruleheight depth \thealgruledepth}}%
\newcommand*{\thealgruleextra}{}
\newcommand*{\thealgruleheight}{.75\baselineskip}
\newcommand*{\thealgruledepth}{.25\baselineskip}
\def\ALG@printindent{%
    \ifnum \theALG@nested>0
        \ifx\ALG@text\ALG@x@notext
        \else
            \unskip
            \addvspace{-1pt}
            \ALG@printindent@tempcnta=1
            \loop
                \algrule[\csname ALG@ind@\the\ALG@printindent@tempcnta\endcsname]%
                \advance \ALG@printindent@tempcnta 1
            \ifnum \ALG@printindent@tempcnta<\numexpr\theALG@nested+1\relax
            \repeat
        \fi
    \fi
    }%
\patchcmd{\ALG@doentity}{\noindent\hskip\ALG@tlm}{\ALG@printindent}{}{\errmessage{failed to patch}}
\newbox\statebox
\newcommand{\myState}[1]{%
    \setbox\statebox=\vbox{#1}%
    \edef\thealgruleheight{\dimexpr \the\ht\statebox+1pt\relax}%
    \edef\thealgruledepth{\dimexpr \the\dp\statebox+1pt\relax}%
    \ifdim\thealgruleheight<.75\baselineskip
        \def\thealgruleheight{\dimexpr .75\baselineskip+1pt\relax}%
    \fi
    \ifdim\thealgruledepth<.25\baselineskip
        \def\thealgruledepth{\dimexpr .25\baselineskip+1pt\relax}%
    \fi
    \State #1%
    \def\thealgruleheight{\dimexpr .75\baselineskip+1pt\relax}%
    \def\thealgruledepth{\dimexpr .25\baselineskip+1pt\relax}%
}
\DeclareMathOperator*{\argmin}{arg\,min}
\DeclareMathOperator{\SAL}{ENERGY} 
\DeclareMathOperator{\TIME}{TIME}     
\newcommand{\SAF}{\mathrm{SAF}}
\newcommand{\WARN}{\mathrm{WARN}}
\DeclareMathOperator{\randclust}{\mathcal{C}}
\newcommand{\UPCAST}{\mathtt{UPCAST}}
\newcommand{\DOWNCAST}{\mathtt{DOWNCAST}}
\newcommand{\INTERCAST}{\mathtt{INTERCAST}}
\newcommand{\LOCAL}{\mathsf{LOCAL}}
\newcommand{\CONGEST}{\mathsf{CONGEST}}
\newcommand{\RADIO}{\mathsf{RADIO}\text{-}\mathsf{CONGEST}}
\newcommand{\R}{\mathbb{R}}
\newcommand{\ignore}[1]{}
\newcommand{\ID}{\mathsf{ID}}
\newcommand{\Exponential}{\mathsf{Exponential}}
\newcommand{\dist}{d}
\newcommand{\algo}{\mathcal{A}}
\newcommand{\parens}[1]{\left( #1 \right)}
\newcommand{\polylog}{\mathrm{polylog}}
\newcommand{\Prob}[1]{\mathbf{Pr}\left[#1\right]}
\newcommand{\Expect}[1]{\mathbf{E}\left[#1\right]}
\newcommand{\logstar}{\log^{*}}
\newcommand{\ddim}{k}
\newcommand{\nbd}[2]{\mathcal{B}_{#1}(#2)}
\DeclareMathOperator{\vor}{Vor}
\DeclareMathOperator{\poly}{poly}
\DeclareMathOperator{\OPT}{\mathsf{OPT}}
\crefname{theorem}{Theorem}{Theorems}
\Crefname{lemma}{Lemma}{Lemmas}
\Crefname{figure}{Figure}{Figures}
\Crefname{claim}{Claim}{Claims}
\Crefname{observation}{Observation}{Observations}
\newtheorem{theorem}{Theorem}[section]
\newtheorem{lemma}[theorem]{Lemma}
\newtheorem{observation}[theorem]{Observation}
\newtheorem{proposition}[theorem]{Proposition}
\theoremstyle{definition}
\newtheorem{definition}[theorem]{Definition}
\newcommand{\varsha}[1]{{\color{blue}{#1 --Varsha}}}
\newcommand{\tom}[1]{\todo[backgroundcolor=red!25]{Tom: #1}}
\newcommand{\yijun}[1]{\todo[backgroundcolor=blue!25]{Yijun: #1}}
\title{
Low-Distortion Clustering in Bounded Growth Graphs
}
\author{Yi-Jun Chang \\ National University Singapore \and Varsha Dani \\ Rochester Institute of Technology \and Thomas P. Hayes \\ University at Buffalo}
\date{}
\begin{document}

\maketitle

\begin{abstract}
The well-known clustering algorithm of Miller, Peng, and Xu~(SPAA 2013) is useful for many applications, including low-diameter decomposition and low-energy distributed algorithms.  One nice property of their clustering, shown in previous work by Chang, Dani, Hayes, and Pettie~(PODC 2020), is that distances in the cluster graph are rescaled versions of distances in the original graph, up to an $O(\log n)$ distortion factor and rounding issues. Minimizing this distortion factor is important for efficiency in computing the clustering, as well as in further applications, once the clustering has been constructed.

We prove that there exist graphs for which an $\Omega\left(\log^{1/3} n \right)$ distortion factor is necessary for any clustering.  We also consider a class of nice graphs which we call \emph{uniformly bounded independence} graphs.  These include, for
example, paths, lattice graphs, and ``dense'' unit disk graphs.
For these graphs, we prove that
clusterings of constant distortion always exist, and moreover, we give an efficient distributed algorithm to construct them.
Our clustering algorithm is based on Voronoi cells centered at the vertices of a maximal independent set in a suitable power graph.
   
Applications of our new clustering include low-energy simulation of distributed algorithms in the $\LOCAL$, $\CONGEST$, and $\RADIO$ models, as well as efficient approximate solutions to distributed combinatorial optimization problems.  We complement these results with matching or nearly matching lower bounds.
\end{abstract}

\thispagestyle{empty}

\newpage 
{ 
\tableofcontents
}
\thispagestyle{empty} %
\newpage


\setcounter{page}{1}

\section{Introduction}
Clustering, broadly speaking, refers to algorithms which partition the vertex set of a graph into
subsets which induce connected subgraphs.  In addition to being a fundamental object of study in graph
theory, clusterings have proven to be a useful tool in the design of graph algorithms, and also in
distributed computation.  Often, additonal properties are desired,
such as for the clusters to have small diameter, or for there to be few edges between clusters,
or for the cluster graph (also known as the quotient graph) to have certain properties, such as a bound on the degrees.

We consider the question of clusterings with low \emph{distance distortion}, defined formally
in \Cref{sec:clust-intro}.  Informally, we would like clusterings for which distances in the base graph
are approximately proportional to corresponding distances in the cluster graph; by \emph{scale}, we mean the constant of proportionality, and by \emph{distortion} we mean a 
particular quantification of ``how approximately.''  Such clusterings have been studied before.  Miller, Peng, and Xu~\cite{MPX}
defined a clustering algorithm which was shown by Chang, Dani, Hayes, and Pettie~\cite{Chang20bfs} to have $O(\log n)$
distance distortion.  They posed as an open question whether this is the best possible.

In the present work, we present both positive and negative results on this question.  On the one hand, we prove that there
exist graphs and distance scales for which every clustering has distortion $\Omega\left(\log^{1/3} n \right)$; in particular, the lower bound applies to any family of
regular graphs whose girth and diameter are both $\Theta(\log n)$.  On the other hand, we show that, for a fairly broad class of
``nice'' graphs, including some that are frequently studied as models for distributed networks, it is possible to attain $O(1)$ distortion
at all distance scales, and we give efficient distributed algorithms that achieve this.

\subsection{Network Models}
In the standard $\LOCAL$ model~\cite{Linial92} of distributed computing, a network is modeled as a graph $G=(V,E)$ so that each vertex $v \in V$ is a computer and each edge $e \in E$  is a bidirectional communication link. Each vertex $v \in V$ is equipped with a distinct $O(\log n)$-bit identifier $\ID(v)$. The communication proceeds in synchronous rounds: In each round, each vertex $v \in V$ receives the messages sent from its neighbors, performs some arbitrary local computation, and sends a message of arbitrary length to each of its neighbors. The $\CONGEST$ model is a variant of the $\LOCAL$ model where the size of each message is  $O(\log n)$ bits. 

We also consider the extension of $\CONGEST$ to the more challenging \emph{radio network} model. In the $\RADIO$ model, in each round of communication, each vertex can choose whether to transmit. In case a vertex chooses to transmit, it must transmit the same $O(\log n)$-bit message to all its neighbors. For each vertex $v$, it successfully receives a message from a neighbor $u$ if $u$ is the only neighbor of $v$ that transmits a message in this round.

Throughout the paper, we write $n = |V|$ and $m = |E|$. Let $d(u, v)$  denote the graphical distance between vertices $u$ and $v$ in $G$. We also write $d_G(u,v)$ when disambiguation is needed. Let $\nbd{u}{r}$ denote the ball of radius $r$ around $u$ in $G$, \emph{i.e.},  $\nbd{u}{r} = \{v \in V \, : \, d_G(u, v) \le r\}$.

\subsection{Graph Clustering}\label{sec:clust-intro}

In this paper, we focus on the distributed computation of graph clustering, where given an undirected graph $G=(V,E)$, the goal is to partition $V$ into subsets, each of which induces a connected subgraph of $G$, satisfying some desired requirements. Throughout the paper, for any $v \in V$, we write $[v]$ to denote the cluster that contains $v$ in the clustering under consideration.

\begin{definition}\label{def:clustering}
A \emph{clustering} of a graph $G$ is a partition of $G$ into connected subgraphs, called \emph{clusters}.
Each cluster is associated with a canonical representative vertex, known as its \emph{center}, as well as a
 BFS tree spanning the cluster and rooted at the center, where each vertex knows its level in the tree, its parent in the tree, and the cluster center.
\end{definition}


An equivalent way of specifying any clustering is to give a map $f : V \to V$, where, for every $v \in V$,
$f(v)$ equals the center of the cluster containing $v$.   We call this the \emph{cluster center}  map.

Graph clustering is a fundamental tool in distributed computing, with applications to
many different kinds of distributed algorithms. 
   In particular, clusterings that partition a graph into low-diameter components play a critical role in the complexity theory of local distributed graph problems~\cite{GhaffariKMU18}. There exist generic methods allowing us to obtain $(1\pm \epsilon)$-approximate solutions to an arbitrary covering or packing integer linear program problem in the $\LOCAL$ model~\cite{chang2023complexity,GhaffariKMU18}.   
   Moreover, it is known~\cite{GhaffariKMU18} that any \emph{sequential local} polylogarithmic-round algorithm can be converted into a   polylogarithmic-round distributed algorithm in the $\LOCAL$ model via such clustering.

\subparagraph{Voronoi Clustering.}
For a given set $S$ of \emph{centers}, arguably the most natural clustering is Voronoi clustering, also known as Voronoi diagram or
decomposition, or Dirichlet tesselation.  For concreteness, we use the unique $\ID$ associated with each vertex to break ties
in a canonical way.
\begin{definition}
For $\emptyset \ne S \subseteq V$, the Voronoi clustering centered at $S$, which we denote $\vor(S)$ is given by cluster center map
\[
f(v) = \argmin \{ \dist(s, v) \, \colon \, s \in S\}.
\]
Equivalently, each vertex joins the cluster of the center closest to it.  In case of a tie between several centers at
minimal distance from $v$, we take $f(v)$ to be the one with the smallest $\ID$.
\end{definition}

An important variation on this assigns additive weights to the various cluster centers.
\begin{definition}\label{def:AWVC}
Let $\emptyset \ne S \subseteq V$, and let $W: S \to \R$.  Then the associated
\emph{additively weighted Voronoi clustering} with weights $W$, which we denote $\vor(S,W)$, is given by cluster center  map
\[
f(v) =  \argmin \{ \dist(s,v) - W(s) \, \colon \, s \in S  \}.
\]
Any ties are broken in favor of the center with the smallest $\ID$, as before.
\end{definition}

We emphasize that, with additive weights, the actual number of clusters may be smaller than $|S|$, 
since some element $u \in S$ may prefer another center $v \in S$ over itself.
In this case, no vertex will select $u$ as the center, as $v$ is preferred compared with $u$ by the triangle inequality.


\subparagraph{Cluster Graph.} For a given clustering $\mathcal{C}$, 
we will frequently be interested in the associated \emph{cluster graph},
which is defined by the following quotient construction.
Let $V'$ be the set of clusters, and whenever an edge $\{v,w\}$ has its endpoints
in two distinct clusters, $[v]$ and $[w]$, then $E'$ contains edge $\{ [v], [w] \}$. 
In other words, the cluster graph is the quotient graph $G/\sim$ where $\sim$ is the 
    equivalence relation on vertices defined by the clustering $\mathcal{C}$.


\subsection{Crossing Edges and Distance Distortion}

Let $\mathcal{C}$ be a clustering, with associated cluster center map, $f$, 
and let $R \ge 1$.  We are particularly interested in the following two
metrics of how efficiently $\mathcal{C}$ represents graph $G$ with respect to a given \emph{scale factor} $R \geq 1$.  

\subparagraph{Few Crossing Edges.} The first metric is the fraction of crossing edges,
that is,
\[
\frac{|E'|}{|E|},
\]
where $E' = \{ \{v,w\} \in E \colon f(v) \ne f(w) \}$ is the set of \emph{crossing edges}. We would like most of the edges of the graph to have both endpoints within the same cluster.
    Ideally, the fraction of crossing edges should be $O(1/R)$, where $R$ is the scale factor.

\subparagraph{Low Distance Distortion.}  Our second goal is \emph{low distortion}, which is, informally, 
the extent to which distances in the cluster graph are close to being
distance in $G$, rescaled by the scale factor $R$. 
Intuitively, we would like the relationship
    $d \approx  R d'$ to hold, where $R \ge 1$ is the scale factor, $d$ represents the graphical distance in the original graph $G$, and $d'$ represents the graphical distance in the cluster graph $G'$.
Since graphical distances are
integers, there is  a limit to what  we can expect  here,  especially 
for nearby pairs of vertices.   We resolve this issue by regularizing using an additive term before
comparing them: $1 + (d/R) \approx 1 + d'$, which makes sense even when one or both of $d, d'$ equals zero. 
This motivates the first part of the following definition.
The second part of the definition requires that the \emph{strong diameter} of the clusters should be small, capturing the intuition that the scale factor $R$ is the ``intended'' cluster diameter bound.

\begin{definition} \label{def:distortion}
Let $(\mathcal{C}, R)$ be a clustering with scale factor $R$. 
Let $d$ denote graphical distance in $G$, and $d'$ denote graphical distance in the cluster graph.
We define the \emph{distance distortion} of $(\mathcal{C},R)$ as the minimum $C \ge 1$ to satisfy the following conditions.
\begin{enumerate}
    \item \label{cond1} For every $v,w \in V$, we have
    \[
\frac{1}{C} \le \frac{1 + (d(v,w)/R)}{1 + d'([v],[w])} \le C.
\]
    \item \label{cond2} The diameter of the subgraph induced by each cluster is at most 
    $C R$.  
\end{enumerate}
\end{definition}

 In \cref{def:distortion}, we emphasize that \cref{cond1,cond2} are logically independent. 
The \emph{weak diameter} of a vertex subset $S \subseteq V$ is defined as $\max_{u\in S, v\in S} \dist_G(u,v)$. While \cref{cond1} \emph{does} imply that the weak diameter of each cluster is $O(CR)$, \cref{cond1} \emph{does not} imply an $O(CR)$ cluster diameter upper bound. For example, let $G$ be the result of concatenating all the $n-1$ leaf vertices of an $n$-vertex star graph into a path $P$, and consider the clustering $\mathcal{C}$ where $P$ forms a cluster and the star center forms a single-vertex cluster. Let $R = 1$. The clustering $(\mathcal{C}, R)$ satisfies \cref{cond1} for $C = 3$, as $d(v,w) \in \{0, 1, 2\}$ and $d'([v],[w])\in \{0,1\}$ for all $v \in V$ and $w \in V$. This same clustering $(\mathcal{C}, R)$, however, does not satisfy \cref{cond2} for any $C < n-2$, as the diameter of $P$ is $n-2$.

\subsection{MPX Clustering}

Miller, Peng, and Xu~\cite{MPX} defined a particularly nice randomized algorithm which, for any input graph, 
produces a clustering that, with high probability, has 
both few crossing edges~\cite{MPX} and low distance distortion~\cite{Chang20bfs}.  
We will refer to this as ``the MPX clustering algorithm.''
Additionally, their algorithm can easily be implemented as a distributed algorithm, where the communication network 
is the graph to be decomposed.  

Formally, the MPX clustering algorithm computes an additively weighted Voronoi clustering, where
every vertex is a potential cluster center, and the weights are i.i.d.~exponentially distributed with mean $R$.
In other words, each vertex $v$ independently samples a weight $W(v)$ from the exponential distribution with mean $R$, and then each
vertex $u$ joins a cluster ``centered'' on $x$, where $x$ minimizes $d(u,x)-W(x)$. Using the terminology of \cref{def:AWVC}, for a given scale factor $R$, the MPX clustering is simply $\vor(S,W)$ with $S = V$ and $W(v) \sim \Exponential(1/R)$.

A useful interpretation of the MPX algorithm is as follows. Each vertex $v$ initiates a BFS tree rooted at $v$ at time $t= -W(v)$. The BFS trees grow at the rate of one hop per unit of time. If the first BFS tree to reache $v$ is rooted at $u$, then $v$ joins the cluster of $u$.



\begin{restatable}[\cite{MPX,Chang20bfs}]{theorem}{thmMPX}\label{thm:MPX}\label{prop:MPX-distortion-ub}
    For any graph $G = (V,E)$ and any parameter $R \ge 1$,
    the MPX clustering algorithm produces a clustering for which the
    expected number of crossing edges is $O(m/R)$ and 
    the distance distortion is $O(\log n)$ with probability $1 - 1/\poly(n)$.
\end{restatable}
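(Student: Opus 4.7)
The plan is to handle the two claims separately, using only the structure of the additively weighted Voronoi diagram with i.i.d.\ $\operatorname{Exp}(1/R)$ weights, together with the memoryless property.

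\textbf{Expected crossing edges.} Fix an edge $\{u,v\} \in E$ and write $X_u(x) := d(u,x) - W(x)$, so that $f(u) = \argmin_x X_u(x)$. Because $|d(u,x) - d(v,x)| \le 1$, whenever the minimum of $X_u$ is uniquely attained at some $x^*$ with a gap of at least $2$ to every other value $X_u(y)$, the same $x^*$ also minimizes $X_v$, and hence $f(u) = f(v)$. A standard exponential-race analysis---think of each $x$ as launching a BFS at time $-W(x)$ and each vertex joining the first BFS to reach it---uses the memoryless property to show that, conditional on which vertex is the argmin, the gap to the second-smallest value is stochastically dominated by $\operatorname{Exp}(1/R)$, so the gap is less than $2$ with probability $O(1/R)$. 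Therefore $\Prob{f(u) \neq f(v)} = O(1/R)$, and linearity of expectation over the $m$ edges yields the $O(m/R)$ bound.

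\textbf{Distortion, easy direction.} A union bound over the $n$ i.i.d.\ exponentials gives $W_{\max} := \max_v W(v) \le c R \ln n$ with probability $1 - 1/\poly(n)$ for a suitable constant $c$, and I condition on this event. Since each vertex $u$ in a cluster with center $c$ must satisfy $d(u,c) - W(c) \le d(u,u) - W(u) = -W(u)$, we get $d(u,c) \le W_{\max} = O(R\log n)$, so every cluster has radius---and hence diameter---at most $O(R\log n)$. Lifting any walk of length $k$ in the cluster graph from $[v]$ to $[w]$ to a walk in $G$ of length at most $(k+1)\cdot O(R\log n)$ then gives $d(v,w) \le O(R\log n)(1 + d'([v],[w]))$, which immediately yields $(1 + d(v,w)/R)/(1 + d'([v],[w])) = O(\log n)$.

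\textbf{Distortion, hard direction, which is the main obstacle.} I still need $d'([v],[w]) \le O(\log n)(1 + d(v,w)/R)$. My plan is to fix a shortest $v$-to-$w$ path $\pi$ in $G$ of length $D = d(v,w)$ and bound $d'([v],[w])$ by the number of crossing edges on $\pi$. The expectation of this count is $O(D/R)$ by the first part, but the events are not independent so standard Chernoff does not apply directly. The key idea is locality: on the event $W_{\max} \le cR\ln n$, the cluster label of any vertex depends only on the exponential weights of vertices within distance $O(R\log n)$ of it. Partitioning $\pi$ into blocks of length $\Theta(R\log n)$ and coupling each block's crossing-edge count with a function of only the weights in its $O(R\log n)$-neighborhood makes non-adjacent blocks independent, after which a Chernoff-type argument, absorbing an extra $\log n$ factor for the tail, produces the desired high-probability bound. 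This last step is the delicate part of the argument, and is precisely the content of the refined analysis of \cite{Chang20bfs} building on \cite{MPX}.
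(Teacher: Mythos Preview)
The paper does not prove this theorem; it is quoted as a known result, with the crossing-edge bound attributed to \cite{MPX} and the distortion bound to \cite{Chang20bfs}. So there is no in-paper proof to compare against. Your crossing-edge argument and the easy direction of the distortion (cluster diameters $\le W_{\max}=O(R\log n)$, then lift a cluster-graph path back to $G$) are correct and standard.

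The hard direction has a genuine gap. Your plan is to partition a shortest $v$--$w$ path of length $D$ into blocks of length $\Theta(R\log n)$, decouple non-adjacent blocks via locality, and apply Chernoff to the block sums. But when $D\le R\log n$---in particular $D\le R$, which is precisely the regime that determines the distortion constant---there is only one block, hence no independence to exploit, and your argument yields nothing beyond the trivial bound of $D$ crossing edges; for $D=R$ you need $O(\log n)$, which this does not give whenever $R\gg\log n$. Even for long paths, the per-block range $[0,R\log n]$ is too wide for a Hoeffding-type bound on the block sums to concentrate at the required scale. The route that actually works (and is essentially what the cited literature does) avoids the path decomposition: for each vertex $v$, bound the number of ``near-winners'', namely vertices $x$ with $X_v(x):=d(v,x)-W(x)$ within $2R$ of the minimum $M_v$. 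Pick a deterministic threshold $t^*$ with $\mu(t^*):=\sum_x\Pr[X_v(x)\le t^*]=\Theta(\log n)$; then $\Pr[M_v>t^*]\le e^{-\mu(t^*)}=n^{-\Omega(1)}$, and since $\Pr[X_v(x)\le t^*+2R]\le e^{2}\Pr[X_v(x)\le t^*]$ for every $x$, the count at threshold $t^*+2R$ is a sum of independent Bernoullis with mean $O(\log n)$, so is $O(\log n)$ w.h.p.\ by Chernoff. A union bound over $v$ then shows every radius-$R$ ball meets $O(\log n)$ clusters, which is exactly the hard direction of the distortion bound.
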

\begin{proof}
    The bound on the number of crossing edges is from~\cite{MPX}. The bound on the distance distortion is from~\cite[Lemma 2.2]{Chang20bfs} by setting $R = 1/\beta$. 
\end{proof}




MPX clustering has a wide range of applications in  distributed computing: distributed approximation~\cite{BHKK16,chang2023complexity,Censor17maxcut,FFK21,faour2021approximating}, distributed property testing~\cite{Even17,levi2021property},  distributed spanner constructions~\cite{elkin2018efficient,forsterOPODIS2021}, distributed expander decomposition construction~\cite{chang2021near,ChangS20},
  and radio network algorithms~\cite{Chang20bfs,chang2018energy,CzumajD17,dani2022wake,haeupler2016faster}. 



\subsection{New Results}
We set out to answer one unsolved question, namely: Is it possible to improve on the construction of MPX, and in particular to reduce its distance distortion?  The importance of the question stems from the wide application of MPX. Notably, in the radio network algorithms by Chang, Dani, Hayes, and Pettie~\cite{Chang20bfs} and by Dani and Hayes~\cite{dani2022wake}, the distortion contributes significantly to their overall cost guarantees. Thus, improved distortion without sacrificing the other nice properties of MPX would immediately improve their results.

\subsubsection{Distortion Lower Bound}
The $O(\log n)$ distortion bound in \cref{thm:MPX} for analyzing the MPX clustering algorithm cannot be significantly improved, even for very nice graphs. For example, when $G$ is a cycle, with probability $1 - 1/\poly(n)$, the MPX clustering with $R = \Theta(\log n / \log \log n)$ contains a cluster with diameter $\Omega(R \log n)$ and a sequence of $\Omega(\log n / \log \log n)$ consecutive single-vertex clusters. This means that, although the fraction of crossing edges
is the optimal $\Theta(1/R)$, distances are somewhat distorted due to the non-uniformity of the clusters: At scale $R = \Theta(\log n / \log \log n)$, the distances can be \emph{overestimated} by a distortion factor of $\Omega(\log n)$ and \emph{underestimated} by a distortion factor of $\Omega(\log n / \log \log n)$. See \cref{sec:MPXanalysis} for details.



This naturally leads us to the question as to whether some other algorithm could find clusterings where the distortion is $O(1)$, or whether there exist graphs for which polylogarithmic distortion is unavoidable.
In \cref{sec:dist-lb}, we answer this question, negatively.

\begin{restatable} {theorem}{thmDistLB}\label{thm:distortion-lb-summarized}\label{cor:distortion-lb}
    For every constant $d\ge 3$, there exist $d$-regular graphs whose girth and diameter are both $\Theta(\log n)$ such that for all $R = O\left(\log^{1/3} n\right)$, every clustering with scale factor $R$ distorts distances by $\Omega(R)$.
\end{restatable}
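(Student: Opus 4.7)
The plan is to take $G$ to be a $d$-regular graph with both girth $g$ and diameter $D$ of order $\Theta(\log n)$; such graphs exist for every constant $d \ge 3$ via Lubotzky-Phillips-Sarnak Ramanujan graphs (when $d-1$ is a prime power) and analogous explicit high-girth expander constructions for other $d$. Suppose, for contradiction, that for such a $G$ there is a clustering with scale factor $R = O(\log^{1/3} n)$ and distortion $C = o(R)$.

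The first step is to use the girth aggressively. Since $CR \ll g/2$, any ball of radius $CR$ in $G$ is isometric to a ball in the infinite $d$-regular tree. In particular, each cluster (which has diameter at most $(C-1)R$ by the upper distortion inequality applied to same-cluster pairs) fits inside such a tree ball and is itself a subtree. Therefore each cluster has at most $(d-1)^{(C-1)R}$ vertices, the number of clusters $N$ is at least $n / (d-1)^{(C-1)R}$, the maximum degree of the cluster graph $G'$ is at most $d(d-1)^{(C-1)R}$, and the diameter of $G'$ is at most $O(CD/R)$.

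The main step is a double-counting argument relating metric growth in $G$ and in $G'$. The distortion inequality sandwiches $B'_{r'}([v])$ between (the union of clusters meeting) $B_G(v, \Theta(Rr'/C))$ from below and $B_G(v, \Theta(CRr'))$ from above, while $|B_G(v,r)| = \Theta((d-1)^r)$ for all $r < g/2$. Combining this sandwich with (i) the edge-expansion of $G$ as a Ramanujan expander, which forces $|\partial U| \ge \alpha |U|$ for every cluster $U$ (so a constant fraction of edges cross clusters), and (ii) a path-counting argument that uses uniqueness of shortest paths of length $R$ in the tree-like regime together with tree-contiguity (a shortest path meets each subtree cluster in a contiguous interval, so the cluster sequence along it is a simple path in $G'$), should yield an inequality strong enough to force $C = \Omega(R)$ exactly when $R^3 = O(\log n)$. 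The cubic exponent arises from balancing the three lengths in the calculation: cluster diameter $CR$ (which sets $\log \Delta'$), cluster-graph diameter $CD/R$, and ambient diameter $D = \Theta(\log n)$.

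The hardest step will be extracting a genuine $\Omega(R)$ lower bound rather than the merely $\Omega(1)$ lower bound that falls out of a naive Moore-type inequality on $G'$. A Moore-bound calculation alone gives $\log n \lesssim C^2 \log n$, which is vacuous, because an expander saturates the Moore inequality. The sharper bound must exploit the fact that $G'$ is not an arbitrary graph but is produced by contracting a high-girth expander: in particular, one must combine the per-cluster edge-expansion lower bound $|\partial U| \ge \alpha|U|$ with an upper bound on the number of cluster-boundary crossings along shortest length-$R$ paths in terms of $C$, so as to pick up the extra factor of $R$ beyond what the growth-rate accounting alone provides. Getting this path-counting step to actually produce a \emph{upper} bound on crossing density (rather than only the trivial lower bound from expansion) is the crucial and delicate ingredient.
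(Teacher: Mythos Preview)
Your proposal is an outline with an acknowledged gap rather than a proof. You yourself note that the Moore-type calculation on the cluster graph is vacuous and that the ``crucial and delicate ingredient'' is still missing. That missing ingredient is the whole content of the argument, so as written the proposal does not establish the theorem.

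The paper's approach is much more elementary than the expansion/Moore-bound route you are pursuing, and uses only girth and $d$-regularity (no spectral expansion). The two key steps you are missing are:
\begin{enumerate}
\item \textbf{A direct crossing-edge lower bound from $d$-regularity and tree-shaped clusters.} If the distortion is small then each cluster has diameter below $g/2$ and is therefore a tree. In a $d$-regular graph, a tree with $i$ internal edges has exactly $d + (d-2)i$ edges leaving it; summing over clusters shows that strictly more than a $(1-2/d)$ fraction of all edges are crossing. This replaces your appeal to Ramanujan edge-expansion and requires no spectral input.
\item \textbf{An averaging argument over length-$\ell$ paths.} For any $\ell < g$, every edge lies in the same number of length-$\ell$ paths (namely $\ell(d-1)^{\ell-1}$). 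Hence some length-$\ell$ path has at least a $(1-2/d)$ fraction of its edges crossing; since clusters are trees of diameter below $g/2$, consecutive crossings along such a path correspond to distinct clusters, so its endpoints are at distance $\Omega(\ell)$ in the cluster graph. Taking $\ell = \lfloor g/(D_0+1) \rfloor$ with $D_0$ the maximum cluster diameter yields distortion $\Omega(\min\{R,\sqrt{g/R}\})$, which is $\Omega(R)$ precisely when $R = O(g^{1/3}) = O(\log^{1/3} n)$.
\end{enumerate}
Your sketch circles around both ideas (you mention boundary edges and path counting) but never combines them; in particular, the symmetry fact that every edge sits in equally many short paths is what turns the global crossing-edge density into a single witness path, and this is exactly the step you flagged as unresolved.
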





Indeed, \cref{thm:distortion-lb-summarized}  establishes that there are graphs and scale factors for which no clustering can have a distortion of $o\left(\log^{1/3} n\right)$, so now we know that polylogarithmic distortion is unavoidable. It remains an intriguing open question to determine the right exponent in $[1/3, 1]$ for the optimal distortion bound.

\subsubsection{New Clustering}
While for general graphs we came up with a negative answer, one could still hope that for some ``nice'' classes of graphs, it is possible to construct clusterings with constant distortion. For example, to obtain such clusterings for a cycle, it suffices to partition the cycle into paths of length $\Theta(R)$. This leads us to explore special graph classes for which such clusterings are possible.

Our main contribution in this paper is a \emph{simple} new clustering algorithm that completely meets \emph{all} of our objectives: few crossing edges, constant distortion, and efficient distributed constructions, for a fairly big class of graphs.
 
\begin{definition} 
A graph $G = (V,E)$ is of {\bf bounded independence} if there are constants $\gamma, \ddim >0$ such that for all $r > 0$ and $v \in V$,
every independent set $S \subseteq \nbd{r}{v}$ satisfies $|S| \le \gamma r^{\ddim}$.
\end{definition}

Bounded independence graphs, also known as bounded growth graphs in some works, have been extensively studied in the distributed setting~\cite{kuhn2005local,kuhn2005locality,gfeller2007randomized,schneider2010optimal}. Notably, Schneider and Wattenhofer~\cite{schneider2010optimal} presented a highly efficient MIS algorithm for such graphs in the $\CONGEST$ model, which will play a crucial role in our work. A primary motivation for investigating bounded independence graphs is that they cover \emph{unit disk graphs}, which are commonly employed in modeling wireless networks for the study of distributed graph algorithms~\cite{alzoubi2002message,burkhart2004does,funke2006simple,gao2001geometric,kuhn2003worst}.

\begin{definition}
    Fix $R>0$. We write $G^{\le R}$ to denote the graph with the same vertex set as $G$, but where there is an edge between $u$ and $v$ if $d_G(u,v) \le R$. We call $G^{\le R}$ the {\bf power graph} of $G$ with respect to parameter $R$.
\end{definition}

If $G$ is a graph with bounded independence and $R$ is a fixed \emph{constant}, then $G^{\le R}$ also has bounded independence. However, the actual independent set size bound gets much worse in the power graph. Indeed, if the polynomial bounding the size of independent subsets of $r$-balls in $G$ is  $\gamma r^{\ddim}$, then the corresponding bound for $G^{\le R}$ is $\gamma_R r^{\ddim}$, where $\gamma_R = \gamma R^{\ddim}$, as $r$-balls in $G^{\le R}$ are $rR$-balls in $G$. Since $\gamma_R$ depends on the parameter $R$, this observation is useful only when $R$ is small.

We consider graphs whose power graphs $G^{\le R}$ are of bounded independence  \emph{uniformly} for all $R$.

\begin{definition}  
A graph $G = (V,E)$ is of {\bf \emph{uniformly} bounded independence} if there exist constants $\gamma, \ddim >0 $ such that for all $R >0$, $G^{\le R}$ is  bounded independence with parameters $\gamma, \ddim$.
\end{definition}

On the face of it, uniformly bounded independence seems like a very strong property and one might worry that it is impossible to satisfy. In~\cref{appsec:bdi-eg}, we show that various examples of interesting graph classes do exhibit this property, including paths, cycles, grids, lattices, and random geometric graphs, among others. 
Random geometric graphs~\cite{gilbert1961random} are widely used to model wireless sensor networks~\cite{haenggi2009stochastic}. Distributed algorithms on grids have been extensively studied~\cite{balliu2022local,bonamy_et_al:LIPIcs.DISC.2018.12,brandt2017lcl,grebik2023local}. For these graphs, we show in \cref{sec:low-dist-clust} that clusterings with constant distortion are possible, and moreover, can be efficiently constructed by distributed algorithms. 


\begin{restatable}{theorem}{thmDistCross}\label{thm:distort-and-crossing}
    Let $G$ be a graph and $R \ge 1$ be a scale factor such that $G^{\le R}$ has bounded independence. There exists a clustering of $G$ with a constant distance distortion
    at scale $R$ and an $O(1/R)$ fraction of crossing edges.  
\end{restatable}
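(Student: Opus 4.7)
The plan is to construct the clustering as an additively weighted Voronoi diagram $\vor(S, W)$, where $S$ is a maximal independent set of the power graph $G^{\le R}$ and the weights $W(s)$ are drawn independently and uniformly from $[0, cR]$ for a suitably large constant $c$. I will then show that the distortion is $O(1)$ deterministically and the expected number of crossing edges is $O(m/R)$, so that by the probabilistic method some realization of $W$ satisfies both conclusions. Two elementary properties of $S$ anchor the whole argument: by $G^{\le R}$-independence, any two distinct centers are at $G$-distance $> R$; by maximality, every $v \in V$ has some $s \in S$ with $d_G(v, s) \le R$. Combined with $W \le cR$, the minimum weighted cost $\min_{s \in S}\bigl(d_G(s, v) - W(s)\bigr)$ is at most $R$, so the assigned center satisfies $d_G(f(v), v) \le (c+1)R$; thus every cluster has diameter $O(R)$, deterministically.

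For the distortion, I would fix $v, w \in V$ and a shortest $G$-path $v = v_0, v_1, \ldots, v_L = w$ with $L = d_G(v, w)$, partition the indices into $\lceil L/R \rceil + 1$ consecutive blocks of length at most $R$, and observe that within each block all vertices lie in a $G$-ball of radius $R$ around the block's starting vertex. Since clusters have radius $O(R)$, every center serving some vertex of a given block lies in a $G$-ball of radius $O(R)$---that is, a $G^{\le R}$-ball of radius $O(1)$---around this start. Bounded independence of $G^{\le R}$ then yields only $O(1)$ such centers per block. Summing, at most $O(L/R + 1)$ distinct clusters appear along the path, and shortcutting any repeats in the induced walk on the cluster graph gives $d'([v],[w]) = O(L/R + 1)$. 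The reverse inequality $d_G(v, w) = O\bigl(R(d'([v],[w]) + 1)\bigr)$ is immediate from the cluster-diameter bound. Together these yield constant distance distortion in the sense of \cref{def:distortion}.

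For the crossing-edge bound I would use the standard MPX-style calculation. If an edge $\{u, v\}$ crosses with $f(u) = s_1$ and $f(v) = s_2$, rearranging the defining inequalities of $\vor(S, W)$ forces $W(s_2) - W(s_1)$ to lie in $\bigl[d_G(s_2,v) - d_G(s_1,v),\ d_G(s_2, u) - d_G(s_1, u)\bigr]$, an interval of length at most $2$ by the triangle inequality. Under independent uniform-$[0, cR]$ weights the density of $W(s_2) - W(s_1)$ is at most $1/(cR)$, so each candidate ordered pair $(s_1,s_2)$ contributes probability $O(1/R)$. The cluster-radius bound restricts both $s_1$ and $s_2$ to be within $G$-distance $O(R)$ of $\{u,v\}$, i.e., within $G^{\le R}$-distance $O(1)$, so bounded independence of $G^{\le R}$ caps the number of candidate pairs at $O(1)$. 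A union bound gives per-edge crossing probability $O(1/R)$, hence $O(m/R)$ crossings in expectation; Markov's inequality then produces a realization achieving both $O(1)$ distortion and $O(m/R)$ crossings.

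The main obstacle is the crossing-edge analysis: a naive sum over all ordered pairs of centers would give only a useless $O(|S|^2/R)$ per-edge bound. What saves this is precisely the combination of the deterministic cluster-radius bound (which localizes the candidate centers to the vicinity of the edge) and the bounded-independence assumption on $G^{\le R}$ (which limits how many centers can coexist in any small $G^{\le R}$-ball), collapsing the per-edge sum to $O(1/R)$ and mirroring the MPX per-edge bound without any appeal to the memorylessness of the exponential distribution.
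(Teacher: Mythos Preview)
Your proposal is correct and matches the paper's proof essentially step for step: the paper likewise takes $S$ to be a maximal independent set in $G^{\le R}$, uses uniform additive weights (there with range $[0,R/10]$ rather than a ``suitably large'' $cR$, but any fixed constant works), proves the $O(1)$ distortion deterministically from the cluster-radius bound plus bounded independence along a shortest path, and proves the $O(1/R)$ per-edge crossing probability by the same interval-of-length-$2$ calculation together with a union bound over the $O(1)$ nearby center pairs. The only cosmetic difference is that you make the final probabilistic-method/Markov step explicit, whereas the paper leaves it implicit.
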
 

If $G$ is of uniformly bounded independence, then the condition of 
\cref{thm:distort-and-crossing} is satisfied for any choice of the scale factor $R$. As we will later see, an application of low distortion clusterings is in low energy simulation of distributed algorithms. To achieve this we would like to follow the multi-scale clustering framework of Dani and Hayes~\cite{dani2022wake}. In this framework, it is necessary to simultaneously build low-distortion clusterings at multiple scales that somehow interact well with each other.
Thus, we are interested in graphs for which bounded independence carries through to the power graphs \emph{uniformly at all scales.}  

 Our construction of the clustering of \cref{thm:distort-and-crossing} is completely different from MPX: Instead of using randomized start times to determine cluster centers, which leads to large variations in cluster diameters, we first find a maximal independent set (MIS) in the power graph $G^{\le R}$, and use those as cluster centers to partition $G$ into Voronoi cells, which leads to far more uniform cluster diameters.  
Specifically, cluster diameters for MPX, even on really nice graphs like the square grid, range from $1$ to $R \log n$, but cluster diameters for our construction are always $\Theta(R)$.  The bounded independence property can be shown to imply low distortion for this clustering. To ensure that only $O(m/R)$ edges are crossing, we tweak this construction slightly using additive weights.

Although the above idea is almost as simple as the MPX algorithm, implementation is more difficult because we need to solve MIS to find the cluster centers. In the end, it can still be implemented efficiently thanks to excellent prior work on distributed MIS. As we will discuss, existing techniques for constructing maximal independent sets can be leveraged to convert the algorithm of \cref{thm:distort-and-crossing} into an efficient distributed algorithm, specifically
in the $\LOCAL$ model, which is essential for most of our algorithmic applications.
For $\CONGEST$ and $\RADIO$, further challenges are caused by the fact that we need a MIS for the power graph, $G^{\leq R}$, but the algorithm needs to run on the base graph $G$.

Given that MPX is already widely applied across various domains, our algorithm yields improved results for many problems for a large and practically useful class of graphs. The improvement from $O(\log n)$ to $O(1)$ is significant for problems whose complexity is already small. As we will later see, \cref{thm:distort-and-crossing} allows us to obtain tight round complexity bounds and nearly tight energy complexity for many natural and well-studied distributed problems, such as $(1-\epsilon)$ approximation for the maximum independent set and maximum matching.




\subsection{Distributed Computing Applications}
In \cref{sec:low-dist-clust}, we show that, for graphs $G$ such that 
$G^{\le R}$ is of bounded independence, very good low-distortion clusterings can be constructed efficiently.

\begin{restatable}{theorem}{thmLOCALclustering}\label{thm:LOCAL-clustering}
Let $R \ge 1$.
There is a randomized $\LOCAL$ algorithm which, when run on a graph $G$ such that 
$G^{\le R}$ is of bounded independence,
produces a clustering at scale $R$ that satisfies the following conditions in $O(R \logstar n)$ rounds.
\begin{itemize}
    \item The distance distortion is constant, so the diameter of each cluster is $O(R)$.
    \item Each edge is a crossing edge with probability $O(1/R)$, so the clustering has an $O(1/R)$ fraction of crossing edges in expectation.
\end{itemize}
\end{restatable}

Moreover, we show that the algorithm of \Cref{thm:LOCAL-clustering} can be \emph{derandomized} using a proper distance coloring with $O(1)$ colors and the method of conditional expectations.
\begin{restatable}{theorem}{thmLOCALclusteringDet}\label{thm:LOCAL-clustering-det}
Let $R \ge 1$.
There is a deterministic $\LOCAL$ algorithm which, when run on a graph $G$ such that 
$G^{\le R}$ is of bounded independence,
produces a clustering at scale $R$ that satisfies the following conditions in $O(R \logstar n)$ rounds.
\begin{itemize}
    \item The distance distortion is constant, so the diameter of each cluster is $O(R)$.
    \item The clustering has an $O(1/R)$ fraction of crossing edges.
\end{itemize}
\end{restatable}

As we will later see, both the fraction of crossing edges $O(1/R)$ and the round complexity of the construction algorithm $O(R \logstar n)$ are asymptotically \emph{optimal}.

Applications of our new clustering algorithms include low-energy simulation of distributed algorithms 
and efficient approximate solutions to distributed combinatorial optimization problems. We also complement these results with matching or nearly matching lower bounds. 

While \Cref{thm:LOCAL-clustering-det} seems to be strictly better than \Cref{thm:LOCAL-clustering}, as we will later see, some of our applications specifically require the property of \Cref{thm:LOCAL-clustering} that each edge is a crossing edge with probability $O(1/R)$, so \Cref{thm:LOCAL-clustering,thm:LOCAL-clustering-det} are incomparable.

\subsubsection{Low-Energy Simulation}
In standard distributed computing one is often interested in round complexity, \emph{i.e.}, the number of rounds taken by an algorithm, and message complexity, \emph{i.e.}, the number of messages sent by vertices during the algorithm. While these measures are good descriptors of the complexity of a distributed algorithm, they do not account for the portion of the running time when a vertex is simply waiting for information from its neighbors. \emph{Energy complexity}---also known as \emph{awake complexity}---captures the idea that a vertex should not be charged for rounds when it is \emph{predictably} not participating in the action of the algorithm. The predictability here is key: If a vertex knows it will not send or receive messages in certain rounds, then it can power down for those rounds of the algorithm. On the other hand, if it is powered down when it should be receiving a message, then the message is lost and could compromise the correctness of the algorithm. The goal, then, is to design distributed algorithms that enable vertices to have scheduled downtimes when they are far from the action in the algorithm, thereby saving energy.

We model energy usage as follows. In each round of the algorithm, a vertex $v$ can choose to be asleep or awake. Every round where $v$ chooses to be awake costs one unit of energy for $v$. A vertex can only participate in the algorithm, \emph{i.e.}, communicate with other vertices, while it is awake. Notably, in the $\LOCAL$ model, an awake vertex may exchange complete information about its state with all of its awake neighbors. The setting where vertices are allowed to sleep is known as the \emph{sleeping} model~\cite{ChatterjeeGP20}. The energy usage of a vertex is its number of awake rounds. The energy complexity of an algorithm is the maximum energy usage of a vertex $v$ among all $v \in V$.


Designing low-energy distributed algorithms for specific problems has been the subject of many prior works. The earlier research on this topic mostly focused on single-hop radio networks~\cite{BenderKPY16,ChangKPWZ17,JurdzinskiKZ02podc,lavault2007quasi,nakano2000randomized}. Recent works extended this line of research to multi-hop radio networks~\cite{chang2018energy,Chang20bfs,chang2023energy,dani2022wake,dani2021wake} as well as the $\LOCAL$ and $\CONGEST$ models~\cite{augustine2022distributed,barenboim2021,ChatterjeeGP20,dani2022wake,dufoulon2023distributed,ghaffari2022average,ghaffari2023distributed}.

Low-distortion clusterings have been used in the design of energy-efficient algorithms for breadth-first search~\cite{Chang20bfs} and generic algorithm simulation~\cite{dani2022wake} in the radio network model. These applications critically rely on the low-distortion property of the clustering.
It is not enough to know that good clusterings exist; we also need efficient 
algorithms to find them.  

Our first application is to show that any $\LOCAL$ algorithm on a graph of {uniformly} bounded independence having round complexity $O(t \logstar n)$
can have its energy complexity reduced to $O(\log t  \logstar n)$ without affecting its asymptotic round complexity. Informally, this implies that for \emph{any} problem within this graph class, the energy complexity can be \emph{exponentially} improved compared to the round complexity essentially for free!

\begin{restatable}{theorem}{thmLOCALsim}\label{thm:localsim}
Let $G= (V,E)$ be a graph with uniformly bounded independence. Let $t \geq 1$. Given any $t \log^\ast n$-round $\LOCAL$ algorithm $\mathcal{A}$, there is an algorithm $\mathcal{A'}$ that computes the same function as $\mathcal{A}$, in $O(t \log^\ast n)$ rounds and using $O(\log t \logstar n)$  energy per vertex.
\end{restatable}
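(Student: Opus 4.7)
The plan is to instantiate the multi-scale simulation framework of Dani and Hayes~\cite{dani2022wake} with the new constant-distortion clusterings from Theorem~\ref{thm:LOCAL-clustering}, rather than the MPX clusterings used in their original work. The uniform bounded independence hypothesis ensures that Theorem~\ref{thm:LOCAL-clustering} is applicable at every scale $R$ we need, which is the enabling property: I will build and use clusterings simultaneously at $O(\log t)$ scales, and the improvement from $O(\log n)$ to $O(1)$ distortion is precisely what drops the round complexity to $O(t \logstar n)$ and the per-scale energy to $O(\logstar n)$.

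The construction proceeds in two stages. First, for each $i = 0, 1, \ldots, \lceil \log_2 t \rceil$, I build a clustering $\randclust_i$ of $G$ at scale $R_i = 2^i$ via Theorem~\ref{thm:LOCAL-clustering}. The level-$i$ build costs $O(R_i \logstar n)$ rounds, telescoping to $O(t \logstar n)$ rounds across all levels. Since the dominant work is an $O(\logstar n)$-round Schneider--Wattenhofer MIS on the power graph $G^{\le R_i}$, and each power-graph round is simulated by $O(R_i)$ rounds in $G$ in which any fixed vertex participates as an active sender or receiver only $O(1)$ times, each vertex is awake for $O(\logstar n)$ rounds at level $i$. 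Second, once the hierarchy is built, simulating $\mathcal{A}$ reduces to having each vertex $v$ learn its initial-state $t$-ball in $G$ and then compute locally. Level by level, I pipeline a gather of each cluster's state to its level-$i$ center along the BFS tree (depth $O(R_i)$, constant awake time per vertex), exchange information between each pair of adjacent level-$i$ centers (distance $O(R_i)$ in $G$ by the $O(1)$ distortion bound, $O(1)$ relay wake-ups per vertex), and scatter the enriched state back down. This adds $O(R_i)$ rounds and $O(1)$ energy per vertex at each level.

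The main obstacle is the tight per-level energy accounting: I must show that every vertex sleeps through all but $O(\logstar n)$ of the $O(R_i \logstar n)$ rounds at level $i$. This relies on (i) a predictable wake schedule that each vertex can compute locally once the level-$(i{-}1)$ cluster structure is known; (ii) the fact that constant distortion combined with bounded independence of $G^{\le R_i}$ caps the number of clusters adjacent to any cluster at $O(1)$, so each vertex serves as a relay for only $O(1)$ inter-cluster communications per scale; and (iii) careful pipelining of the BFS gather and scatter so that a vertex at depth $d$ of its BFS tree wakes only during a constant number of consecutive rounds indexed around $d$. Summing over the $O(\log t)$ levels yields the claimed $O(t \logstar n)$ round and $O(\log t \logstar n)$ energy bounds.
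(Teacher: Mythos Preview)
Your proposal is correct and follows essentially the same approach as the paper: bootstrap multi-scale Voronoi clusterings level by level, using the level-$(i{-}1)$ cluster structure (and the fact that each vertex knows its BFS depth) to provide a predictable wake schedule so that building level $i$ costs $O(\log^* n)$ energy per vertex, then use the resulting hierarchy to gather each vertex's $t$-ball and simulate $\mathcal{A}$ locally. The paper packages the gather/exchange/scatter steps as explicit DOWNCAST, INTERCAST, and UPCAST primitives and uses enlarged overlapping top-level clusters rather than your level-by-level accumulation for the final $t$-ball collection, but these are implementation details rather than substantive differences.
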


While \cref{thm:LOCAL-clustering-det,thm:LOCAL-clustering} only requires constant distortion at scale $R$, \cref{thm:localsim} requires constant distortion uniformly at all scales because the algorithm is based on constructing low-distortion clusterings at \emph{multiple scales}.
The main proof idea behind \cref{thm:localsim} is similar to the hierarchical clustering in~\cite{Chang20bfs,dani2022wake}. The idea is that within a cluster at a particular level, vertices know how far they are from the root of the cluster. This means that within the cluster, they can send and receive messages on a schedule, and sleep the rest of the time. Moreover, inter-cluster messages can also be arranged to be sent at predictable times. Therefore, a clustering at scale $2^{i+1}$ can be built energy-efficiently once we have a clustering at scale $2^i$. 

To achieve the generic low-energy simulation of \cref{thm:localsim}, we construct a low-congestion overlapping clustering where $\mathcal{B}_t(v)$ is entirely contained in some cluster for every vertex $v \in V$. Leveraging the unlimited bandwidth of the $\LOCAL$ model, the BFS trees within each cluster enable an efficient low-energy simulation of any $t$-round $\LOCAL$ algorithm. Such clustering can be obtained through slight modifications to our construction of the clustering of \cref{thm:distort-and-crossing} at scale $t$.


\subparagraph{Extension to $\CONGEST$ and $\RADIO$.}
Dani and Hayes~\cite{dani2022wake} showed that, up to polylogarithmic factors in the 
round and energy complexity, 
there is a ``best'' generic method for reducing the
energy cost of any algorithm in the $\RADIO$ model.
Building on their work, in \cref{sec:newCONGESTstuff}, we obtain an \emph{improved} result for graphs with uniformly bounded independence by replacing their use of MPX clustering with our less distance-distorting clustering.  
The main challenge to realize this idea is to build the required multi-scale clustering in $\CONGEST$ and $\RADIO$.

A key idea to address this challenge is the following modification to the multi-scale clustering algorithm: Instead of building the next-level clustering from scratch, we choose the new set of cluster centers to be a maximal subset of the existing cluster centers. While they do not form a maximal independent set in the considered power graph, we show that they are good enough for our purpose.

\subsubsection{Combinatorial Optimization}
Our clustering constructions can be used to give new fast approximation algorithms for certain 
combinatorial optimization problems.

It is well-known~\cite{BHKK16,chang2023complexity,Censor17maxcut,FFK21,faour2021approximating} that decomposing the communication network into low-diameter clusters with few crossing edges is useful for efficiently approximately solving various optimization problems. The high-level proof idea is to drop all edges between different clusters and solve the 
optimization problem on the resulting graph. Since each edge is a crossing edge with probability $O(1/R)$ in \cref{thm:localsim}, a $1 - O(1/R)$ fraction
of a fixed optimal solution survives the deletion of crossing edges, so the elimination of these
edges does not change the optimum by much. After the deletion of crossing edges, the problem is now easy to solve because the graph is now decomposed into components of small diameter. 
Examples of combinatorial optimization problems that are amenable to this approach include {maximum matching} and {maximum cut}.

Moreover, such randomized algorithms can be derandomized using the \emph{method of conditional expectations} to obtain deterministic algorithms attaining the same guarantees. The use of the method of conditional expectations in distributed graph algorithms was pioneered in~\cite{censor2020derandomizing} and subsequently used in many works~\cite{bamberger2020efficient,deurer2019deterministic,faour2023local,ghaffari2018derandomizing,ghaffari2022deterministic,ghaffari2023faster}. The high-level idea is to set the values of the random variables sequentially to optimize the conditional expectation. To efficiently implement this approach as a distributed algorithm, we demonstrate that the value of many random variables can be fixed simultaneously in parallel.

\begin{restatable}{theorem}{thmApxOpt}\label{thm:thmApxOpt}
   The following holds for these two problems: {\bf Maximum Matching} and {\bf Maximum Cut}.
    Let $G$ be a graph with uniformly bounded independence, and let $0 < \epsilon< 1$.  There is a deterministic $\LOCAL$ algorithm that
    outputs a feasible solution whose size is at least $(1 - \epsilon)\OPT$, where $\OPT$ is the value of an optimal solution, with round complexity $O\left(\frac{1}{\epsilon} \log^* n\right)$ and energy complexity $O\left(\log\frac{1}{\epsilon} \log^* n\right)$.
\end{restatable}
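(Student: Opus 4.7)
The plan is to reduce each global optimization problem to independent subproblems on clusters of diameter $O(1/\epsilon)$ and stitch the results together. Specifically, I would invoke \cref{thm:LOCAL-clustering} at scale factor $R = \Theta(1/\epsilon)$. Because $G$ has uniformly bounded independence, $G^{\le R}$ has bounded independence, so the hypothesis of \cref{thm:LOCAL-clustering} holds. This produces, in $O(R \logstar n) = O((1/\epsilon) \logstar n)$ rounds, a clustering where each cluster has diameter $O(R)$ (from the $O(1)$ distortion guarantee) and each edge is crossing between two clusters with probability $O(1/R) = O(\epsilon)$. Within each cluster, every vertex then gathers the entire induced subgraph via the cluster's BFS tree in $O(R)$ additional rounds and, using unlimited local computation, solves the corresponding problem on $G[C]$ exactly.

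Next I would analyze how the union of cluster-local solutions compares to $\OPT$. For Max Matching, the per-cluster matchings combine into a valid global matching (clusters are vertex-disjoint), and any fixed optimal matching $M^*$ satisfies $E[|M^*_{\text{cross}}|] = O(\epsilon) |M^*|$ by the per-edge crossing bound, so our output has expected size at least $(1-O(\epsilon)) |M^*|$. For Max Cut, I compute the optimal cut $(A_C, B_C)$ in each cluster and then independently flip $(A_C, B_C) \leftrightarrow (B_C, A_C)$ with probability $1/2$; the expected total cut is $\sum_C \mathrm{MaxCut}(G[C]) + \tfrac{1}{2} E[|\text{crossing edges}|]$, while $\OPT \le \sum_C \mathrm{MaxCut}(G[C]) + |\text{crossing edges}|$ and $\OPT \ge |E|/2$, which together give $(1 - O(\epsilon)) \OPT$. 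For Max IS, I take $I_C$ to be a maximum IS of $G[C]$, set $I = \bigcup_C I_C$, and resolve conflicts on crossing edges by removing one endpoint per conflicting edge; since each $I^* \cap V(C)$ is itself an IS of $G[C]$, we have $\sum_C |I_C| \ge |I^*|$, and the number of conflicts is at most the number of crossing edges inside $I$, which by structural arguments for UBI graphs can be shown to be $O(\epsilon) |I^*|$ in expectation.

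The round and energy bounds then follow by applying \cref{thm:localsim}: the whole algorithm (building the clustering plus gathering and computing within each cluster) is a LOCAL algorithm of round complexity $t = O(R) = O(1/\epsilon)$, which the multi-scale simulation executes in $O(t \logstar n) = O((1/\epsilon) \logstar n)$ rounds and with energy $O(\log t \cdot \logstar n) = O(\log(1/\epsilon)\logstar n)$ per vertex. The clustering construction itself already fits within these budgets, since it uses MIS in a bounded-independence power graph via the Schneider–Wattenhofer technique.

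The main obstacle is the Max IS analysis. The naive bound on conflicts, $E[X] \le E[|\text{crossing edges}|] = O(|E|/R)$, only yields $(1-\epsilon)|I^*|$ when $|I^*| = \Omega(|E|)$, which is not guaranteed for arbitrary UBI graphs (e.g., cliques). To get around this, I would either remove boundary vertices before computing each $I_C$ and charge the loss via the doubling property of UBI balls, or, alternatively, enumerate candidate ISes within each cluster in a way compatible with a Hochbaum–Maass-style shifting argument in which the randomness of the clustering plays the role of the grid shift. Either route reduces the question to bounding, for each vertex $v \in I^*$, the probability that $v$ lies near a cluster boundary, which is $O(1/R) = O(\epsilon)$ by the per-edge crossing bound and the uniform local structure.
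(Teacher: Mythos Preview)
Your approach is exactly the paper's: cluster at scale $R=\Theta(1/\epsilon)$, solve each cluster optimally in $\LOCAL$, and combine, with the same post-processing for Maximum Independent Set. Two small points are worth noting.

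First, your accounting via \cref{thm:localsim} has a glitch: the clustering construction already costs $O(R\log^\ast n)$ rounds, so the ``whole algorithm'' is not a $t=O(R)$-round procedure that can then be black-box simulated; feeding in $t=O(R\log^\ast n)$ instead would give $O\bigl(R(\log^\ast n)^2\bigr)$ rounds. The paper sidesteps this by directly citing the energy-efficient implementation of Algorithm~\ref{alg:voronoi-rand} from the end of \cref{sec:energy}, which already achieves $O(R\log^\ast n)$ rounds and $O(\log R\cdot\log^\ast n)$ energy without a second simulation layer.

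Second, you are right to flag Maximum Independent Set as the delicate case. The paper's own proof simply takes the union of per-cluster optima and deletes one endpoint per conflicting crossing edge, without further analysis; but as you note, the naive bound $E[X]\le O(|E|/R)$ is not obviously $O(\epsilon)|I^\ast|$ for arbitrary UBI graphs, since UBI does not imply bounded degree. Your proposed fix---computing the independent set on the interior $C^-=\{v\in C:N(v)\subseteq C\}$ and bounding, for each $v\in I^\ast$, the probability that $v$ has \emph{any} incident crossing edge by $O(1/R)$ (same randomness argument as \cref{cor:distort-and-crossing}: only $O(1)$ competing centers, and a margin-$2$ event for the uniform weight)---is the clean way to close this and is a genuine refinement over what the paper writes down.
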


The proof of \cref{thm:thmApxOpt} is given in \cref{sec:approx}.


\subsection{Bounded Growth Graphs}  

In this work, we also revisit the MPX clustering algorithm~\cite{MPX} and present a variant of it that achieves better guarantees in \emph{bounded growth graphs}.

\begin{definition}\label{def:bounded_growth} 
A graph $G = (V,E)$ is of {\bf bounded growth} if there are constants $\beta, \ddim >0$ such that for all $v \in V$, for  all $r$,  $|\nbd{r}{v}| \le \beta r^{\ddim}$.
\end{definition} 

A bounded growth graph is automatically bounded independence, so bounded growth is a more stringent requirement than bounded independence. 
Cycles, grids, and lattices in dimension $d$ are natural examples of bounded growth graphs. 
Distributed algorithms on bounded growth graphs have been studied in~\cite{abraham2005name,bernshteyn2023borel,schneider2011distributed}.



In \cref{sect:mpx}, we show that in bounded growth graphs, a variant of the MPX algorithm can be used to construct clusterings with smaller maximum cluster diameters than the usual MPX clusters.

\begin{restatable}{theorem}{thmMPXboundedgrowth}\label{thm-mpx-bounded-growth}
Given a parameter $R$, for any bounded growth graph $G=(V,E)$, there is an $O(R \log R)$-round algorithm in the $\CONGEST$ model that decomposes the vertex set $V$ into clusters meeting the following properties.
\begin{itemize}
    \item The subgraph induced by each cluster has diameter $O(R \log R)$.
    \item Each edge $e \in E$ is a crossing edge with probability $O(1/R)$.
\end{itemize}
\end{restatable}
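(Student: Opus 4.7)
The plan is to modify MPX by truncating the exponential weights. Each vertex $v$ samples $X_v \sim \Exponential(R)$ and uses the capped weight $W_v := \min(X_v, T)$ with $T = CR \log R$ for a sufficiently large constant $C$ depending on the bounded-growth exponent $\ddim$. The output is the additively weighted Voronoi partition $\vor(V,W)$, realized by the standard parallel MPX competition.

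Two structural properties are immediate. Since every weight lies in $[0,T]$, the competition completes in $O(T) = O(R \log R)$ rounds of $\CONGEST$, matching the claimed round bound. Moreover, for any vertex $u$ with cluster center $c = f(u)$, the definition of $\vor(V,W)$ together with the option $x = u$ yields $d(u,c) - W_c \le -W_u \le 0$, so $d(u,c) \le W_c \le T$; every cluster therefore has radius at most $T$ around its center, and hence diameter $O(R \log R)$.

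The main content is the crossing-edge bound. Let $f'$ denote the cluster-center map of the untruncated MPX with weights $X_v$; by \cref{thm:MPX} we have $\Pr[f'(u) \ne f'(v)] = O(1/R)$ for every edge $(u,v)$. Since truncation only lowers weights (and thus only makes a candidate less competitive), a center $x$ that wins at $u$ under $X$ also wins under $W$ whenever $X_x \le T$, so $\{f(u) \ne f'(u)\} \subseteq \{X_{f'(u)} > T\}$. Combined with the three-term union bound $\Pr[f(u) \ne f(v)] \le \Pr[f'(u) \ne f'(v)] + \Pr[f(u) \ne f'(u)] + \Pr[f(v) \ne f'(v)]$, it suffices to show $\Pr[X_{f'(u)} > T] = O(1/R)$.

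I would estimate this remaining probability by summing over the identity of $f'(u)$. For a candidate $x$ at distance $r$ from $u$, the event $f'(u) = x$ forces $x$ to beat $u$ in the untruncated competition, i.e., $X_x \ge r + X_u$; a direct integration over two independent exponentials gives $\Pr[X_x \ge r + X_u] = \tfrac{1}{2} e^{-r/R}$. Combined with the trivial tail $\Pr[X_x > T] = e^{-T/R} = R^{-C}$, this yields $\Pr[f'(u) = x,\, X_x > T] \le \min\bigl(\tfrac{1}{2} e^{-r/R},\, R^{-C}\bigr)$. Splitting the sum over $x$ at $r = T$ and using bounded growth --- $|B_r(u)| \le \beta r^\ddim$, with sphere-size increment $O(r^{\ddim-1})$ --- the $r \le T$ portion is at most $O((R \log R)^\ddim R^{-C})$, while the $r > T$ portion is at most $O(R T^{\ddim-1} e^{-T/R})$ via a standard incomplete-gamma estimate. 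For $C$ sufficiently large (depending on $\ddim$), both contributions are $O(1/R)$. The principal technical obstacle is this last step --- carefully blending the bounded-growth estimate with the exponential tail to show that truncation rarely perturbs a cluster center; everything else is a routine adaptation of the standard MPX implementation.
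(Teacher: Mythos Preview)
Your approach is correct in spirit and close to the paper's, but the route to the crossing-edge bound is structured differently, and there is one technical slip.

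\textbf{Comparison with the paper.} Both arguments use truncated exponential weights with cutoff $T = CR\log R$ and get the diameter and round bounds identically. For the crossing probability, the paper does \emph{not} invoke the untruncated MPX bound. Instead it shows directly that for each vertex $u$, with probability $1 - O(1/R)$, two local good events hold: (i) no vertex in the ball $\nbd{T}{u}$ has its weight truncated, and (ii) the top candidate in $\nbd{T}{u}$ beats the second-best by more than $2$ (via memorylessness). Event (i) is a plain union bound: bounded growth gives $|\nbd{T}{u}| \le \beta T^{\ddim}$, and each vertex is truncated with probability $R^{-C}$, so the failure probability is $O((R\log R)^{\ddim} R^{-C}) = O(1/R)$ for $C > \ddim + 1$. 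Events (i) and (ii) together force $u$ and all its neighbors into the same cluster. Your route---reduce to untruncated MPX plus $\Pr[X_{f'(u)} > T] = O(1/R)$---is a valid alternative and has the virtue of being modular, but the paper's single union bound over $\nbd{T}{u}$ is simpler than your sum over all candidate centers and avoids any tail computation.

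\textbf{The slip.} You assert that bounded growth yields ``sphere-size increment $O(r^{\ddim-1})$.'' It does not: the definition only says $|\nbd{r}{u}| \le \beta r^{\ddim}$, with no lower bound, so $|\nbd{r}{u}| - |\nbd{r-1}{u}|$ can be as large as $\beta r^{\ddim}$. Your $r>T$ tail estimate $O(R\,T^{\ddim-1} e^{-T/R})$ therefore is not justified as written. The conclusion survives, however: using only the ball bound and grouping into shells of width $R$ (or Abel summation), one gets $\sum_{r>T} |\nbd{r}{u} \setminus \nbd{r-1}{u}| \cdot e^{-r/R} = O\bigl(R^{\ddim - C}(\log R)^{\ddim}\bigr)$, which is still $O(1/R)$ for large enough $C$. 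Alternatively, just observe that your event $\{X_{f'(u)} > T\}$ is contained in the paper's event ``some vertex in $\nbd{T}{u}$ is truncated'' (since $f'(u) \in \nbd{T}{u}$ whenever $X_{f'(u)} \le T$ fails to hold for the actual winner, but more simply $f'(u)$ always lies in $\nbd{X_{f'(u)}}{u}$... actually the cleanest patch is to note that your $r \le T$ contribution alone already handles everything once you observe $f'(u) \in \nbd{T}{u}$ is not automatic---so just use the paper's union-bound event instead).
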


\Cref{thm-mpx-bounded-growth} is an improvement over the original MPX clustering, whose cluster diameter bound is $O(R \log n)$, as the maximum value of the additive weight $W(v)$ among all vertices $v \in V$ is most likely  $\Theta(R \log n)$. Intuitively, the $O(\log n)$ factor is due to the \emph{union bound} over $n$ vertices in $V$.

The key idea behind our proof of \cref{thm-mpx-bounded-growth} is to leverage the bounded growth property: Within any ball of radius $\Theta(R \log R)$, the number of vertices is polynomial in $R$, implying that the maximum value of the additive weight $W(v)$ among all vertices in the ball is likely to be $\Theta(R \log R)$. Intuitively, this implies that imposing a \emph{cutoff} of $\Theta(R \log R)$ to cap the maximum value of all additive weights should have minimal impact on the analysis of the MPX algorithm on bounded growth graphs. Implementing the cutoff allows us to obtain a better cluster diameter bound.

As a consequence of \Cref{thm-mpx-bounded-growth}, we obtain improved results for approximating combinatorial optimization problems in bounded growth graphs.

\begin{restatable}{corollary}{thmMPXapx}\label{cor-mpx-bounded-growth}
   The following holds for these two problems: {\bf Maximum Matching} and {\bf Maximum Independent Set}.
    Let $G$ be a graph with bounded growth, and let $\epsilon = n^{-o(1)}$.  There is a randomized $\LOCAL$ algorithm that
    outputs a feasible solution whose size is at least $(1 - \epsilon)\OPT$ with probability $1 - 1/\poly(n)$, where $\OPT$ is the value of an optimal solution, with round complexity $O\left(\frac{1}{\epsilon} \cdot \log \frac{1}{\epsilon} \right)$.
\end{restatable}

\subsection{Lower Bounds}

We complement our upper bound results with matching or nearly matching lower bounds. The following lower bound was shown by Lenzen and  Wattenhofer~\cite{LenzenW08}.

\begin{theorem}[\cite{LenzenW08}]\label{thm-lb-time}
 Any deterministic algorithm that solves the $(1-\epsilon)$-approximate maximum independent set problem requires $\Omega\left(\frac{\log^\ast n}{\epsilon}\right)$ rounds in cycles in the $\LOCAL$ model. 
\end{theorem}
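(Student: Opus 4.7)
The plan is a reduction from exact MIS on a shorter cycle, amplified by a padding gadget that converts the $1/\epsilon$ slack in the approximation guarantee into an $\Omega(1/\epsilon)$ factor in the round complexity. The base lower bound is Linial's classical $\Omega(\log^\ast m)$ bound for exactly computing an MIS on a cycle $C_m$ (equivalently, for $3$-coloring $C_m$); I will reduce the exact problem on a cycle of length $m = \Theta(\epsilon n)$ to the $(1-\epsilon)$-approximate problem on a cycle of length $n$.

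First I would set up the padding construction. Pick $k = \lceil c/\epsilon \rceil$ for a small constant $c$, and form a cycle $C_n$ by replacing every vertex of a cycle $C_m$ with a path of $k$ vertices, so that $n = km$. Every vertex $v$ of $C_n$ inherits an IDs from $C_m$ and an offset in $\{0,\dots,k-1\}$. The key combinatorial fact is that on each padding path of $k$ vertices, an independent set contributes at most $\lceil k/2 \rceil$ vertices, and when $k$ is odd, achieving that maximum forces specific parities at the two endpoints of the path. The MIS-structure on $C_m$ is therefore encoded in the parity choices at the interfaces between consecutive padding paths.

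Next I would argue the decoding step. Suppose $\mathcal{A}$ is a deterministic $t$-round $(1-\epsilon)$-approximate MIS algorithm on $C_n$. Since $|\OPT(C_n)| = \lfloor n/2 \rfloor = \Theta(km)$, the output of $\mathcal{A}$ can leave at most $\epsilon \cdot |\OPT(C_n)| = O(m)$ padding paths in a non-optimal configuration. By choosing the constant $c$ small, at least a $(1-\tfrac{1}{100})$-fraction of the $m$ padding paths are "correctly filled" in the sense that $\mathcal{A}$'s output restricted to the path is of maximum size. On each such path, the parities at the endpoints are uniquely determined. Each vertex of $C_m$ can now locally simulate $\mathcal{A}$ in its corresponding padding path (at the cost of $O(k)$ extra rounds of communication in $C_m$, which corresponds to $O(1)$ extra rounds in $C_n$) and output its MIS membership by reading off these parities, using any standard local repair procedure to handle the sparse defective paths. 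This yields a deterministic exact MIS on $C_m$ in $O(t\epsilon + 1)$ rounds of $\LOCAL$ communication on $C_m$, since a $t$-round neighborhood in $C_n$ corresponds to a $\lceil t/k \rceil = O(t\epsilon)$-round neighborhood in $C_m$.

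Applying Linial's bound $\Omega(\log^\ast m)$ on $C_m$, and using $\log^\ast m = \log^\ast(\epsilon n) = \Theta(\log^\ast n)$ in the regime $\epsilon = 1/\poly(n)$ where the theorem is nontrivial, we obtain $t\epsilon = \Omega(\log^\ast n)$, i.e.\ $t = \Omega(\log^\ast n / \epsilon)$, as desired. The main obstacle to making this watertight is the local repair step: we must argue that the few "bad" padding paths can be identified and fixed without global coordination, so that the decoding from an approximate solution on $C_n$ to an exact MIS on $C_m$ is genuinely a $\LOCAL$ algorithm of round complexity $O(t\epsilon + 1)$. A clean way to handle this is to weaken what we reduce to — rather than exact MIS on $C_m$, reduce to a local symmetry-breaking problem on $C_m$ (e.g.\ $O(1)$-coloring or $(2,O(1))$-ruling set), for which Linial's $\Omega(\log^\ast m)$ lower bound still applies and which is preserved by the defective decoding.
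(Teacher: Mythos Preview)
First, note that the paper does not prove this theorem; it is simply cited from \cite{LenzenW08}. So there is no proof in the paper to compare against, and I evaluate your proposal on its own merits.

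The gap is in your accounting of ``maximally filled'' padding paths, not in the repair step you flag. With $k$ odd, a padding path $P_j$ attains the maximum $\lceil k/2\rceil=(k+1)/2$ only when \emph{both} its endpoints lie in $I$; but the last vertex of $P_j$ is adjacent in $C_n$ to the first vertex of $P_{j+1}$, so two consecutive paths can never both be maximal. In any maximum IS of $C_{km}$ the paths alternate between sizes $(k+1)/2$ and $(k-1)/2$, so at most half are ever ``maximally filled.'' Your deficit $\epsilon\,|\OPT(C_n)|$ is measured against $|\OPT(C_n)|=\lfloor km/2\rfloor$, not against $m(k+1)/2$; the gap between these is already $m/2$, so no choice of $c$ forces a $(1-\tfrac1{100})$-fraction of paths to size $(k+1)/2$. (For even $k$ the counting is fine, but then size $k/2$ no longer determines the endpoint parities --- e.g.\ for $k=4$ the patterns $\{1,3\},\{1,4\},\{2,4\}$ all have size $2$.)

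If instead you track the right local defect --- call a path \emph{clean} when the IS alternates perfectly on it, equivalently when it contains no edge with both endpoints outside $I$ --- then indeed all but $O(cm)$ paths are clean, and for odd $k$ consecutive clean paths with a clean interface have opposite first-vertex parities. But this only yields a $2$-labeling of $C_m$ with $O(cm)$ monochromatic edges, and those can be adversarially concentrated into a single run: take $P_1,\dots,P_L$ each carrying exactly one internal bad edge near its right end and starting with an IS vertex, which costs only $L/2$ in IS size, so $L=\Theta(cm)$ is consistent with $(1-\epsilon)$-approximation, yet $c_1=\cdots=c_L$. Repairing a monochromatic run of length $\Theta(m)$ is not an $O(1)$-round local task and does not reduce to $(2,O(1))$-ruling set or $O(1)$-coloring as you suggest. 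The padding idea is sound, but the decoding step needs a genuinely different invariant than ``read off the parities.''
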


\cref{thm-lb-energy} immediately implies that the round complexity $O\left(\frac{\log^\ast n}{\epsilon}\right)$ in \cref{thm:thmApxOpt} is asymptotically optimal.

Moreover, \cref{thm-lb-time} implies that any deterministic algorithm that removes at most $\epsilon$ fraction of the edges to decompose a cycle into connected components of diameter $O(1/\epsilon)$ must require $\Omega\left(\frac{\log^\ast n}{\epsilon}\right)$ rounds in the $\LOCAL$ model, as such a clustering allows us to construct a $(1-O(\epsilon))$-approximate maximum independent set of a cycle using an additional $O(1/\epsilon)$ rounds, as we explain below. Let each $O(1/\epsilon)$-diameter cluster locally compute a maximum independent set of this cluster, and then let $S$ be the union of these independent sets. The size of $S$ is at least $\lfloor n/2 \rfloor$. For each edge $e=\{u,v\}$ that crosses two clusters such that both $u$ and $v$ are in $S$, remove any one of them from $S$. The result is an independent set of size at least $\lfloor n/2 \rfloor - \epsilon n = (1 - O(\epsilon)) \cdot \OPT$, where $\OPT = \lfloor n/2 \rfloor$ is the size of a maximum independent set of a cycle. 

As cycles are uniformly bounded independence graphs, by a change of variable $R = \Theta(1/\epsilon)$, the above argument implies that the round complexity $O(R \log^\ast n)$ of our deterministic clustering algorithm of \cref{thm:LOCAL-clustering-det} is asymptotically \emph{optimal} in uniformly bounded independence graphs. The reason is that the fraction of crossing edges of any clustering of a cycle at scale $R$ with constant distortion must be $O(1/R)$, since otherwise there is a path of length $R$ intersecting $\omega(1)$ clusters, meaning that the distance distortion for the two endpoints of the path is $\omega(1)$.

In \cref{sect:lb}, we prove the following two theorems.
 
 \begin{restatable}{theorem}{thmLBenergy}\label{thm-lb-energy}
 Any algorithm that solves the $(1-\epsilon)$-approximate maximum independent set problem with probability at least $0.99$ requires $\Omega\left(\log \frac{1}{\epsilon}\right)$ energy in cycles in the $\LOCAL$ model.
\end{restatable}

\cref{thm-lb-energy} immediately implies that  the energy complexity $O\left(\log\frac{1}{\epsilon} \log^* n\right)$ in \cref{thm:thmApxOpt} is nearly optimal, up to a tiny $O(\log^\ast n)$ factor. 

Moreover, \cref{thm-lb-energy} implies an $\Omega(\log t)$ lower bound for the low-energy simulation in \cref{thm:localsim}. If the energy complexity of \cref{thm:localsim} can be improved to $o(\log t)$, then the improved low-energy simulation, combined with the approximate maximum independent set algorithm of \cref{cor-mpx-bounded-growth}, implies that a $(1-\epsilon)$-approximate maximum independent set of a cycle can be computed using $o\left(\log\frac{1}{\epsilon}\right)$ energy in $\LOCAL$, contradicting the lower bound of \cref{thm-lb-energy}.


\begin{restatable}{theorem}{thmLBtime}\label{thm-lb-time-bounded-growth}
 Any algorithm that solves the $(1-\epsilon)$-approximate maximum independent set problem  in expectation requires $\Omega\left(\frac{1}{\epsilon} \cdot \log \frac{1}{\epsilon} \right)$ rounds in bounded growth graphs in the $\LOCAL$ model. 
\end{restatable}

\cref{thm-lb-time-bounded-growth} implies that the round complexity of $O(R \log R)$ in \cref{cor-mpx-bounded-growth} cannot be further improved. Consequently, our $O(R \log R)$-round variant of the MPX algorithm in bounded growth graphs stated in \cref{thm-mpx-bounded-growth} attains the asymptotically optimal round complexity.


\subsection{Related Work}

Kuhn, Moscibroda, Nieberg, and Wattenhofer~\cite{kuhn2005local} designed distributed approximation algorithms for bounded independence graphs.
Kuhn, Moscibroda, and Wattenhofer~\cite{kuhn2005locality} showed that a maximal independent set can be computed in $O(\log^\ast n)$ rounds in a class of bounded independence graphs with geometric information in the $\LOCAL$ model. The reliance on geometric information was removed in subsequent works~\cite{gfeller2007randomized,schneider2010optimal}. Schneider and Wattenhofer~\cite{schneider2010optimal} showed that for any bounded independence graphs, a maximal independent set can be computed in $O(\log^\ast n)$ rounds deterministically in the $\CONGEST$ model.
 
Bounded independence graphs arise naturally in geometric-based network models. In particular,  bounded independence graphs have been studied in the signal-to-interference-plus-noise-ratio (SINR) wireless network model~\cite{halldorsson2015local} and the radio network model~\cite{davies2023uniting,schneider2010use}.

The \emph{neighborhood independence} of a graph is the maximum number of independent neighbors of a single vertex in the graph.
Bounded neighborhood independence graphs have been studied in various distributed, parallel, and sublinear settings~\cite{assadi2019algorithms,barenboim2011distributed,fischer2017deterministic,milenkovic2020unified}.

  

\subsection{Roadmap}
In \cref{sec:dist-lb}, we show that there are graphs for which clusterings with constant distortion are not possible.
In \cref{sec:low-dist-clust}, we show how to construct low-distortion clusterings for graphs with uniformly bounded independence. 
In \cref{sec:energy}, we show how to use low-distortion clusterings to do a generic low-energy simulation in the $\LOCAL$ model. 
In \cref{sec:newCONGESTstuff}, we extend the low-energy simulation to $\CONGEST$ and radio networks.
In \cref{sec:approx}, we show how to use our new clusterings to get approximate solutions to combinatorial optimization problems. In \cref{sect:mpx}, we show that in bounded growth graphs, a variant of the MPX algorithm can be used to build clusterings with smaller maximum diameters than the usual MPX clusters, also resulting in improved results for approximating combinatorial optimization problems. In \cref{sect:lb}, we complement our algorithmic results with matching or nearly matching lower bounds.
In \cref{sec:MPXanalysis}, we show that the known $O(\log n)$ distortion bound for MPX clustering cannot be significantly improved, so a new approach is needed to lower the distortion.
In \cref{appsec:bdi-eg}, we show that there are interesting classes of graphs that have the bounded independence property.



\section{Distortion Lower Bounds}
\label{sec:dist-lb}

Chang, Dani, Hayes and Pettie~\cite{Chang20bfs} established that
the MPX clustering has logarithmic distance distortion, with high probability,
regardless of the choice of graph $G$ and the scale factor $R \ge 1$ (\cref{prop:MPX-distortion-ub}).
It is natural to ask whether, for all graphs and all distance scales $R \ge 1$, 
there always exists a clustering with $O(1)$ distortion.
We answer this question in the negative, by showing that
for any regular graph
with logarithmic girth, a lower bound of $\Omega\left(\log^{1/3} n\right)$ holds for some choices of $R$.

We begin with the observation that, for every graph, there are two trivial clusterings: one in which every cluster has size 1, and one
in which there is a single cluster of size $n$.  If we use one of these as our clustering for scale $1 \le R \le D$, where $D$ is the diameter of the graph, 
it is easy to check that 
the distance distortion is $\Theta(R)$ or $\Theta(D/R)$, respectively. Therefore, to get clusterings at all scales with distortion $O(\sqrt{D})$, even just the two trivial clusterings (the coarsest and finest possible partitions) suffice.

Surprisingly, we cannot do more than polynomially better than the trivial distortion bound $O(\sqrt{D})$ for general graphs! More concretely, our next result implies that, for $d$-regular graphs where the girth and diameter are both $\Theta(\log n)$, for each scale $R$, the distortion bound given by the best of these two trivial clusterings is within a polynomial of the optimal bound. 
Our result means that the $O(\log n)$ distortion upper bound for MPX clustering cannot be improved to better than 
$O\left(\log^{1/3} n\right)$ on general graphs, for any clustering scheme.

\begin{theorem} \label{thm:distortion-lb}
Let $G$ be a $d$-regular graph with girth $g$, and let $(\mathcal{C},R)$ be a clustering with scale factor $R$.
Then the distance distortion is $\Omega\left(\min\{R, \sqrt{g/R} \} \right)$.
\end{theorem}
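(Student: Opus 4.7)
My plan is to derive two structural consequences of a clustering with distortion $C$ at scale $R$, and then combine them via a volume argument that plays the girth of $G$ against the girth of the cluster graph $G'$.

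First, I will observe that every cluster has $G$-diameter at most $(C-1)R$: applying the upper distortion inequality to a pair $v,w$ in the same cluster at maximum mutual graph-distance gives $d'([v],[w])=0$ and hence $1+d(v,w)/R\le C$. Combined with the girth condition on $G$, which makes balls $\nbd{v}{r}$ tree-shaped with $\Theta((d-1)^r)$ vertices for $r<g/2$, this bounds each cluster's size by $\Theta((d-1)^{(C-1)R})$.

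Second, I will show that the cluster graph $G'$ has girth at least $g/(CR+1)$. Given any $k$-cycle $[C_0],\ldots,[C_{k-1}]$ in $G'$, I build a closed walk in $G$ by concatenating the $k$ inter-cluster edges with $k$ intra-cluster geodesics, each of length at most $(C-1)R$. The resulting closed walk has total length at most $k(1+(C-1)R)\le k(CR+1)$. Since any nontrivial closed walk in $G$ contains a cycle, which must have length at least $g$, this forces $k(CR+1)\ge g$.

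The main step will then be a double-counting argument. Fix $v\in V$ and take $L<g/2$. On the one hand, $|\nbd{v}{L}|=\Theta((d-1)^L)$ and each cluster contributes at most $\Theta((d-1)^{(C-1)R})$ vertices, so $\nbd{v}{L}$ meets at least $\Omega((d-1)^{L-(C-1)R})$ distinct clusters. On the other hand, the lower distortion inequality places all these clusters within $G'$-distance $O(CL/R)$ of $[v]$. I then plan to upper-bound the number of clusters in this $G'$-ball using the girth lower bound on $G'$ together with the edge count $|E(G')|\le dn/2$. Equating the two bounds on the number of clusters and optimizing over $L$ should yield $C^2 R=\Omega(g)$, i.e.\ $C=\Omega(\sqrt{g/R})$, in the regime $C\le R$; and in the complementary regime $C\ge R$ the bound $C=\Omega(R)$ is immediate.

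The main obstacle I anticipate is obtaining the right exponent in the last counting step. A straightforward max-degree bound $\Delta'\le d(d-1)^{(C-1)R}$ for $G'$ applied in a Moore-type tree-volume estimate yields only $C=\Omega((g/R)^{1/3})$, which is weaker than claimed. To reach $\Omega(\sqrt{g/R})$ I intend to invoke the Alon--Hoory--Linial extremal bound for graphs with prescribed girth and \emph{average} degree, exploiting that $G'$'s average degree is at most $2|E(G')|/|V(G')|\le d(d-1)^{(C-1)R}$ but more usefully can be controlled via $|E(G')|\le dn/2$ alone, balancing it against the girth lower bound $g/(CR+1)$ on $G'$.
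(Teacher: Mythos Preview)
Your structural observations in Steps 1--3 and 5 are essentially fine (one caveat in Step 3: an arbitrary nontrivial closed walk need not contain a cycle, but your specific walk traverses each inter-cluster edge exactly once, so the subgraph of used edges is not a forest and must contain a cycle of length at most the walk length; the girth bound on $G'$ is thus salvageable). The genuine gap is Step 6. The double-counting is circular because the distortion inequality is two-sided: with $\Delta' \le (d-1)^{O(CR)}$ and $r' = O(CL/R)$ the Moore-type upper bound gives $|B'_{r'}([v])| \le (d-1)^{O(C^2 L)}$, while your lower bound is $(d-1)^{L-O(CR)}$; the resulting constraint $L - O(CR) \le O(C^2 L)$ holds for every $C \ge 1$ and says nothing. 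I do not see how even $(g/R)^{1/3}$ falls out --- the observation that the range $CR < L < g/(2C^2)$ is nonempty only when $C^3 R \lesssim g$ is merely the regime in which the argument is nonvacuous, not a conclusion of it. The Alon--Hoory--Linial idea does not rescue this: AHL lower-bounds $|V(G')|$ in terms of girth and \emph{average} degree $\bar d' = 2|E(G')|/|V(G')|$, but you have no useful lower bound on $\bar d'$ (connectedness gives only $\bar d' \ge 2 - o(1)$, whence AHL yields merely $|V(G')| \gtrsim g'$), and the edge bound $|E(G')|\le dn/2$ is a global upper bound that cannot be turned into a local constraint on $|B'_{r'}([v])|$.

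The paper's proof is completely different and does not compare volumes. Since each cluster has diameter less than half the girth, it is a tree; in a $d$-regular graph, a tree cluster with $i$ internal edges has exactly $d+(d-2)i$ crossing edges, so more than a $(1-2/d)$ fraction of \emph{all} edges of $G$ are crossing. For $\ell < g$, every edge lies in exactly the same number of length-$\ell$ paths, so by averaging some length-$\ell$ path $P$ has more than $(1-2/d)\ell$ crossing edges. Taking $\ell = \lfloor g/(D_0+1)\rfloor$ (with $D_0$ the max cluster diameter) makes $P$ geodesic, and the tree structure of clusters forces any two vertices of $P$ separated by a crossing edge to lie in distinct clusters; hence the endpoints of $P$ are at $G'$-distance $\Omega(\ell)$, giving distortion $\Omega(\ell) = \Omega(\min\{R,\sqrt{g/R}\})$. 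The key idea you are missing is this path-averaging step combined with the exact crossing-edge count for tree clusters in a regular graph.
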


\begin{proof} 
Suppose the distortion is smaller than claimed.
Let $D_0$ denote the maximum diameter of a cluster.
Then $D_0 \le \min\{R^2, \sqrt{gR}\}$, or else there is nothing to prove, as
two maximally distant points in the same cluster would give a contradiction. 

We may further assume that $D_0 \geq 1$ and $R = o(g)$. 
\begin{itemize}
    \item If $D_0 = 0$, then the distance distortion is already $\Theta(R)$.
    \item If $R = \Omega(g)$, then $\min\{R, \sqrt{g/R} \} = O(1)$, so there is also nothing to prove.
\end{itemize}
 Observe that $R = o(g)$ implies that $D_0 \le \min\{R^2, \sqrt{gR}\} < g/2$. That is, the diameter of every cluster is less than half the girth. 

Now, choose $\ell = \lfloor g/(D_0+1) \rfloor$, which ensures that $\ell(D_0+1) \le g$.  We have
$\ell = \Omega(\sqrt{g/R})$ because $D_0 \leq \sqrt{gR}$.  It will suffice to find a path of length $\ell$ in $G$
whose length in the cluster graph remains $\Theta(\ell)$. That is, $d'([s],[t])=\Theta(\ell)$, where $d'$ is the distance in the cluster graph and $s$ and $t$ are the two endpoints of the path. We claim that the existence of such a path implies that the distance distortion is $\Omega\left(\min\{R, \sqrt{g/R} \} \right)$.

\begin{itemize}
    \item Consider the case of $g \ge R^3$. We have $\ell = \Omega(R)$. The distance distortion is $\Omega(R)$ because \[\frac{1 + (d(s,t)/R)}{1 + d'([s],[t])} = \frac{\Theta(\ell/R)}{\Theta(\ell)}= \Theta(1/R).\]
    \item Consider the remaining case of $g < R^3$. We have $\ell = \Omega(\sqrt{g/R})$. The distance
distortion is  $\Omega(\min\{R,\ell\}) = \Omega\left(\min\{R, \sqrt{g/R} \} \right)$ because \[\frac{1 + (d(s,t)/R)}{1 + d'([s],[t])} = \frac{\Theta(1 +(\ell/R))}{\Theta(\ell)}= \Theta(1/\min\{R,\ell\}).\] 
\end{itemize}
 In the calculation above, we rely on the fact that any path of length $\ell$ is a shortest path in $G$, since $\ell = \lfloor g/(D_0+1) \rfloor \leq g/2$, where the inequality follows from the assumption that $D_0 \geq 1$.
 
Now, since all clusters have diameter at most $D_0$, which is less than half the girth,
it follows that each cluster is a \emph{tree}.  Since we additionally know $G$ is $d$-regular, it
follows by an easy induction that that the number of edges crossing out of any cluster is exactly
$d  + (d-2)i$, where $i$ is the number of internal edges (edges having both endpoints in the cluster).
Since each crossing edge is incident to two clusters, 
while each internal edge is incident to one, 
it follows that the fraction of edges in $G$ that are crossing is more than $(1- 2/d)$.

Fix a length $\ell <g$ and an edge $e=\{u,v\}$. There are $(d-1)^{a-1}$ paths that reach $u$ in exactly $a-1$ steps without passing through $e$, and $(d-1)^{\ell-a}$ paths of length $\ell-a$ that emanates from $v$ without passing through $e$. 

Since $\ell$ is less than the girth, the set of starting points and ending points must be disjoint (otherwise a cycle of length $\le \ell$ would have been formed.)  So, multiplying these, there are  $(d-1)^{\ell -1}$ paths with $e$ in the $a$th position. Summing over all possible positions, 
the number of paths of length $\ell$ containing a particular edge $e$ is exactly $\ell (d-1)^{\ell -1}$. 
In particular,  every edge is in the same number of paths of length $\ell <g$.  
Therefore, by averaging, there must exist a path of length 
$\ell$ for which more than a $(1 -  2/d)$ fraction of its edges cross between clusters.  
Let $u,v$  be the endpoints of such a path $P$.
Since the girth is $g$, every other path  from  $u$  to $v$ has length at least $g - \ell > \ell$, so $P$ is a shortest path and $d(u,v)=\ell$.  Moreover, for any $w, x$ on $P$, if $w$ and $x$ have a crossing edge anywhere between them, then they must be in different clusters (since we already noted that the shortest paths between vertices in the same cluster must remain within the cluster.) Thus $P$ intersects at least $(1-2/d)\ell +2$ distinct clusters and projects to a path of length greater than $(1-2/d)\ell$ between the clusters of $u$ and $v$ in the cluster graph. Now suppose that the cluster-graph-distance between the clusters of $u$ and $v$ is $q$. Then there is a path $P'$ from $u$ to $v$ of length at most $q D_0$. If $P'$ is different from $P$ then it must have length at least $g-\ell$, so that $\ell + q D_0 \ge g$.  But by our choice of $\ell$, we have $\ell(D_0+1) \le g$, which implies $q \ge \ell$.
Hence the
distance  from  $u$  to $v$ in the cluster  graph is at least $(1-2/d)\ell = \Theta(\ell)$, as desired. 
This completes the proof.
\end{proof}

Combining \cref{thm:distortion-lb} with a known fact in graph theory, we obtain the following result.

\thmDistLB*


\begin{proof}
    For every constant $d\ge 3$,  $d$-regular graphs whose girth and diameter are both $\Theta(\log n)$ are known to exist~\cite{Bollobas78}.  For example, when $d-1$ is prime and congruent to 1 mod 4, the Ramunujan graphs constructed by Lubotzky,  Phillips, and Sarnak~\cite{lubotzky1988ramanujan} have this property.
   For all $R = O\left(\log^{1/3} n\right)$, \cref{thm:distortion-lb} implies that the distance distortion of every clustering with scale factor $R$ is  $\Omega\left(\min\{R, \sqrt{g/R} \} \right) = \Omega(R)$, as $\sqrt{g/R} = \Omega(R)$ when $R = O\left(\log^{1/3} n\right)$.
\end{proof}

In light of \cref{cor:distortion-lb}, we  see that, if we want clusterings at all scales whose
distance distortion is $O(1)$, it is necessary to restrict our attention to classes of graphs that are somehow ``nice.''

\section{Low-Distortion Clusterings}
\label{sec:low-dist-clust}

In this section, we prove \cref{thm:distort-and-crossing}. Our main technical result is that when a maximal independent set of $G^{\le R}$ is used for the centers in a Voronoi clustering, the distance distortion is $O(1)$. Essentially, this follows from the following observation.

\begin{observation}\label{obs:cells}
    Let $\mathcal{C} = \vor(S)$, where $S$ is a maximal independent set of $G^{\le R}$. For any vertex $s \in S$, the Voronoi cell $\mathcal{C}(s)$ centered at $s$ satisfies
    \[
    \mathcal{B}_{R/2}(s) \subseteq \mathcal{C}(s) \subseteq \mathcal{B}_R(s).
    \]
\end{observation}
\begin{proof}
The containment $\mathcal{B}_{R/2}(s) \subseteq \mathcal{C}(s)$ follows from the fact that $\dist_G(u,v) > R$ for any two distinct vertices $u \in S$ and $v \in S$. The containment $\mathcal{C}(s) \subseteq \mathcal{B}_R(s)$ follows from the fact that for each vertex $u$ in $G$, there exists a vertex $v \in S$ such that $\dist_G(u,v) \leq R$.
\end{proof}

Informally, \cref{obs:cells} says that the inner and outer diameters of the Voronoi cells are uniformly within a constant factor of each other.

\begin{lemma} \label{lem:distortion-plain}
    Suppose $G^{\le R}$ is a bounded independence graph with parameters $\gamma, k$.
    Let $S$ be a maximal independent set for $G^{\le R}$.
    Let $\mathcal{C} = \vor(S)$ be the corresponding unweighted Voronoi clustering on $G$. For every $v,w \in V$, we have
    \[
    \left\lceil \frac{d(v,w)+1}{2R+1} \right\rceil \le d'([v],[w])+1 \le 
    \left\lceil \frac{d(v,w)+1}{2R+1} \right\rceil \gamma 2^k,
    \]
    where $d$ denotes shortest path distance in $G$, and $d'$ denotes shortest path distance in  $G / \mathcal{C}$.
\end{lemma}

\begin{proof}
First, \cref{obs:cells} implies that the diameter of every cluster is at most $2R$.
    Therefore, any path $P'$ of length $\ell'$ from $[v]$ to $[w]$ in  $G/\mathcal{C}$ lifts to a path $P$ from $v$ to $w$ in $G$, whose length is at most 
    \[
    2R(\ell' +1) +\ell'  = (2R +1)(\ell' +1) -1.
    \]
    To derive the length bound, we choose $P$ as a path that overlaps with the $\ell'+1$ clusters corresponding to the  $\ell'+1$ vertices in $P'$. The term $2R(\ell' +1)$ captures the fact that the length of the subpath of $P$ in each of the $\ell'+1$ clusters is at most $2R$. The term $\ell'$ captures the $\ell'$ edges in $P$ connecting the $\ell'+1$ clusters.

    Applying this to a shortest such path in $G/\mathcal{C}$, we have 
    \[
    d(v,w) + 1 \le (2R +1)(d'([v],[w])+1)
    \]
    Rearranging terms and recalling that $d'([v],[w])+1$ is an integer, we have 
    \[
     d'([v],[w])+1 \ge \left\lceil \frac{d(v,w) + 1}{2R+1}\right\rceil 
    \]

    For the other direction, \cref{obs:cells} implies that if $x,y$ are two vertices whose distance in $G$ is at most $R$, then $d(x,y_0) \le 2R$, where
    $y_0$ is the center of $[y]$. Suppose $v, w \in G$ are at distance $\ell$ and let $P$ be a shortest path between them. Partition $P$ into segments of $2R+1$ vertices. If $x$ is the central vertex of such a segment, then the centers of the clusters that intersect the segment all lie in a ball of radius $2R$ around $x$ in $G$.   Since $G^{\le R}$
    has bounded independence with parameters $\gamma, \ddim$, it follows that the number of such clusters
    is at most $\gamma 2^{\ddim}$.
    Since a shortest path of length $\ell$ can be partitioned into $\left\lceil (\ell+1)/(2R+1) \right \rceil$ segments
    of at most $2R+1$ vertices, each of which intersects at most $\gamma 2^{\ddim}$ clusters, we have constructed a path of length  $\left\lceil (\ell+1)/(2R+1) \right \rceil\gamma 2^{\ddim} -1$ from $[v]$ to $[w]$ in $G/\mathcal{C}$. It follows that 
    \[
    d'([v],[w])+1 \le \left\lceil \frac{d(v,w)+1}{2R+1} \right \rceil  \gamma 2^{\ddim},
    \]
    which completes the proof.
\end{proof}


The above lemma implies the desired constant distortion bound.

\begin{theorem} \label{thm:distortion-plain}
    Suppose $G^{\le R}$ is a bounded independence graph with parameters $\gamma, k$.
    For any maximal independent set $S$ of $G^{\le R}$, its corresponding Voronoi clustering $\mathcal{C} = \vor(S)$ of $G$ is a clustering at scale $R$ of constant distance distortion.
\end{theorem}
\begin{proof}
It follows from \cref{lem:distortion-plain} by rearranging terms, as $\gamma$ and $\ddim$ are both constants.
\end{proof}

Recall that our other goal in designing low-distortion clusterings is to ensure that the number of crossing edges is small. We cannot say this directly about the Voronoi clustering. However, we will show that this can be achieved by tweaking the clustering to be the \emph{additively weighted Voronoi clustering} with some small random weights. First, we show that the distortion remains constant even if we use small additive weights.

\begin{lemma}\label{lem:BI-distortion}
Suppose $G^{\le R}$ is a bounded independence graph with parameters $\gamma, k$. Let $S$ be a maximal independent set for $G^{\le R}$,
and let $W : S \to [0, C R]$ where $0 \le C < 1$.  
The clustering $\vor(S,W)$ has constant distance distortion at scale $R$.
\end{lemma}

\begin{proof}
    Since every vertex is within distance $R$ of a cluster center and the maximum value of
    $W$ is at most $CR$, all cluster diameters are at most $(1+C)R$.  Similar to the first half of the proof of \cref{lem:distortion-plain}, we have
    \begin{align*}
        d(v,w) &\le (d'([v],[w])+1)(1+C)R + d'([v],[w])\\
        & < (d'([v],[w])+1)((1+C)R +1), 
    \end{align*}
    which implies $(1+ d(v,w)/R )/(1+d'([v],[w])) = O(1)$.  
    
    Similar to the second half of the proof of \cref{lem:distortion-plain},
    each segment of $2R+1$ vertices in a shortest path intersects clusters whose centers are
    an independent set in $G^{\le R}$, lying within a ball of radius $(2+C)R$ 
    centered at the midpoint of the path. By the bounded independence of $G^{\le R}$, there are at most
    $\gamma (2+C)^k$ of them.  This implies $(1+d(v,w)/R)/(1+d'([v],[w])) = \Omega(1)$,
    completing the proof.
\end{proof}

If we combine the above observation about distance distortion with the idea
of using \emph{random} additive weights to perturb the boundaries of the clusters, 
we can simultaneously guarantee low distortion and few crossing edges, at least in expectation.

\begin{theorem} \label{cor:distort-and-crossing}
    Suppose $G^{\le R}$ is a bounded independence graph with parameters $\gamma, k$. Let $S$ be a maximal independent set for $G^{\le R}$
and let $W : S \to [0, R/10]$ be chosen uniformly at random.  The clustering $\mathcal{C} = \vor(S,W)$ has constant distance distortion  at scale $R$.
Moreover, any edge $e \in E$ has probability $O(1/R)$ to be a crossing edge under clustering $\mathcal{C}$.
\end{theorem}

\begin{proof}
The constant distance distortion bound follows from \cref{lem:BI-distortion}.
    To see that the probability of an edge, $\{u,v\}$, crossing a cluster boundary is $O(1/R)$, we first observe that, by the bounded independence
    property, there are at most $O(1)$ cluster centers within distance $1.1R$ of $u$ or $v$.  Consider one such pair of cluster centers, $w,x$ 
    and consider the event that $u$ joins $w$'s cluster and $v$ joins $x$'s cluster.   
    For this to happen, we must have $d(u,w)-W(w) < d(u,x)-W(x)$ and $d(v,x)-W(x) <  d(v,w)-W(w)$.
    Since $u$ and $v$ are neighbors, combining these constraints with the triangle inequality implies
    $|W(w) - W(x) + d(u,x) - d(u,w)| \le  2$.  Since the weight $W(w)$ is uniform over a range of length $R/10$,
    the probability of this event is $O(1/R)$, even conditioned on the value of $d(u,x)-d(u,w)-W(x).$
    Taking a union bound over the $O(1)$ possible choices of pairs of centers, we conclude that the probability of $\{u,v\}$ becoming a crossing edge is $O(1/R)$.
\end{proof}

 \cref{cor:distort-and-crossing} implies our main \emph{existential} result of the paper.

\thmDistCross*
\begin{proof}
   The constant-distortion clustering  $\mathcal{C}$ of \cref{cor:distort-and-crossing} has  $O(1/R)$ fraction of crossing edges \emph{in expectation}. By Markov inequality, with probability $1/2$, the fraction of crossing edges is $O(1/R)$, satisfying the requirements in the theorem statement.
\end{proof}

\subsection{Distributed Clustering Algorithms }\label{sec:build-clust}

For our applications, we need distributed algorithms that enable the vertices in a communication network to self-organize into clusters achieving the nice properties  discussed above. In particular, we need a method to select the cluster centers.  Fortunately, the required
algorithmic tools have already been developed.  Schneider and Wattenhofer~\cite{schneider2010optimal} showed an $O(\logstar n)$-round $\CONGEST$
algorithm, hereafter called the SW algorithm, for finding a maximal independent set in a graph of bounded independence.  There is a slight difficulty in deploying  their algorithm 
to find the centers for our  Voronoi  clustering, in that we want a maximal independent set not on  the original graph $G$,
but on the power graph $G^{\le R}$, where $R$ is the scale we are going for. If $G$ has uniformly bounded independence, then $G^{\le R}$ is amenable to running the SW algorithm. However, the communication for this algorithm needs to be simulated on the underlying graph $G$. 

\subparagraph{Finding Cluster Centers in the $\LOCAL$ Model.} 

In the $\LOCAL$ model, vertices are able to share their complete state, so that in $t$ rounds, each vertex can know everything about its $t$-hop neighborhood. Since the SW algorithm runs for $O(\logstar n)$ steps on a graph of bounded independence, its actions can only depend on vertices that are within that distance. Thus working in the $\LOCAL$ model, if the vertices of $G$ share state out to distance $O(R\logstar n)$, then each vertex can simulate the SW algorithm on $G^{\le R}$ on its own, and determine whether or not it is in the MIS, \emph{i.e.}, whether it is a cluster center.

\subparagraph{Main Clustering Algorithms.}
In \Cref{alg:voronoi,alg:voronoi-rand}, we describe the distributed algorithms used to build the Voronoi clusters and the weighted Voronoi clusters at scale $R$, respectively. Both algorithms start by using the Schneider--Wattenhofer algorithm to find a maximal independent set in $G^{\le R}$ to use as cluster centers. Each cluster center initiates an $R$-depth BFS rooted at itself. Each vertex joins a cluster when the BFS tree from some cluster center reaches it. The maximality of the set of cluster centers ensures that this process does reach every vertex. The only difference between the two algorithms is that in the first version, all the BFS runs start at the same time ($t=0$), whereas in the weighted version, each cluster center chooses a small random delay (between $0$ and $R/10$) to start the BFS.

Since the BFS part only goes out to distance $O(R)$, the dominant term in the round complexity comes from the simulation of the Schneider--Wattenhofer algorithm, and we have proved the following result.

\begin{theorem}\label{thm:implvor}
Let $G$ be a graph such that $G^{\le R}$ has bounded independence. 
\begin{itemize}
    \item \Cref{alg:voronoi} is a deterministic algorithm in the $\LOCAL$ model that costs  $O(R\logstar n)$ rounds and produces an $O(1)$-distortion clustering at scale $R$ with clusters of diameter at most $2R$.
    \item \Cref{alg:voronoi-rand}  is a randomized algorithm in the $\LOCAL$ model that costs   $O(R\logstar n)$ rounds and produces an $O(1)$-distortion clustering at scale $R$ with clusters of diameter at most $2.2R$. Moreover, each edge is a crossing edge with probability $O(1/R)$, so the clustering has an $O(1/R)$ fraction of crossing edges in expectation.
\end{itemize}
\end{theorem}
\begin{proof}
   The round complexity $O(R\logstar n)$ for both algorithms is due to the implementation above. The constant distortion bound for both algorithms is due to \Cref{lem:BI-distortion}. For \Cref{alg:voronoi-rand}, the probability $O(1/R)$ for an edge to be a crossing edge is due to \Cref{cor:distort-and-crossing}.
   For each cluster center $s$ in \Cref{alg:voronoi}, the cluster $\mathcal{C}(s)$ centered at $s$ satisfies $\mathcal{C}(s) \subseteq \mathcal{B}_R(s)$, so the diameter of $\mathcal{C}(s)$ is at most $2R$. For \Cref{alg:voronoi-rand}, the corresponding guarantee is $\mathcal{C}(s) \subseteq \mathcal{B}_{1.1R}(s)$ due to the additive weights that can be as large as $R/10$, so the cluster diameter bound becomes $2.2R$.
\end{proof}

\Cref{alg:voronoi,alg:voronoi-rand} are \emph{energy-inefficient}: In the implementation described above, both the round complexity and energy complexity are $O(R\logstar n)$. Later, in \Cref{sec:energy}, we will show that, in fact, \Cref{alg:voronoi,alg:voronoi-rand} can be implemented using only $O(\log R \logstar n)$ energy, which is an \emph{exponential} improvement.

The following result is a special case of \cref{thm:implvor}.

\thmLOCALclustering*

\begin{algorithm} \caption{Voronoi clustering with scale $R$.} \label{alg:voronoi}
\begin{algorithmic}
\State{First run the SW algorithm in $G^{\le R}$ to select the cluster centers. }
\State{Perform BFS out to distance $R$ from each cluster center.}
\State{Each vertex that joins the BFS tree records its cluster center's ID, its level in the BFS tree, and its parent in the BFS tree. Then it sends its state to all its neighbors other than its parent. Each such neighbor that has not previously joined a cluster can now join the cluster at the next level of the BFS tree. }
\end{algorithmic}
\end{algorithm}

\begin{algorithm} \caption{Weighted Voronoi clustering with scale $R$ and random start times.} \label{alg:voronoi-rand}
\begin{algorithmic}
\State{First run the SW algorithm in $G^{\le R}$ to select the cluster centers. }
\State{Each cluster center picks a start time uniformly at random from $[0, R/10]$}
\State{Perform BFS out to distance $R$ from each cluster center, starting at the sampled start times.}
\State{Each vertex that joins the BFS tree records its cluster center's ID, its level in the BFS tree, and its parent in the BFS tree. Then it sends its state to all its neighbors other than its parent. Each such neighbor that has not preciously joined a cluster can now join the cluster at the next level of the BFS tree.}
\end{algorithmic}
\end{algorithm}

\subsection{Derandomizing the Clustering}

In \Cref{thm:implvor}, while both \Cref{alg:voronoi,alg:voronoi-rand} have low distance distortion, only the randomized one guarantees few crossing edges \emph{in expectation}. In this section, we show that a deterministic algorithm attaining both guarantees can be obtained by applying the method of conditional expectations to derandomize the selection of additive weights in \Cref{alg:voronoi-rand}. 

\begin{theorem}\label{thm:derand}
There is a deterministic $\LOCAL$ algorithm which, when $G^{\le R}$ is bounded independence,
produces an $O(1)$-distortion clustering at scale $R$ satisfying the following requirements.
\begin{itemize}
    \item For each cluster center $s$, its cluster $\randclust(s)$ satisfies $\nbd{0.4R}{s} \subseteq \randclust(s) \subseteq \nbd{1.1R}{s}$.
    \item The clustering has an $O(1/R)$ fraction of crossing edges.
\end{itemize}
The round complexity of this algorithm is $O(R \logstar n)$.
\end{theorem}
\begin{proof}
First of all, observe that regardless of the choice of the start times in $[0, R/10]$ in \Cref{alg:voronoi-rand}, for each cluster center $s$, its cluster $\randclust(s)$ satisfies $\nbd{0.4R}{s} \subseteq \randclust(s) \subseteq \nbd{1.1R}{s}$.

Consider the graph $G^*=(V^*, E^*)$ where $V^*$ is the independent set that has been selected to be the centers of the clusters, and  $\{u,v\} \in E^*$ if $d_G(u,v) \le 2.2R + 1$. The choice of the threshold $2.2R + 1$ guarantees that $\{u,v\} \in E^*$ when the clusters of the $u$ and $v$ \emph{could be adjacent} for some choice of start times, as $\randclust(s) \subseteq \nbd{1.1R}{s}$ for each cluster center $s$. From now on, we say that $u \in V^*$ and $v \in V^*$ are \emph{competing} if $\{u,v\} \in E^*$. Our definition of $G^*$ ensures that for any collection $S \subseteq V^\ast$ of mutually non-competing centers, the event for any edge $e \in E$ to be a crossing edge can be affected by \emph{at most one} center in $S$. 

We plan to set the start time for each cluster center in $V^*$ in \Cref{alg:voronoi-rand} \emph{sequentially}. When we process one cluster center $s \in V^*$, we set its start time deterministically to \emph{minimize} the conditional expected value of the number of crossing edges. To do so, $s$ just needs to learn the graph topology of the ball of radius $2.2R + 1$ around $s$ and the start times of all \emph{competing} centers who have fixed their start times. To summarize, for $s$ to set its start time, $s$ just needs to gather information within a ball of radius $O(R)$, which can be done in $O(R)$ rounds in the $\LOCAL$ model.

While the above plan seems inherently sequential, a key observation is that any collection $S \subseteq V^\ast$ of mutually non-competing centers can set their start times in parallel, allowing us to parallelize the procedure.
Since $G$ has uniformly bounded independence, $G^*$ has a constant maximum degree. 
Thus we can use Linial's algorithm~\cite{Linial92} to properly color $G^*$  using $O(1)$ colors in $O(\logstar n)$ rounds in $G^*$. Moreover, this can be simulated in the underlying graph $G$ in 
$O(R\logstar n)$ rounds in $\LOCAL$. 
Now we go through each color class one by one: Since centers in the same color class are mutually non-competing, they can choose their start times in parallel in $O(R)$ rounds. 

In the end, we obtain a clustering whose fraction of crossing edges is \emph{at most} the \emph{expected} fraction of crossing edges of the clustering of \Cref{alg:voronoi-rand}, which is $O(1/R)$ by \Cref{thm:implvor}.
    Finally, the constant distortion bound is due to \Cref{lem:BI-distortion}.   
\end{proof}

The following theorem is a simplification of \Cref{thm:derand}.

\thmLOCALclusteringDet*
\begin{proof}
The theorem follows from \Cref{thm:derand} immediately.
\end{proof}

\section{Low-Energy Simulation in the \texorpdfstring{$\LOCAL$}{LOCAL} Model}

\label{sec:energy}
In this section, we show that for uniformly bounded independence graphs, any $t \logstar n$-round $\LOCAL$ algorithm can be simulated using only $O(\log t \logstar n)$ energy per vertex, without affecting the asymptotic round complexity of the algorithm! This is achieved by following the very high-level idea of Dani and Hayes~\cite{dani2022wake}  that \emph{multi-scale} low-distortion clusterings can be used to do \emph{generic} simulation of existing algorithms using as little energy as possible.  

Once we have a constant-distortion clustering at scale $R$, using the BFS trees associated with the clusters, in the 
$\LOCAL$ model, we can let each node gather all information within its ball of radius $CR$ using $O(CR)$ rounds and $O(C)$ energy. Such clustering enables a generic \emph{low-energy simulation} in the $\LOCAL$ model: Any $CR$-round algorithm can be simulated using $O(CR)$ rounds and $O(C)$ energy! In particular, this allows us to construct a constant-distortion clustering at scale $2R$ using $O(R \logstar n)$ rounds and only $O(\logstar n)$ energy. Therefore, by starting from the trivial clustering of scale $1$ and then building constant-distortion clusterings at scales $2, 4, 8, 16, \ldots, 2^{\lceil \log R \rceil}$, we can drastically improve the energy complexity of \Cref{thm:LOCAL-clustering-det} from $O(R \logstar n)$ to just $O(\log R \logstar n)$!

\subsection{Bootstrapping Multi-Scale Voronoi Clusters}

In this section, we show that for uniformly bounded independence graphs, a constant-distortion clustering at scale $2^i$ can be computed using $O(2^i \logstar n)$ rounds and $O(i \logstar n)$ energy. The algorithm involves building constant-distortion clusterings at scale $2^0, 2^1, 2^2, \ldots, 2^i$. In the subsequent discussion, we write $\mathcal{C}_j$ to denote the constant-distortion clustering at scale $2^j$. Each cluster of $\mathcal{C}_j$ is called a cluster of \emph{level} $j$.

Suppose, for an induction hypothesis, that we have already built level $j$ clusters, where $j \ge 0$. That is, $\mathcal{C}_j$ is a constant-distortion clustering guaranteed by \Cref{thm:LOCAL-clustering-det}. Recall that by definition of a clustering (\cref{def:clustering}), each vertex knows the center of its cluster, its distance from that center, and its parent in the BFS tree rooted at the center. Furthermore, since we are in the $\LOCAL$ model, by doing one round of communication, we can assume that each vertex $v$ knows its children in a BFS tree rooted at the cluster center of $[v]$ and knows  which neighbors of $v$ are in the same cluster $[v]$.
We want to use the level $j$ clusters to build the next level of clusters with low energy usage. In order to do that we will need certain energy-saving primitives in $\mathcal{C}_j$.
\begin{description}
    \item[$\DOWNCAST$:] The cluster center broadcasts its information to the entire cluster.
    \item[$\UPCAST$:] The cluster center gathers all information within the cluster.
    \item[$\INTERCAST$:] Neighboring clusters exchange their information.
\end{description}

These procedures are described more formally in \Cref{alg:cluster-local}. 

\begin{algorithm} \caption{Basic cluster operations.} \label{alg:cluster-local}
\begin{algorithmic}
\State{ }\Comment{For each of the procedure below, we require that either all the vertices in a particular cluster participate, or none do.} 
\State{Let $t_0$ be the round number when the procedure starts.}
\State{Let $\delta_v$ be the distance from $v$ to the center of $[v]$.}
\State{Let $d$ be a known bound on the maximum cluster diameter.}
\Procedure{Downcast}{$t_0$} 
\State{The cluster center wakes up at time $t_0$ and sends its current state to all its neighbors.} 
\State{Every non-center vertex $v$ wakes up at time $t_0 + \delta_v -1$ to receive its parent's state, which $v$ appends to its own state. At time $t_0 + \delta_v$, $v$ sends its current state to its children and then goes back to sleep.}
\EndProcedure
\Procedure{Upcast}{$t_0$} 
\State{Each leaf vertex $v$ wakes up at time $t_0 + d -\delta_v$ and sends its current state to its parent, and then it goes back to sleep.} 
\State{Each non-leaf vertex $v$ wakes up at time $t_0 +d - \delta_v - 1$ and receives the state of its children, which it appends to its own state. If $v$ is not the center, at time $t_0 +d - \delta_v$, $v$ sends its current state to its parent and then goes
back to sleep.}
\EndProcedure
\Procedure{Intercast}{$t_0$} 
\State{Each vertex $v$ wakes up at time $t_0$, sends its current state to all its neighbors in other clusters, and receives their current state, which $v$ appends to its own.} 
\EndProcedure
\end{algorithmic}
\end{algorithm}

\begin{lemma}
    For level $j$ clusters, $\DOWNCAST$ and $\UPCAST$ cost $O(2^j)$ rounds, and $\INTERCAST$ cost one round. Moreover, all these three operations use at most two energy units per vertex. 
\end{lemma}
\begin{proof}
    $\DOWNCAST$ and $\UPCAST$ take exactly $d = O(2^j)$ rounds because they correspond to information traveling once down the BFS tree and once up it, respectively. Since each vertex in the cluster knows its distance from the center, it only wakes up for the two rounds when it receives and sends messages. For $\DOWNCAST$, that is the $(i-1)$th and the $i$th rounds after the procedure starts, where $i$ is its distance from the center. For $\UPCAST$, the two awake rounds are set to $d - i$ and $d -i +1$ from the start time, where $i$ is its distance from the center. Thus $\DOWNCAST$ and $\UPCAST$ use only two units of energy per vertex. $\INTERCAST$ only runs for one timestep, so clearly it only uses one unit of energy per vertex. 
\end{proof}

\begin{algorithm} \caption{Construction of level $j+1$ clusters.} \label{alg:bootcluster-local}
\begin{algorithmic}
\State{ }
\Comment{Assumes that Level $j$ clusters already exist.}
\State{Let $t_0$ be the round number when the procedure starts.}
\State{Let $M= O(\logstar n)$ be a known value.}
\State{Let $d = O(2^j)$ be a known bound on the maximum cluster diameter in $\mathcal{C}_j$.}
\Procedure{BuildCluster}{$t_0$} 
\State{Run $\DOWNCAST$($t_0$)}
\State{Run $\UPCAST$($t_0 + d)$}
\State{$\tau \gets t_0 + 2 d$}
\For{$i \gets 1 \mbox{ to } M$}
\State{Run $\DOWNCAST$($\tau$)}
\State{Run $\INTERCAST$($\tau + d$)}
\State{Run $\UPCAST$($\tau + d + 1)$}
\State{$\tau \gets t_0 + 2 d +1 $}
\EndFor
\State{Run $\DOWNCAST$($\tau$)}
\State{Locally simulate a Voronoi-cluster-forming algorithm to determine your cluster center and BFS tree for level $j+1$ clustering.}
\EndProcedure
\end{algorithmic}
\end{algorithm}






Given $\mathcal{C}_j$, to build level $j+1$ clusters, we want to run the Voronoi clustering algorithm (\Cref{thm:LOCAL-clustering-det}) with scale $2^{j+1}$, by first sharing state out to distance $O(2^{j+1} \logstar n)$ in $G$ by repeated iterations of $\DOWNCAST$-$\INTERCAST$-$\UPCAST$ and then each vertex simulating the algorithm locally. The details are in \Cref{alg:bootcluster-local}, where we select $M = O(\logstar n)$ to be sufficiently large so that for each vertex $v \in V$, the ball of radius $M$ around $[v]$ in the cluster graph of $\mathcal{C}_j$ \emph{covers} the ball of radius $t = O(2^{j+1}\logstar n)$ around $v$ in $G$, where $t$ is the round complexity of the clustering algorithm of \Cref{thm:LOCAL-clustering-det} with scale $2^{j+1}$. This is possible because $\mathcal{C}_j$ is a constant-distortion clustering with scale $2^j$ by the induction hypothesis.

\begin{lemma}\label{lem:level-j}
Given level $j$ clustering $\mathcal{C}_j$, level $j+1$ clustering $\mathcal{C}_{j+1}$ can be constructed using  $O(2^{j+1} \logstar n)$ rounds and $O(\logstar n)$ energy.
\end{lemma}
\begin{proof}
    After one initial $\DOWNCAST$-$\UPCAST$ pair, all the centers know the state of their entire cluster. Thereafter, we execute $M = O(\logstar n)$ iterations of $\DOWNCAST$-$\INTERCAST$-$\UPCAST$. Each such iteration results in the current state being exchanged between a cluster center and all its neighboring cluster centers. Thus, after $M$ such iterations, the cluster centers have exchanged state with all cluster centers that are within $M$ hops of themselves in the cluster graph, and an additional $\DOWNCAST$ results in everyone in a cluster having the same information as the center. 
    By our choice of $M$, for any vertex $v \in V$, the ball of radius $t = O(2^{j+1}\logstar n)$ centered at 
    $v$ is entirely within distance $M = O(\logstar n)$ of  $[v]$ in the level $j$ cluster graph, where $t$ is the round complexity of the clustering algorithm of \Cref{thm:LOCAL-clustering-det} with scale $2^{j+1}$.  
    Thus after $O(2 \logstar n)$ iterations of $\DOWNCAST$-$\INTERCAST$-$\UPCAST$ followed by one $\DOWNCAST$, all the vertices know the state up to distance $t$, and can therefore locally simulate the clustering algorithm of \Cref{thm:LOCAL-clustering-det} with scale $2^{j+1}$ to compute the level $j+1$ clustering $\mathcal{C}_{j+1}$. Since the cluster operations only use constant energy, it follows that if we have already built the clusters at scale $2^{j}$, then we can build the clusters at scale $2^{j+1}$  using  $O(2^{j+1} \logstar n)$ rounds and $O(\logstar n)$ energy.
\end{proof}

\begin{theorem}\label{thm:bootstrapping}
Let $i > 0$ be an integer. A constant-distortion clustering at scale $2^i$ for a uniformly bounded independence graph can be computed using $O(2^i \logstar n)$ rounds and $O(i \logstar n)$ energy in the $\LOCAL$ model deterministically.
\end{theorem}
\begin{proof}
For the base case of $i = 0$, we take the trivial clustering: $[v] = \{v\}$ for all $v \in V$.
For the inductive step, by \Cref{lem:level-j}, building level $j+1$ clusters from level $j$ clusters cost $O(2^{j+1} \logstar n)$ rounds and $O(\logstar n)$ energy.
Summing this up over the $i$ levels, we have a round complexity of 
\[
O(\logstar n) \cdot \left(2^1 + 2^2 + 2^3 + \cdots + 2^{i}\right) = O(2^i \logstar n)
\]
and an energy usage of $O(i \logstar n)$.     
\end{proof}

\subsection{Overlapping Clusters}
In the proof of \Cref{lem:level-j}, we actually show that \Cref{alg:bootcluster-local} allows us to simulate an \emph{arbitrary} $O(RM)$-round $\LOCAL$ algorithm using $O(RM)$ rounds and $O(M)$ energy, given that a constant-distortion clustering at scale $R$ has been constructed. This observation, together with \Cref{thm:bootstrapping}, already allows us to prove \Cref{thm:localsim}! However, we can do it slightly better by constructing a combinatorial structure that can be more conveniently applied to the generic low-energy simulation.

The Voronoi clustering \emph{partitions} the graph into disjoint clusters, each of which contains a ball of radius $R/2$ and is contained in a ball of radius of $R$. There are contexts---for example, simulation of $t$-step $\LOCAL$ algorithm---in which it is convenient to be able to say that \emph{every} ball of some size is contained in a cluster. This is, of course, impossible if the clusters form a partition, but if one is allowed to relax that so that instead they form a covering of the graph, then it is trivially possible, but not interesting, by taking the covering by all balls. The happy medium is when we can achieve this goal while keeping the overlap in the covering small in some sense. The relevant property here is the notion of an \emph{$\ell$-fold cover}: Any vertex is covered at most $\ell$ times. 

\begin{definition}\label{def:overlap}
Given a graph $G=(V,E)$, a $d$-diameter $\ell$-fold {cover} of balls of radius $t$ is a collection $\mathcal{C}$ of clusters satisfying the following conditions.
\begin{itemize}
    \item For each cluster $C \in \mathcal{C}$, the diameter of the subgraph induced by $C$ is at most $d$
    \item For each vertex $v \in V$, $\mathcal{B}_t(v) \subseteq C$ for some cluster $C \in \mathcal{C}$.
    \item For each vertex $v \in V$, the number of clusters in $\mathcal{C}$ that contains $v$ is at most $\ell$.
\end{itemize}
\end{definition}

Here we also assume that each cluster $C \in \mathcal{C}$ in the above definition is associated with a center and a BFS tree rooted at the center in the same way as \Cref{def:clustering}.


We prove the existence of such a covering for graphs with uniformly bounded independence.

\begin{lemma}\label{lem:overlap}
    Let $G$ be a graph with uniformly bounded independence, $R \ge 1$, and  $S = \{v_1, \dots, v_k\}$ a maximal independent set in $G^{\le R}$.
    The collection of balls 
    \[
\mathcal{C} = \{ \nbd{3R}{v_i} \, : \, 1\le i\le k \}
\]
satisfies the following conditions for every $w \in V$.
\begin{itemize}
    \item [(a)] There exists $1 \le i \le k$ such that $\nbd{R}{w} \subseteq \nbd{2R}{v_i}$, and 
    \item [(b)] There are only $O(1)$ values of $1 \le j \le k$ such that $w \in \nbd{2R}{v_j}$.
\end{itemize}
Therefore, the collection $\mathcal{C}$ of balls forms a $6R$-diameter $O(1)$-fold cover of balls of radius $R$.
\end{lemma}

\begin{proof}
    Let $w$ be any vertex and $v_i$ be the center of its cluster in $\vor(S)$, then $w \in \nbd{R}{v_i}$, implying that $\nbd{R}{w} \subseteq \nbd{2R}{v_i}$, establishing (a). To see (b), observe that  if $w \in \nbd{2R}{v_j}$, then $v_j \in \nbd{2R}{w}$, but by the uniformly bounded independence of $G$, at most $O(1)$ of $v_j \in S$ fall so close to $w$.
\end{proof}


\begin{theorem}\label{thm:overlap}
For any given parameter $R \geq 1$, an $O(R)$-diameter $O(1)$-fold cover of balls of radius $R$ for a uniformly bounded independence graph can be constructed using $O(R \logstar n)$ rounds and $O(\log R \logstar n)$ energy.
\end{theorem}
\begin{proof}
By \Cref{lem:overlap}, once we compute a maximal independent set $S$ of $G^{\le R}$, a desired $O(1)$-fold cover can be computed in $O(R)$ rounds. By simulating the SW algorithm in $G^{\le R}$, a maximal independent set $S$ of $G^{\le R}$ can be computed in $O(R \logstar n)$ rounds. 

To improve the energy complexity to $O(\log R \logstar n)$, we first run the algorithm of \Cref{thm:bootstrapping} with $i = \lceil \log R \rceil$ to obtain a constant-distortion clustering with scale $2^i$. After that, following the proof of \Cref{lem:level-j}, we can use \Cref{alg:bootcluster-local} to simulate the above $O(R \logstar n)$-round $\LOCAL$ algorithm using $O(R \logstar n)$ rounds and $O(\logstar)$ energy, as $R = O(2^i)$. The dominant term for the energy complexity is the algorithm of \Cref{thm:bootstrapping}, which costs $O(\log R \logstar n)$ energy.
\end{proof}

\subsection{Simulating the Algorithm}




Now we are ready to prove \Cref{thm:localsim}.

\thmLOCALsim*

\begin{proof}
Using the algorithm of \Cref{thm:overlap}, we compute an $O(t)$-diameter $O(1)$-fold cover of balls of radius $t$ using $O(t \logstar n)$ rounds and $O(\log t \logstar n)$ energy.
In the $O(1)$-fold cover, the ball of radius $t$ around each vertex is contained in some cluster. Therefore, one iteration of $\DOWNCAST$-$\UPCAST$-$\DOWNCAST$ results in each vertex knowing the state of its entire $t$ neighborhood, so that it can simulate the $t$-round $\LOCAL$ algorithm $\mathcal{A}$ on its own.
In the $O(1)$-fold cover, each vertex belongs to $O(1)$ clusters, so running $\DOWNCAST$-$\UPCAST$-$\DOWNCAST$ for all clusters in parallel still cost $O(t)$ rounds and $O(1)$ energy. To simulate the given $t \logstar n$-round algorithm $\mathcal{A}$, we just need to repeat $\DOWNCAST$-$\UPCAST$-$\DOWNCAST$ for $\logstar n$ iterations, costing $O(t \logstar n)$ rounds and $O(\log t \logstar n)$ energy, as required.
\end{proof}

An interesting consequence of \Cref{thm:localsim} is that the energy complexity of \Cref{thm:LOCAL-clustering} can be automatically improved to $O(\log R \logstar n)$ without affecting the asymptotic round complexity $O(R \logstar n)$. 

\section{Low-Energy Simulation in \texorpdfstring{$\CONGEST$}{CONGEST}}
\label{sec:newCONGESTstuff}

Dani and Hayes~\cite{dani2022wake} showed that, up to polylogarithmic factors in the 
round and energy complexities, 
there is a ``best'' generic method for reducing the
energy cost of any algorithm in the $\RADIO$ model.
More specifically, they show that their simulation algorithm
always performs at least as well as every algorithm in a 
class of black-box simulation algorithms, that they call 
\emph{safe one-pass generic simulation algorithms.}
Their results also hold with trivial modifications in the $\CONGEST$ model.

Their method is based on using the clustering algorithm of Miller, Peng and Xu~\cite{MPX} to
build approximately distance-preserving clusters at all scales, and then
use these clusters as the basis for a kind of ``warning'' system that allows vertices
to sleep most of the time, and wake up only when there is the possibility that nearby
vertices will send messages soon.  

Since our new clustering method has less distance distortion than MPX clustering, it is
natural to ask whether it can be used to improve the results of~\cite{dani2022wake}, and
yes it can!  Moreover, since their algorithm (called ``$\SAF$ simulation'') treats its
clusterings as black boxes, plugging our clusterings in yields an immediate improvement,
reducing both the round and energy complexities by about a $\log^2(n)$ factor.  

In this section, we present a few more details of how this works.  As observed above,
once the clusterings have been built, the algorithms work as designed.  This applies
in both the $\CONGEST$ and the $\RADIO$ models.  However, there is still the matter
of building the clusterings in the first place, using as little energy as possible.
This is the main technical challenge we will address in this section.


We will now state the main results.  As mentioned above, they are closely based on the main theorem
of~\cite{dani2022wake}, but stated for $\CONGEST$ as well as $\RADIO$.  In $\CONGEST$, the
round and energy complexities are reduced by a $\log(n) \log(\Delta)$ factor compared with $\RADIO$
because there is no issue of collisions causing messages to be dropped.  A further $\log^2(n)$
factor decrease in round and energy complexities comes from the reduced distance distortion of our clusterings. 
First, the version for $\CONGEST$:

\begin{theorem} \label{thm:main-congest}
Let $G = (V,E)$ be a graph on $n$ vertices of diameter $D$.
For every $\CONGEST$ algorithm, $\mathcal{A}$, there is a randomized $\CONGEST$ algorithm $\SAF(\mathcal{A})$, whose round complexity is
$O(\log(D) \TIME(\mathcal{A}))$, with the following properties.

\noindent
{\bf Precondition:} Assume all vertices are initialized with their local view of a hierarchical clustering for $G$ with distance distortion $O(1)$.

Under the above precondition, the simulation achieves the following two properties:
\begin{enumerate}
    \item With probability $1 - 1/\poly(n)$, $\SAF(\mathcal{A})$ produces the same output as $\mathcal{A}$. 
    \item Moreover, its energy cost satisfies, for every vertex $v$,
\[
\SAL(\SAF(\mathcal{A}),v) = O(\log(D) \OPT(\mathcal{A},v)),
\]
where $\OPT$ is the least possible energy cost for a safe, synchronized, one-pass generic $\CONGEST$ simulation algorithm.
\end{enumerate}
Furthermore, the above precondition can be guaranteed without assumptions, for a one-time cost of
$O(D \log^2 D)$ rounds and $O(\log^2 D)$ maximum energy cost per vertex.
\end{theorem}

And next, the version for $\RADIO$:

\begin{theorem}  \label{thm:main-radio}
Let $G = (V,E)$ be a graph on $n$ vertices of maximum degree $\Delta$ and diameter $D$.
For every $\RADIO$ algorithm, $\mathcal{A}$, there is a randomized $\RADIO$ algorithm $\SAF(\mathcal{A})$, whose round complexity is
$O(\log(D)\log(n)\log(\Delta) \TIME(\mathcal{A}))$, with the following properties.

\noindent
{\bf Precondition:} Assume all vertices are initialized with their local view of a hierarchical clustering for $G$ with distance distortion $O(1)$.

Under the above precondition, the simulation achieves the following two properties:
\begin{enumerate}
    \item With probability $1 - 1/\poly(n)$, $\SAF(\mathcal{A})$ produces the same output as $\mathcal{A}$. 
    \item Moreover, its energy cost satisfies, for every vertex $v$,
\[
\SAL(\SAF(\mathcal{A}),v) = O(\log(D)\log(n)\log(\Delta) \OPT(\mathcal{A},v)),
\]
where $\OPT$ is the least possible energy cost for a safe, synchronized, one-pass generic $\RADIO$ simulation algorithm.
\end{enumerate}
Furthermore, the above precondition can be guaranteed without assumptions, for a one-time cost of
$O(D \log^2 (D) \log (n) \log (\Delta))$ rounds and $O(\log^2 (D) \log (n) \log (\Delta))$ maximum energy cost per vertex.
\end{theorem}

It is worth mentioning that the above result applies to arbitrary algorithms $\mathcal{A}$.  In the special case when $\mathcal{A}$ is already using
Backoff at every step to deal with the possibility of collisions, there is an easy modification to the $\SAF$ simulation protocol that
correspondingly reduces the rate of running $\WARN$, and hence does not incur the extra $\log(n)\log(\Delta)$ factor of overhead to round and energy complexities.




\subsection{\texorpdfstring{$\CONGEST$}{CONGEST}}

We first prove a couple of lemmas about simulating $\CONGEST$ algorithms when the graph on which the algorithm is to be run (\emph{i.e.,} the input graph) is not the same as the underlying communication network.

\begin{lemma}\label{lem:CONG-sim-shortpath}
    Let $G=(V,E)$ be a graph with maximum degree $\Delta = O(1)$ and let $V' \subseteq V$. Suppose $G' = (V', E')$ is a graph such that 
    \begin{itemize}
        \item the maximum degree in $G'$ is $\Delta' = O(1)$ and
        \item every pair of adjacent vertices in $G'$ have distance at most $C= O(1)$ in $G$. 
    \end{itemize}  
    Moreover, assume that every vertex in $V$ knows whether or not it is an element of $V'$, and that every vertex in $V'$ knows the unique $\ID$ of each of its neighbors in $G'$.
    Then any $\CONGEST$ algorithm on $G'$ can be simulated by a $\CONGEST$ algorithm on $G$ with at most a 
    factor $\Delta' \Delta^C = O(1)$ increase in round complexity.
\end{lemma}

\begin{proof}
    We will show that $\Delta' \Delta^C$ rounds of communication in $G$ suffice to simulate a single round of communication of a $\CONGEST$ algorithm $\mathcal{A}$ on $G'$.  In one round of $\mathcal{A}$, each vertex $v \in V'$ may want to send up to $\Delta'$ messages, one to each of its neighbors. Although it knows its neighbors $\ID$s, and also that they are within $C$ hops away, it does not know the shortest paths to them. vertex $v$ tags each of its messages with the $\ID$ of the intended recipient and floods each of them out to depth $C$, $\Delta^C$ rounds apart. Thus we need to show that all the messages sent by vertices in $V'$ (one from each such vertex) arrive at their destinations within $\Delta^C$ rounds. Now consider any vertex $w \in V$. In any particular round, $w$ can receive at most $\Delta$ messages, one from each neighbor. Consider a round $t_0$ during which all vertices in $V'$ initiate a flooding operation. We will divide the time interval starting in round $t_0$ into phases of lengths 1, $\Delta$, $\Delta^2, \dots \Delta^{C-1}$. Any messages received by vertex $w \in V$  in phase $i$ will be forwarded in phase $i+1$. Then, the message originating at vertex in $v \in V'$ has reached all its neighbors in phase 0, and by induction, has reached everyone at distance $i$ from it by the end of phase $i-1$. Since the intended recipient is at a distance at most $C$ from it, the message reaches by the end of phase $C-1$. Finally, since 
    \[
    \sum_{i=0}^{C-1} \Delta^i  = \frac{\Delta^C -1}{\Delta -1} < \Delta^C,
    \]
    it follows that all messages arrive in the specified interval, and we've shown that $\Delta' \Delta^C$ rounds suffice to simulate one round of communication of $\mathcal{A}$. 
\end{proof}

\begin{lemma}\label{lem:CONG-sim}
    Let $G = (V,E)$ be a graph, and let $\mathcal{C}$ be a clustering of $V$ so that all vertices in $G$ know the $\ID$ of their parent vertex in their cluster, the $\ID$
    of their cluster center, and their distance to the cluster center. 
    Let $G' = G / \mathcal{C}$ be the cluster graph.
    Assume $G'$ has maximum degree $\Delta' =O(1)$ and constant distance distortion $C$ at some scale $R$.
    Then there exists a constant $C'$ such that every $\CONGEST$ algorithm on $G'$ can be simulated
    by a $\CONGEST$ algorithm on $G$ with at most a factor $2C'R+3\Delta'$ increase in round complexity and factor $3\Delta'$ increase in energy complexity.
\end{lemma}

\begin{proof}
    Let $\mathcal{A}$ be a $\CONGEST$ algorithm in $G'$. In a single round of $\mathcal{A}$ a participating vertex $[v]$ with cluster center $v$ may wish to send up to $\Delta'$ messages, one to each of its neighbors, and receive up to $\Delta'$ messages, one from each of its neighbors.  To simulate this in $G$, the messages are computed at the cluster center $v$. The messages to be sent are disseminated within $[v]$ via a $\DOWNCAST$ operation, and sent over to neighboring clusters via an $\INTERCAST$ operation. The messages to be received are sent over from neighboring clusters via the $\INTERCAST$ and then collected at the cluster center $v$ via an $\UPCAST$ operation. 
    If only a single message were to be sent and a single message received, then $\DOWNCAST$ and $\UPCAST$ would simply be flooding and echo in the cluster respectively, implemented with sleeping in order to save energy. And $\INTERCAST$ could be achieved by the vertices on the boundary of the cluster sending the message to all their neighbors. However, since up to $\Delta'$ messages are to be sent and received, we need to describe these operations more carefully.

    Note that constant distortion $C$ at scale $R$ means that all the clusters have diameter at most $(C-1)R$. Consider a time interval $I$ starting at time $t_0$ during which a single round $j$ of the algorithm on $G'$ is to be simulated. If a vertex in $G'$ is meant to sleep in round $j$ then all of the vertices in the corresponding cluster will sleep during interval $I$. Otherwise, the vertices will participate in the $\DOWNCAST$, $\INTERCAST$, and $\UPCAST$. To describe these, we will assume at first that the $\ID$s of the cluster centers of neighboring clusters are known. We will remove this assumption momentarily. 

    $\DOWNCAST$. If $[v]$ is not meant to be sleeping in round $j$, then $v$ prepares the $\Delta'$ messages, tagging each with the $\ID$ of its recipient. It then floods these into its cluster in consecutive rounds $t_0 +1, t_0 +2, \dots, t_0 +\Delta'$. Each vertex $w$ in the cluster knows its depth $\delta_w$ in the cluster, so it wakes up in round $t_0 + \delta_w$ and receives the first message from its parent. In each of the following $\Delta' - 1$ rounds, it receives another message from its parent in the tree, while also forwarding the message from the previous round to its children in the tree. Finally in round   $t_0 + \delta_w +\Delta'$ it forwards the last message to all its children in the tree and goes back to sleep. Since the cluster diameter is at most $CR$, by round  $t_0+CR+\Delta'$ everyone in the cluster has all $\Delta'$ messages and $\DOWNCAST$ is done.
    It takes $CR+\Delta'$ rounds and $\Delta'$ units of energy per vertex. 

    $\INTERCAST$. All the vertices that have at least one neighbor in a different cluster wake up for $\INTERCAST$, which occurs in rounds $t_0 + CR + \Delta' + 1, t_0 + CR + \Delta' + 2, \dots, t_0 + CR+ 2\Delta'$. During these rounds the vertices send the $\Delta'$ messages, one per round,  to all their neighbors, and receive messages from those neighbors. Now, during this time, a vertex $w$ may receive as many as $\deg(w) \Delta'$ different messages. However, at most $\Delta'$ of these are tagged for its own cluster. It stores these for sending during $\UPCAST$ and discards the rest. Note that $w$ might not get messages from all the neighbouring clusters.  $\INTERCAST$ takes $\Delta'$ rounds and energy.

    $\UPCAST$ begins in round $t_0 + CR + 2\Delta'+1$ and continues for $CR+\Delta'$ rounds. Let $T =t_0 + 2CR + 3\Delta'$.
    A vertex $w$ in the cluster at depth $\delta_w$ wakes up for rounds the $\Delta' +1$ rounds numbered. $T-\delta_w - \Delta',  T-\delta_w - \Delta' +1, \dots, T-\delta_w$. In the first $\Delta'$ of these, it receives up to $\Delta'$ messages from its children. On the last $\Delta'$   of them, its sends up to $\Delta'$ messages to its parent. Even though it is receiving up to $\Delta'$ messages from each child, Since there are at most $\Delta'$ distinct messages from neighboring clusters in total,   $\Delta'$ rounds are enough to forward them up.  Each of the messages from a neighboring cluster has been received by some leaf and therefore has some path to the cluster center. Thus by round $T$ the cluster center $v$ has received all the messages from neighboring clusters. 

    Finally we note that although we assumed that the $\ID$s of neighboring clusters are known, this assumption is easy to remove. Indeed, the $\ID$s can be shared using $\DOWNCAST$, $\INTERCAST$ and $\UPCAST$!
    First the cluster centers use $\DOWNCAST$ to tell their cluster their $\ID$s. Then $\INTERCAST$ is used to send the $\ID$ to neighboring clusters. Since the same message is going to all neighbors, it does not need to be tagged with their $\ID$s. Finally, $\UPCAST$ is used to collect the $\ID$s of all the neighbors at the cluster center. This only needs to be done once, and then the $\ID$s are known. 

    Thus we have shown that one round of a $\CONGEST$ algorithm on the cluster graph $G'$ can be simulated in $G$ with a $2CR +3\Delta'$ factor increase in round complexity and a $3\Delta'$ factor increase in energy complexity.
\end{proof}

For the remainder of the section, we will assume that the underlying communication graph $G$ has uniformly bounded independence with parameters $\gamma$ and $\ddim$.

Lemmas~\ref{lem:CONG-sim-shortpath}  and~\ref{lem:CONG-sim} will be useful in building a hierarchical clustering. Recall that when doing this in the $\LOCAL$ model, we used the level $i$ clusters to save energy while selecting a new set of cluster centers from $G$ for the next level. The bandwidth limitations pose challenges that prevent us from directly 
doing the same thing in $\CONGEST$. Instead we will choose the new set of cluster centers to be a maximal subset of the existing cluster centers. It turns out that in this case, the chosen centers do not form a maximal independent set in the power graph. However they are a \emph{ruling set} (defined below) which is good enough for our purpose.

The following notion of \emph{ruling set} dates back at least to~\cite{awerbuch1989network};
see also~\cite{barenboim2016locality}.
\begin{definition}
    Let $G = (V,E)$ be a graph, let $U \subseteq V$, and let $\alpha, \beta \ge 1$.  We say 
    $S \subseteq U$ is an $(\alpha,\beta)$-ruling set for $U$ with respect to $G$, if, for
    every $v \in U$, $d_G(v,S) \le \beta$ and if for every $v \ne w \in S$, $d_G(v,w) \ge \alpha.$
\end{definition}

\begin{lemma} \label{lem:ruling-MIS}
Suppose $S$ is an $(\alpha,\beta)$-ruling set for $V$ with respect to $G$, and let $\gamma \ge \alpha$.
Let $S'$ be a subset of $S$, maximal with respect to being a subset  of $S$ that is also  an independent set in $G^{\le \gamma}$.
Then $S'$ is a $(\gamma+1,\gamma+\beta)$-ruling set for $V$ with respect to $G$.
\end{lemma}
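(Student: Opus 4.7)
The plan is to verify directly the two defining conditions for $S'$ being a $(\gamma+1, \gamma+\beta)$-ruling set for $V$ with respect to $G$: (i) the separation condition $d_G(v,w) \ge \gamma+1$ for all distinct $v,w \in S'$, and (ii) the covering condition $d_G(v, S') \le \gamma+\beta$ for all $v \in V$.

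For the separation condition (i), this will be essentially immediate from the definition of $G^{\le \gamma}$ and the fact that $S'$ is an independent set in that graph. Specifically, if $v, w$ are distinct elements of $S'$, then they are non-adjacent in $G^{\le \gamma}$, so by the definition of the power graph $d_G(v,w) > \gamma$, i.e. $d_G(v,w) \ge \gamma + 1$.

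For the covering condition (ii), I would fix an arbitrary $v \in V$ and exhibit an $s' \in S'$ with $d_G(v, s') \le \gamma + \beta$. The key step is to use the covering property of the larger set $S$: since $S$ is $(\alpha,\beta)$-ruling for $V$, there is some $s \in S$ with $d_G(v,s) \le \beta$. Now I would do a case split on whether $s \in S'$. If $s \in S'$, take $s' := s$ and we are done (with slack to spare). Otherwise $s \in S \setminus S'$, and here the main idea is to invoke maximality of $S'$: adding $s$ to $S'$ would destroy independence in $G^{\le \gamma}$, so there must exist some $s' \in S'$ that is adjacent to $s$ in $G^{\le \gamma}$, i.e.\ $d_G(s, s') \le \gamma$. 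The triangle inequality in $G$ then gives $d_G(v, s') \le d_G(v,s) + d_G(s, s') \le \beta + \gamma$, as required.

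The only conceptual subtlety, and the step most likely to trip up a careless reader, is pinning down precisely what is meant by maximality: $S'$ is maximal \emph{among subsets of $S$} that are independent in $G^{\le \gamma}$, not merely maximal as an independent set in $G^{\le \gamma}$. The hypothesis $\gamma \ge \alpha$ ensures this notion is well-defined and non-vacuous (otherwise $S$ itself might fail to contain even one independent pair), but it is not directly used in the computation above. Once this point is clear, the argument is a short combination of the ruling-set covering property, maximality, and the triangle inequality, with no further obstacles.
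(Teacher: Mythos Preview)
Your proof is correct and follows exactly the same approach as the paper, which simply states that the lemma ``is an immediate consequence of the definitions and the triangle inequality.'' One minor note: the paper remarks that the hypothesis $\gamma \ge \alpha$ is in fact unnecessary (when $\gamma < \alpha$ one has $S' = S$ and the conclusion holds trivially), whereas your comment about it ensuring well-definedness is slightly off---maximality is well-defined regardless---but this does not affect the validity of your argument.
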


\begin{proof}
This is an immediate consequence of the definitions and the triangle inequality
(We do not need the assumption $\gamma \ge \alpha$, but when $\gamma < \alpha$,
the result is trivial: in this case
we have $S'= S$, which is already known to be a ruling set with the better parameters $(\alpha,\beta).$)
\end{proof}

Our hierarchical clustering algorithm works as follows.   Start with $S_0 = V$.  Recursively, for $i \ge 0$,
let $S_{i+1}$ be a subset of $S_i$,  maximal with respect to being a subset  of $S_i$ that is also  an 
independent set in $G^{\le  2^i}$.  Applying  Lemma~\ref{lem:ruling-MIS} we find by induction that $S_i$ is a
$(2^{i-1}+1,2^{i}-1)$-ruling set for $V$ with respect to $G$. However, we need to show that it is possible to construct these sets $S_i$ in low-energy $\CONGEST$. 

Note that $S_1$ is just any maximal independent set in $G$ and can be found using Luby's algorithm, and the corresponding Voronoi clusters are contained in the ball of radius 1 around the vertices in $S_1$ and have diameter at most 2. These clusters are formed in one step by each vertex in $S_1$ recruiting all its neighbors, and each vertex joining the cluster of its highest $\ID$ neighbor in $S_1$.

We will construct the next levels inductively. Suppose $S_i$ and its Voronoi clusters have already been constructed. Since $S_i$ is a $(2^{i-1}+1,2^{i}-1)$-ruling set for $V$ with respect to $G$, each cluster is contained in a ball of radius $2^{i}-1$ around its center.

Let $G_i$ be the cluster graph relative to Voronoi clusters centered in $S_i$.

\begin{lemma}\label{lem:gi-clust}
    If $G$ is a graph with uniformly bounded independence, then $G_i$ satisfies:
    \begin{itemize}
        \item $G_i$ has constant maximum degree
        \item $G_i$ has constant distance distortion at scale $2^i$.  
    \end{itemize}    
\end{lemma}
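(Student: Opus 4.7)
The plan is to follow the structure of the proof of Lemma~\ref{lem:distortion-plain}, with the ruling set $S_i$ substituting for the maximal independent set used there. From the recursive construction and repeated application of Lemma~\ref{lem:ruling-MIS} starting at $S_0 = V$, an induction shows that $S_i$ is a $(2^{i-1}+1,\;2^i-1)$-ruling set for $V$ with respect to $G$ (for every $i \ge 1$). Two consequences will be used throughout: (i) every Voronoi cell is contained in the $G$-ball of radius $2^i - 1$ about its center, hence has $G$-diameter at most $2(2^i-1)$; and (ii) the centers in $S_i$ are pairwise at $G$-distance at least $2^{i-1}+1$, so they form an independent set in the power graph $G^{\le 2^{i-1}}$.

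First I would bound the maximum degree of $G_i$. If $[s]$ and $[t]$ are adjacent in $G_i$, then $G$ contains a crossing edge between them, and combining this with (i) gives $d_G(s,t) \le 2(2^i-1)+1 < 2^{i+1}$. Thus the centers of all neighbors of $[s]$ in $G_i$ lie in the $G$-ball of radius $2^{i+1}$ about $s$, which is a $G^{\le 2^{i-1}}$-ball of radius at most $4$. By (ii) these centers form an independent set in $G^{\le 2^{i-1}}$, and the uniform bounded independence hypothesis (applied to the power graph $G^{\le 2^{i-1}}$) bounds their number by $\gamma \cdot 4^k = O(1)$.

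For the distortion bound at scale $R = 2^i$, I would verify both inequalities of Definition~\ref{def:distortion}. In one direction, any walk of length $\ell'$ in $G_i$ lifts to a $G$-walk whose length is at most $2(2^i-1)(\ell'+1)+\ell'$, using (i) for the in-cluster legs plus one edge per cluster-crossing; specialising to a shortest cluster-graph walk yields $d(v,w)+1 \le 2\cdot 2^i (d'([v],[w])+1)$. In the other direction, take a shortest $G$-path $P$ from $v$ to $w$ of length $\ell$ and partition it into $\lceil (\ell+1)/2^{i-1}\rceil$ segments of at most $2^{i-1}$ consecutive vertices. For a segment centred at a vertex $x$, any cluster intersecting the segment has its centre within $G$-distance $2^{i-2} + (2^i-1) < 2^{i+1}$ of $x$, so the same independent-set argument used for the degree bound again gives $O(1)$ such clusters per segment. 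Concatenating yields a walk in $G_i$ of length $O((\ell+1)/2^i)$ from $[v]$ to $[w]$, giving $d'([v],[w]) = O((d(v,w)+1)/R)$. The two inequalities together prove that the distance distortion of $(\mathcal{C}_i,2^i)$ is $O(1)$.

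The main point of care is the bookkeeping that converts the scale-dependent ruling-set parameters $(2^{i-1}+1,\;2^i-1)$ into a fixed radius (namely $4$) inside the power graph $G^{\le 2^{i-1}}$. Once this reduction is in place, uniform bounded independence gives the same $O(1)$ bound at every scale $i$, and both conclusions of the lemma fall out of essentially the same counting argument, used once to bound the degree directly and once in the middle of each path-segment.
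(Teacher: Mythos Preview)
Your proof is correct and follows essentially the same approach as the paper: both reduce to the ruling-set version of Lemma~\ref{lem:distortion-plain}, using the packing/covering parameters $(2^{i-1}+1,\,2^i-1)$ of $S_i$ together with uniform bounded independence of the power graph $G^{\le 2^{i-1}}$. Your bookkeeping is actually a bit more careful than the paper's: you bound the $G$-distance between adjacent cluster centers by $2(2^i-1)+1<2^{i+1}$ (hence radius $4$ in $G^{\le 2^{i-1}}$, giving $\gamma 4^k$), whereas the paper asserts the tighter bound $2^i$ and uses radius $2$; your version is the safe one, and of course both yield $O(1)$.
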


\begin{proof}
    The fact that $G_i$ has constant maximum degree follows immediately from $G$ having uniformly bounded independence, since by the triangle inequality, centers of adjacent clusters are within distance $2^i$ of each other, but the ball of radius $2^i$ around a center is its 2-neighborhood in $G^{\le 2^{i-1}}$ and can contain a $G^{\le 2^{i-1}}$-independent set at most $\gamma 2^{\ddim}$.

    To see that $G_i$ has constant distance distortion at scale $2^i$, we note that since $S_i$ is a$(2^{i-1}+1,2^{i}-1)$-ruling set for $V$, we can mimic the proof of Lemma~\ref{lem:distortion-plain} to show that 
    \[
    \left\lceil \frac{d(v,w)+1}{2(2^i -1)+1} \right\rceil \le d'([v],[w])+1 \le 
    \left\lceil \frac{d(v,w)+1}{2(2^{i-1})+1} \right\rceil \gamma 2^k,
    \]
    where $d$ denotes shortest path distance in $G$, and $d'$ denotes shortest path distance in  $G_i$.
    Rearranging terms, we see that $G_i$ has constant distance distortion at scale $2^i$
\end{proof}

Lemma~\ref{lem:gi-clust} implies that $G_i$ satisfies the hypotheses for Lemma~\ref{lem:CONG-sim}, and therefore any $\CONGEST$ algorithm on $G_i$  can be simulated on the underlying graph G for only a constant factor increase in energy complexity, and a $C 2^i$ factor increase in round complexity.

Now let $G'_i$ be the graph with the same vertex set as $G_i$ but with adjacencies inherited from $G^{\le 2^{i}}$. That is for $v, w \in S_i$, we will consider $[v]$ and $[w]$ adjacent in $G'_i$ if $d_G(v, w) \le 2^i$.

\begin{lemma}
    $G'_i$  has constant maximum degree.
\end{lemma}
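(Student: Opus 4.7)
The plan is to leverage the uniformly bounded independence of $G$, together with the ruling set property of $S_i$, to bound the degree of each vertex in $G'_i$ by a constant.

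First I would recall the key facts. By the inductive construction, $S_i$ is a $(2^{i-1}+1, 2^i-1)$-ruling set for $V$ with respect to $G$ (this is established just above the statement by applying \cref{lem:ruling-MIS}). In particular, any two distinct elements $v, w \in S_i$ satisfy $d_G(v,w) \ge 2^{i-1}+1 > 2^{i-1}$, which is exactly the statement that $S_i$ is an independent set in the power graph $G^{\le 2^{i-1}}$. Since $G$ has uniformly bounded independence with parameters $\gamma, k$, the power graph $G^{\le 2^{i-1}}$ is bounded independence with the same parameters.

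Next I would bound the degree of a fixed $[v] \in G'_i$. By definition, $[v]$ is adjacent to $[w]$ in $G'_i$ exactly when $w \in S_i \setminus \{v\}$ and $d_G(v,w) \le 2^i = 2 \cdot 2^{i-1}$. In other words, the neighbors of $[v]$ in $G'_i$ correspond to elements of $S_i \setminus \{v\}$ lying in the $G^{\le 2^{i-1}}$-ball of radius $2$ around $v$. Since $S_i$ is independent in $G^{\le 2^{i-1}}$, the number of such vertices is at most the maximum size of an independent set in a radius-$2$ ball in $G^{\le 2^{i-1}}$, which by the bounded independence property is at most $\gamma \cdot 2^k$.

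Since $\gamma$ and $k$ are constants that do not depend on $i$, this bound is a constant, completing the proof. The argument is entirely analogous to (and uses the same constants as) the degree bound for $G_i$ established in \cref{lem:gi-clust}; the only change is that we use radius $2$ in $G^{\le 2^{i-1}}$ rather than radius $1$. No step looks like a real obstacle here — the main care is simply in observing that uniformly bounded independence must be invoked at scale $2^{i-1}$ rather than at scale $2^i$, so that $S_i$ is genuinely an independent set in the relevant power graph.
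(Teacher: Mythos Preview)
Your proof is correct and essentially identical to the paper's: both observe that $S_i$ is independent in $G^{\le 2^{i-1}}$ by the ruling set property, then use uniformly bounded independence to bound the number of $S_i$-vertices in the radius-$2$ ball of $G^{\le 2^{i-1}}$ (equivalently, $\nbd{2^i}{v}$ in $G$) by $\gamma 2^k$.
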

\begin{proof}
Since $S_i$ is a $(2^{i-1}+1,2^{i}-1)$-ruling set for $V$ with respect to $G$, any two vertices in $S_i$ are distance at least $2^{i-1}+1$ apart, \emph{i.e.,} that $S_i$ is an independent set in $G^{\le 2^{i-1}}$. Since $G$ has uniformly bounded independence, $G^{\le 2^{i-1}}$ has bounded independence with the same parameters $\gamma$ and $\ddim$. Thus for any $v \in S_i$, $\nbd{2^i}{v}$, which is a the 2-neighbourhood of $v$ in $G^{\le 2^{i-1}}$ contains at most $\gamma 2^{\ddim}$ vertices from $S_i$. Thus, the maximum degree of $G'_i$ is 
$\gamma 2^{\ddim} = O(1)$.
\end{proof}

\begin{lemma}
    Suppose the hierarchical multi-scale clustering on $G$ has been built to level $i$ and $S_i$ is the set of cluster centers at level $i$. Then the adjacencies in $G'_i$ can be computed using $O(i)$ energy.
\end{lemma}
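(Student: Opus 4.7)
The plan is to have each center $v \in S_i$ collect enough information to identify all of its neighbors in $G'_i$, namely the centers $w \in S_i$ with $d_G(v,w) \leq 2^i$. First I would appeal to \cref{lem:gi-clust}: since $G_i$ has constant distance distortion at scale $2^i$, every such candidate $w$ lies at $G_i$-distance at most some universal constant $C'$ from $v$. Combined with the constant maximum degree of $G'_i$ established in the previous lemma, this means each center has only $O(1)$ candidate neighbors to learn, and each one reachable within $O(1)$ hops of the cluster graph.

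Next, I would perform $C' = O(1)$ successive iterations of the DOWNCAST--INTERCAST--UPCAST pattern from \cref{alg:cluster-local} applied to the level-$i$ clustering. Each iteration propagates a payload consisting of a center's ID together with a running $G$-distance counter by one additional hop in $G_i$: a DOWNCAST carries the payload down the cluster's BFS tree (each vertex augments the counter by its known in-cluster depth), an INTERCAST forwards it across cluster boundaries (incrementing by one per crossing edge), and an UPCAST returns updated payloads back to each cluster center, which then keeps only the smallest distance counter seen for each originating ID. An induction on the iteration count shows that after $C'$ rounds every center knows the true $d_G$-distance to every other center lying within $G_i$-distance $C'$, since any shortest $G$-path of length at most $2^i$ between two such centers decomposes into at most $C'$ cluster-interior segments joined by crossing edges, and that route is exercised by one of the iterations.

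The main obstacle I expect is the bandwidth bookkeeping in $\CONGEST$: at each iteration the cluster operations must carry $O(1)$ distinct (ID, distance)-pairs through the shared edges, and one needs to verify that this still fits within the $O(\log n)$-bit per-edge budget. This follows because at any iteration only $O(1)$ candidate IDs are relevant per cluster, thanks to the degree bound on $G'_i$ and the bounded independence of $G^{\le 2^i}$. Each DOWNCAST or UPCAST at level $i$ uses $O(1)$ energy per vertex via scheduled wake-ups, each INTERCAST uses $O(1)$ energy per boundary vertex, and $O(1)$ iterations are performed at level $i$ itself. The residual $O(i)$ overhead comes from coordinating the $\CONGEST$ simulation across the levels of the hierarchy that are invoked to schedule transmissions along bottleneck edges (mirroring the factor $3\Delta'$ blow-up in \cref{lem:CONG-sim} applied once per level), yielding the claimed $O(i)$ total energy per vertex.
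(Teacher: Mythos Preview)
Your approach has a real gap in the distance computation. The DOWNCAST--INTERCAST--UPCAST cycle routes information along the cluster BFS trees, which means every payload that traverses a cluster is forced through that cluster's \emph{center}. When you increment the counter by the in-cluster depth on the way down, cross one edge, and then increment again on the way up, the value recorded at the neighboring center is the length of a path $v \to (\text{tree path}) \to (\text{crossing edge}) \to (\text{tree path}) \to x$, not the true $d_G(v,x)$. On subsequent iterations this compounds: a shortest $G$-path from $v$ to $w$ may pass through a cluster $C_x$ via two boundary vertices that are close to each other but far from $x$, and your routing detours through $x$, adding up to $2(2^i-1)$ per intermediate cluster. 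So a pair with true distance $\le 2^i$ can have computed distance $> 2^i$ and be wrongly omitted from $G'_i$. The sentence ``that route is exercised by one of the iterations'' is precisely where the argument fails: the shortest-path route is \emph{not} exercised, because DOWNCAST and UPCAST only walk tree edges. Your accounting for the $O(i)$ bound is also off: if your scheme worked, it would cost $O(1)$ energy (a constant number of level-$i$ cluster operations), and the appeal to ``\cref{lem:CONG-sim} applied once per level'' does not correspond to anything in your procedure, since you only invoke level-$i$ operations.

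The paper takes a different route that sidesteps this issue entirely: it runs an actual BFS in $G$ from each center in $S_i$ out to depth $2^i$, and uses the SAF simulation framework (Theorems~\ref{thm:main-congest}/\ref{thm:main-radio}) with the already-built hierarchy to make the BFS energy-efficient. In that framework, the WARN mechanism at each of the $i$ levels of the hierarchy wakes a vertex $O(1)$ times, so the total energy is $O(i)$. The BFS itself computes exact $d_G$-distances, so there is no distortion problem to worry about.
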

\begin{proof}
    This is accomplished using the $\SAF$ algorithm to simulate BFS to depth $2^i$ from each vertex in $S_i$. The simulation uses $\WARN$ to wake vertices up at the appropriate time to participate in the BFS waves. This costs $O(1)$ energy per level, So that the overall energy cost of the algorithm is $O(i)$.
\end{proof}

\begin{lemma}
If $v, w \in S_i$ are such that $[v]$ and $[w]$ are adjacent in $G'_i$ then \[
        d_{G_i}([v], [w])\le  \gamma 2^{\ddim} = O(1)
        \]
\end{lemma}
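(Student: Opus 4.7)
The plan is to deduce this statement as an essentially immediate consequence of the distortion upper bound for $G_i$ already recorded in the proof of Lemma~\ref{lem:gi-clust}. That bound states that for every $v, w \in S_i$,
\[
d_{G_i}([v],[w])+1 \;\le\; \left\lceil \frac{d_G(v,w)+1}{2 \cdot 2^{i-1}+1} \right\rceil \gamma\, 2^{\ddim}.
\]
Since by hypothesis $[v]$ and $[w]$ are adjacent in $G'_i$, we have $d_G(v,w) \le 2^i$, so $d_G(v,w)+1 \le 2^i + 1 = 2 \cdot 2^{i-1} + 1$, which is exactly the denominator inside the ceiling. Hence the ceiling evaluates to $1$, yielding $d_{G_i}([v],[w]) + 1 \le \gamma\, 2^{\ddim}$, and the claim follows.

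If a self-contained argument is preferred instead of invoking the previous bound, I would argue directly as follows. Fix a shortest $v$-to-$w$ path $P$ in $G$, of length $\ell \le 2^i$. Every vertex of $P$ lies in the Voronoi cell of some center $c \in S_i$ satisfying $d_G(c, \cdot) \le 2^i - 1$ (since $S_i$ is a $(2^{i-1}+1, 2^i-1)$-ruling set), and hence $d_G(c, v) \le (2^i - 1) + 2^i < 2^{i+1}$ by the triangle inequality. Consecutive vertices along $P$ lie either in the same cluster or in adjacent clusters in $G_i$, so $d_{G_i}([v],[w])$ is at most one less than the number of distinct cluster centers whose cells meet $P$. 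The set of such centers is contained in $S_i \cap \nbd{2^{i+1}-1}{v}$, and since $S_i$ is an independent set in $G^{\le 2^{i-1}}$ and $G^{\le 2^{i-1}}$ is bounded-independence with parameters $\gamma, \ddim$ (as $G$ has uniformly bounded independence), this intersection has size $O(1)$. Matching constants gives the $\gamma\, 2^{\ddim}$ bound, as before.

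There is no genuine obstacle here: the content of the lemma is just that a single $G'_i$-edge corresponds to a constant-length walk in the cluster graph $G_i$, which is exactly what the $O(1)$ distance distortion of $G_i$ at scale $2^i$ guarantees. The only care needed is to line up the scale $2^i$ with the denominator $2 \cdot 2^{i-1} + 1$ in the distortion estimate, so that the ceiling collapses to $1$.
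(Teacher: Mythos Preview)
Your first approach---reducing to the distortion upper bound from Lemma~\ref{lem:gi-clust}---is correct and is in fact cleaner than what the paper does. Plugging $d_G(v,w)\le 2^i$ into
\[
d_{G_i}([v],[w])+1 \le \left\lceil \frac{d_G(v,w)+1}{2\cdot 2^{i-1}+1} \right\rceil \gamma\,2^{\ddim}
\]
indeed makes the ceiling collapse to~$1$, and the bound follows immediately.

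The paper does not invoke Lemma~\ref{lem:gi-clust}; instead it argues directly, in a manner close to your ``self-contained'' alternative but with two sharper observations. First, it anchors the ball at the \emph{midpoint} $u$ of the $v$--$w$ path rather than at the endpoint $v$. Second, instead of using the generic ruling-set bound $d_G(c,\cdot)\le 2^i-1$, it notes that every vertex on the path is within $2^{i-1}$ of one of the two endpoints $v,w\in S_i$, so its Voronoi center is within $2^{i-1}$ of it. Combining these, all relevant centers lie in $\nbd{2^i}{u}$, i.e.\ a radius-$2$ ball in $G^{\le 2^{i-1}}$, giving exactly $\gamma\,2^{\ddim}$. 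Your endpoint-based version places the centers only in $\nbd{2^{i+1}-1}{v}$, a radius-$4$ ball in $G^{\le 2^{i-1}}$, which yields $\gamma\,4^{\ddim}$ rather than $\gamma\,2^{\ddim}$; so ``matching constants'' is not quite right for your second argument, though the $O(1)$ conclusion of course survives.
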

\begin{proof}
    For $v, w \in S_i$, $[v]$ and $[w]$ being adjacent in $G'_i$ means that $d_G(v, w) \le 2^i$. Let $u$ be the midpoint of the shortest path between $v$ and $w$. Consider any $x$ on the shortest path from $v$ to $w$. Then $d_G(x, u) \le 2^{i-1}$. Also, $x$ is within distance $2^{i-1}$ of one of $v$ and $w$. It follows that $x$ is within distance $2^{i-1}$ of its own cluster center in $S_i$. By the triangle inequality, the cluster center of $x$ is within distance $2^i$ of $u$. But $\nbd{2^i}{u}$ is the 2-neighborhood of $u$ in $G^{\le 2^{i-1}}$ and therefore, by uniformly bounded independence of $G$, contains at most $\gamma 2^{\ddim}$ vertices from $S_i$. Thus, at most  $\gamma 2^{\ddim}$ clusters of the Voronoi clustering centered at $S_i$ can intersect the shortest path between $v$ and $w$. It follows that the distance between $[v]$ and $[w]$ in the cluster graph $G_i$ is at most $\gamma 2^{\ddim} = O(1)$.
\end{proof}

Combining the last three lemmas, we see that $G'_i$ satisfies the conditions of Lemma~\ref{lem:CONG-sim-shortpath}, enabling any $\CONGEST$ algorithm on $G'_i$ to be simulated on $G_i$ for a constant factor increase in round and energy complexities. But we already observed that algorithms on $G_i$  can be simulated on the underlying graph G for only a constant factor increase in energy, and a $C 2^i$ factor increase in the number of rounds. Composing these, we see that algorithms on $G'_i$ can be simulated on the underlying graph $G$ for a constant factor increase in energy and $O(2^i)$ factor increase in the number of rounds.

Now we can simulate Luby's algorithm to construct a maximal independent set in $G'_i$. As we have just seen, 
this can be done in $O(2^i \log n)$ rounds and $O(\log n)$ energy. 
Let $S_{i+1}$ be the cluster centers of the clusters selected in the MIS. Then $S_{i+1}$ is an independent set in $G^{\le 2^i}$ that is maximal in $S_i$, and this is exactly what we were trying to construct.   

Bootstrapping in this fashion, we can construct the sets $S_i$ and corresponding cluster graphs all the way up to $i = \log D$ where $D$ is the diameter of $G$.   
Once this is completed, the precondition mentioned in Theorem~\ref{thm:main-congest} will be met, completing the proof.

\subsection{\texorpdfstring{$\RADIO$}{Radio-CONGEST}}

In the $\RADIO$ setting, the main differences are 
\begin{enumerate}
    \item vertices can send at most one message per round, and
    \item if two or more vertices adjacent to vertex $v$ send in the same round, there is a ``collision''  at $v$, and $v$ receives no message.  
\end{enumerate}
The first difference does not affect the overhead of our low-energy simulation, because the one-message-per-round restriction affects both the
simulated algorithm $\mathcal{A}$ and the simulating algorithm $\mathcal{A}'$ about equally.
To address the second difference, that is, the possibility of collisions, we use the Back-off algorithm to ensure that, with high probability, 
for any round in which at least one neighbor of a listening vertex attempts to send, one of the messages will be successfully received.  Since
this is all that is needed for $\WARN$ to work successfully, the low-energy $\SAF$ simulation algorithm will succeed with high probability.  The 
use of \emph{backoff}---the \emph{decay} algorithm of \cite{bar1992time}---does incur an additional $O(\log(n) \log(\Delta))$ factor overhead to the round and energy complexities.

\section{Approximately Solving Combinatorial Optimization Problems}
 \label{sec:approx}



In this section, we prove \cref{thm:thmApxOpt}.

\thmApxOpt*

\begin{proof}
We first show how to design a randomized algorithm that outputs a feasible solution whose size is at least $(1-\epsilon) \OPT$ \emph{in expectation}, and then we derandomize the algorithm to obtain a desired deterministic algorithm.

\subparagraph{Randomized Algorithm.} 
Let $R = \Theta\left(\frac{1}{\epsilon}\right)$.
We begin by running \cref{alg:voronoi-rand}, which is the low-distortion clustering algorithm of \cref{thm:LOCAL-clustering}. The clustering has the following properties.
\begin{itemize}
    \item The diameter of each cluster is $O(R) = O\left(\frac{1}{\epsilon}\right)$.
    \item Each edge is a crossing edge with probability $O(1/R)$, which we can ensure is at most $\epsilon$ by choosing a sufficiently large constant in $R = \Theta\left(\frac{1}{\epsilon}\right)$.
\end{itemize}
The round complexity and energy complexity of the clustering algorithm are $O\left(\frac{1}{\epsilon} \log^* n\right)$ and $O\left(\log\frac{1}{\epsilon} \log^* n\right)$, respectively.

We ignore all the crossing edges, solve the optimization problem optimally by brute force within each cluster, and take the union of the individual solutions as the final approximate solution. The feasibility of the output is evident. To show that the \emph{expected} size of the solution is at least $(1-\epsilon) \OPT$, in the analysis, we fix $M^\ast \subseteq E$ to be an arbitrary optimal solution. Since each edge is a crossing edge with probability at most $\epsilon$, at least $1-\epsilon$ fraction of $M^\ast$ are not crossing edges. Since our algorithm finds an optimal solution for the graph after removing the crossing edges by brute force, the expected size of the output is at least $(1-\epsilon) |M^\ast| = (1-\epsilon)\OPT$.

To show that our algorithm achieves the desired round and energy complexity, observe that solving the optimization problem optimally by brute force within each cluster costs only $O\left(\frac{1}{\epsilon}\right)$ rounds and $O(1)$ energy by gathering the entire topology of the cluster to the cluster center via the BFS tree, computing an optimal solution locally at the cluster center, and broadcasting the solution via the BFS tree.


\subparagraph{Derandomization.}  While the above guarantee only holds in expectation, as we already saw in the proof of \cref{thm:derand}, \cref{alg:voronoi-rand} can be derandomized to deterministically give a clustering in which at most an $O(R)$ fraction of the edges are crossing edges. Unfortunately, we cannot directly use this derandomized clustering algorithm to get approximate solutions to our optimization problems. This is because the derandomized algorithm of \cref{thm:derand} makes strategic choices of start times to ensure that there are few crossing edges, but the resulting set of edges that become crossing edges may not be the ``right'' ones for the considered optimization problem. Fortunately, instead of derandomizing the clustering algorithm, we can use similar ideas to derandomize the whole approximation algorithm at once, resulting in a \emph{deterministic} algorithm that finds a $(1-\epsilon)$-approximate solution. We describe this in more detail below.
For the sake of simplicity, here we only consider the maximum matching problem. The maximum cut problem can be similarly handled.


    
    We are going to derandomize the randomized algorithm described above by choosing the start times in \cref{alg:voronoi-rand} cleverly instead of randomly via the method of conditional expectations. 
    
    The proof is similar to the proof of \Cref{thm:derand}, so we only highlight the differences. Imagine a cluster center $s$ that is trying to pick its start time for recruiting its cluster.
    We make the following observation: Outside of the ball of radius $2.2 R +1$ around $s$, whether or not an edge eventually becomes an edge in the matching is not affected by $s$'s choice of start time. Therefore, here we still consider the same graph $G^*$ on the cluster centers where there is an edge between two centers in $E^*$ the centers are within distance $2.2R + 1$ of each other.

We plan to set the start time for each cluster center in $V^*$ in \Cref{alg:voronoi-rand} \emph{sequentially}. When we process one cluster center $s \in V^*$, we set its start time deterministically to \emph{maximize} the conditional expected size of the output matching. To do so, $s$ just needs to gather information within a ball of radius $O(R)$, which can be done in $O(R)$ rounds in the $\LOCAL$ model.
Similar to the proof of \Cref{thm:derand}, any independent set $S \subseteq V^\ast$ of $G^\ast$ can set their start times in parallel, allowing us to parallelize the procedure. We use Linial's algorithm~\cite{Linial92} to properly color $G^*$  using $O(1)$ colors in $O(\logstar n)$ rounds in $G^*$, and this can be simulated in the underlying graph $G$ in 
$O(R\logstar n)$ rounds.
Now we go through the color classes sequentially. The centers in the same color class can choose their start times in parallel in $O(R)$ rounds. 
In the end, we obtain a matching whose expected size is \emph{at least} the \emph{expected} size of the matching for the randomized algorithm, which is known to be at least $(1-\epsilon)\OPT$.

The overall round complexity of the deterministic algorithm is $O(R\logstar n) = O\left(\frac{1}{\epsilon} \log^* n\right)$, which is dominated by the SW algorithm in the description of \cref{alg:voronoi-rand} and Linial's coloring algorithm. Using our low-energy simulation of \Cref{thm:localsim}, the energy complexity of the deterministic algorithm can be automatically improved to $O\left(\log\frac{1}{\epsilon} \log^* n\right)$ without affecting the asymptotic round complexity $O\left(\frac{1}{\epsilon} \log^* n\right)$, as required. 
\end{proof}

\section{MPX Algorithm in Bounded Growth Graphs}
\label{sect:mpx}


In this section, we revisit the MPX clustering algorithm~\cite{MPX} and present a variant of it that achieves better guarantees in bounded growth graphs. 
As discussed earlier, using the terminology of \cref{def:AWVC}, for a given scale factor $R$, the MPX clustering is simply an additively weighted Voronoi clustering $\vor(S,W)$ with $S = V$ and $W(v) = \delta_v$, where $\delta_v$ is an exponential random variable with mean $R$. 

To obtain an efficient distributed implementation of the MPX clustering, it is essential to have a \emph{cutoff} of $C R \ln n$ such that any $\delta_v$-value that exceeds $C R \ln n$ will be truncated to $C R \ln n$, where $C > 1$ is some sufficiently large constant~\cite{MPX}. Formally, let $\tilde{\delta}_v = \min\{\delta_v, C R \ln n\}$ denote the result after truncation, and then we use $\{\tilde{\delta}_v\}_{v \in V}$ as the weights for the additively weighted Voronoi clustering: The clustering is computed by letting each vertex $v$ join the cluster $[u]$ of vertex $u$ such that $\dist(u,v) - \tilde{\delta}_v$ is minimized. 
The cutoff ensures that the vertex $u$ that minimizes $\dist(u,v) - \tilde{\delta}_v$ is within the ball of radius $O(R \log n)$ around $v$, so each cluster has diameter $O(R \log n)$. Therefore, the clustering algorithm can be implemented in $O(R \log n)$ rounds in $\CONGEST$.

A key property of the MPX algorithm with a cutoff of $C R \ln n$ is that each edge crosses two distinct clusters with probability $O(1/R) + 1/\poly(n)$. The two terms correspond to the probability of two bad events. Using the memoryless property of exponential distribution, it is possible to show that an edge is a crossing edge with probability at most $1 - e^{-1/R} = O(1/R)$ in the additively weighted Voronoi decomposition with weights \emph{before truncation} $\{\delta_v\}_{v \in V}$. For each vertex $v \in V$, The probability that  $\delta_v \neq \tilde{\delta}_v$ equals $e^{-C \ln n } = n^{-C}$. By a union bound,  with probability at least $1 - n^{1-C}$, the weights $\{\delta_v\}_{v \in V}$ are identical to the weights $\{\tilde{\delta}_v\}_{v \in V}$. The above two bad event probabilities imply the crossing probability $O(1/R) + 1/\poly(n)$. 

The crossing probability upper bound is useful in designing approximation algorithms for various combinatorial optimization problems~\cite{BHKK16,Censor17maxcut,chang2023complexity,FFK21,faour2021approximating} in $\LOCAL$ and $\CONGEST$. For example, if we let each cluster $[u]$ locally compute a maximum matching of the subgraph of $G$ induced by $[u]$. Then the union of all these matchings is a $(1 - O(1/R) - 1/\poly(n))$ approximate maximum matching of $G$ \emph{in expectation}. Therefore, a $(1 - \epsilon)$-approximate maximum matching can be computed in $O(\epsilon^{-1} \log n)$ rounds in the $\LOCAL$ model via the MPX algorithm with $R = \Theta(1/\epsilon)$, with the slight caveat that the approximation ratio only holds in expectation.

\subparagraph{MPX With a Smaller Cutoff.}
The reason to use $\Theta(R \log n)$ as the cutoff is that the $\Theta(\log n)$ factor is needed to do a union bound over all vertices. Intuitively, if we restrict our attention to graphs of bounded growth, then we should be able to use a much smaller cutoff, as for each edge $e$, the number of vertices whose randomness can affect whether $e$ is a crossing edge should be much smaller due to the bounded growth property, so there is no need to do a union bound over all vertices in the graph. In the subsequent discussion, we will make this informal argument precise, and we will show that using $\Theta(R \log R)$ as the cutoff is sufficient to show that an edge crosses two distinct clusters with probability $O(1/R)$ in bounded growth graphs. 

\subparagraph{Analysis.}
In the subsequent discussion, we redefine $\tilde{\delta}_v = \min\{\delta_v, C R \ln R\}$ as the result of truncating $\tilde{\delta}_v$ with the new cutoff $C R \ln R$, where $C$ is some sufficiently large constant to be determined.
Due to the new cutoff $C R \ln R$,  the vertex $u$ that minimizes $\dist(u,v) - \tilde{\delta}_v$ \emph{must be} within the ball of radius $C R \ln R$ around $v$, so each cluster has diameter $O(R \log R)$, and the MPX algorithm can be implemented in $O(R \log R)$ rounds in $\CONGEST$.

From now on, we assume that the input graph $G = (V,E)$ is \emph{bounded growth} in that there are constants $\beta, \ddim >0$ such that for all $v \in V$ and for all $r$, we have $|\nbd{r}{v}| \le \beta r^{\ddim}$. 

\begin{lemma}\label{clm-mpx-1}
For each vertex $u \in V$, with probability $1 - O(1/R)$, that  the weights $\{\delta_v\}_{v \in \nbd{C R \ln R}{u}}$ are identical to the weights $\{\tilde{\delta}_v\}_{v \in \nbd{C R \ln R}{u}}$.
\end{lemma}
\begin{proof}
    For a specific vertex $v \in \nbd{C R \ln R}{u}$, the probability that  $\delta_v \neq \tilde{\delta}_v$ equals $e^{-C \ln R } = R^{-C}$. By a union bound, the probability that $\{\delta_v\}_{v \in \nbd{C R \ln R}{u}}$ are \emph{not} identical to the weights $\{\tilde{\delta}_v\}_{v \in \nbd{C R \ln R}{u}}$ is at most \[|\nbd{C R \ln R}{u}| \cdot R^{-C} \leq \beta {(C R \ln R)}^{\ddim} \cdot R^{-C} = O(1/R),\]
    by selecting $C$ to be a sufficiently large constant. For example, setting $C = \ddim + 1.001$ is enough.
\end{proof}

We rank the vertices in $\nbd{C R \ln R}{u}=\{v_1, v_2, \ldots, v_s\}$ in non-increasing order of $\delta_v - \dist(u,v)$, where $s = |\nbd{C R \ln R}{u}|$. We write $T_i = \dist(u,v_i) - \delta_{v_i}$, so $T_1 \geq T_2 \geq \cdots \geq T_s$.

\begin{lemma}\label{clm-mpx-2}
For each vertex $u \in V$, with probability $1 - O(1/R)$, $T_1 > T_2 + 2$.
\end{lemma}
\begin{proof}
We first reveal the values of $T_2, T_3, \ldots, T_s$ and do the rest of the analysis conditioning on these values.
By the memoryless property of the exponential distribution, $T_1 - T_2 = \max\{0, \dist(u, v_1) - T_2\} + X$, where $X$  is an  exponential random variable with mean $R$. Therefore,
\[\Prob{T_1 > T_2 + 2} \leq \Prob{X > 2} = e^{-2/R} = 1 - O(1/R),\]
as required.
\end{proof}

Combining \cref{clm-mpx-1,clm-mpx-2}, we obtain the desired crossing probability upper bound.

\begin{lemma}\label{clm-mpx-3}
For each edge $e \in E$, with probability $1 - O(1/R)$, both endpoints of $e$ belong to the same cluster.
\end{lemma}
\begin{proof}
Let $e = \{u,v\}$. Applying \cref{clm-mpx-1,clm-mpx-2} to $u$, we infer that with probability $1 - O(1/R)$, the following two good events happen.
\begin{itemize}
\item The weights $\{\delta_v\}_{v \in \nbd{C R \ln R}{u}}$ are identical to the weights $\{\tilde{\delta}_v\}_{v \in \nbd{C R \ln R}{u}}$.
\item $T_1 > T_2 + 2$.
\end{itemize}
These two good events guarantee that $v$ and all its neighbors join the same cluster in the Additively Weighted Voronoi Decomposition with weights $\{\tilde{\delta}_v\}_{v \in V}$. Therefore, both endpoints of $e$ belong to the same cluster with probability $1 - O(1/R)$.
\end{proof}

We summarize the above discussion as a theorem.

\thmMPXboundedgrowth*

Using \cref{thm-mpx-bounded-growth} as a blackbox, we obtain improved distributed approximation algorithms in bounded growth graphs. For example, we obtain the following corollary.

\thmMPXapx*

\begin{proof}
Let $G=(V,E)$ be any bounded growth graph in that there are constants $\beta, \ddim >0$ such that for all $v \in V$ and for  all $r$, we have $|\nbd{r}{v}| \le \beta r^{\ddim}$. In particular, the maximum degree $\Delta$ of $G$ is at most $\beta = O(1)$, so $G$ is a bounded-degree graph.

\subparagraph{Maximum Matching.} Same as the first part of the proof of \cref{thm:thmApxOpt}, our approximate maximum matching works as follows.
We run the algorithm of \cref{thm-mpx-bounded-growth} with $R = \Theta(1/\epsilon)$. After that, we let each cluster locally compute the maximum matching of the subgraph of $G$ induced by the cluster, and then we take the union of all these matchings. It is clear that the algorithm outputs a matching and finishes in $O\left(\frac{1}{\epsilon} \cdot \log \frac{1}{\epsilon} \right)$ rounds in the $\LOCAL$ model.

To analyze the above algorithm, let $M^\ast$ be any maximum matching of $G$. Since $G$ is a bounded-degree graph, we have $|M^\ast| = \Theta(n)$. By \cref{thm-mpx-bounded-growth}, each edge $e \in E$ crosses two distinct clusters with probability $O(1/R)$. We can make this probability at most  $\epsilon/2$ by selecting $R = \Theta(1/\epsilon)$ to be sufficiently large. Therefore, in expectation, at least $(1-\epsilon/2)$ fraction of the edges in $M^\ast$ are intra-cluster edges. This implies that the size of the matching returned by our algorithm is at least $(1-\epsilon/2) \cdot |M^\ast|$ in expectation.

To turn the \emph{in-expectation} approximation guarantee into one that holds \emph{with high probability}, following the proof idea of~\cite{chang2023complexity}, we make use of a Chernoff bound with bounded dependence~\cite{Pem01}. 
Let $X = \sum_{i=1}^t X_i$ be the sum of any $t$ 0-1 random variables $\{X_i\}_{i \in [t]}$ such that each variable $X_i$ is independent of all other variables except for at most  $d$ of them. For any numbers $\mu \geq \Expect{X}$ and $\delta \in (0,1)$, we have
    \[ \Prob{X\geq(1+\delta)\mu} = O(d)\cdot e^{-\Omega(\delta^2\mu / d)}. \]

Let $X = \sum_{e \in M^\ast} X_e$, where $X_e$ is the indicator random variable for the event that $e$ crosses two distinct clusters. Then we have $\Expect{X} \leq (\epsilon/2) \cdot |M^\ast|$, so we may set $\mu = (\epsilon/2) \cdot |M^\ast|$. For each edge $e = \{u,v\}$, variable $X_e$ depends only on the randomness within the vertices in $\nbd{T}{u} \cup \nbd{T}{v}$, where $T = O\left(\frac{1}{\epsilon} \cdot \log \frac{1}{\epsilon} \right)$ is the round complexity of our algorithm. Since $G$ is a bounded growth graph, we may set $d = \epsilon^{-O(1)} = n^{o(1)}$. Therefore, by the above Chernoff bound with bounded dependence with $\delta = 1$, we infer that 
\[\Prob{X\geq \epsilon \cdot |M^\ast|} = O(d) \cdot e^{-\Omega(\mu / d)} = e^{-n^{1 - o(1)}},\] 
as $\mu = \Theta(n)$ because $|M^\ast| = \Theta(n)$. Therefore, our algorithm outputs a $(1-\epsilon)$-approximate maximum matching with probability $1 - e^{-n^{1 - o(1)}}$, which is even better than $1 - 1/\poly(n)$.

\subparagraph{Maximum Independent Set.}  The algorithm for the approximate maximum independent set problem is similar. We run the algorithm of \cref{thm-mpx-bounded-growth} with $R = \Theta(1/\epsilon)$. After that, we let each cluster locally compute the maximum independent set of the subgraph of $G$ induced by the cluster,
and then we take the union of all these independent sets. Finally, we do a post-processing step: For each inter-cluster edge $e$ whose both endpoints are in the set, we remove any one of them.

Let $\alpha(G)$ denote the size of a maximum independent set of $G$. Since $G$ is a bounded-degree graph, we have $\alpha(G) = \Theta(n)$. Here we let $X = \sum_{e \in E} X_e$, where $X_e$ is the indicator random variable for the event that $e$ crosses two distinct clusters, so $X$ equals the number of inter-cluster edges. Clearly, the size of the independent set returned by our algorithm is at least $\alpha(G) - X$, so we just need to show that $X \leq \epsilon \cdot \alpha(G) = \Theta(\epsilon n)$ with probability $1 - 1/\poly(n)$. By selecting $R = \Theta(1/\epsilon)$ to be large enough, we can make sure that $\Expect{X} \leq (\epsilon/2) \cdot \alpha(G)$. Similarly, using the above Chernoff bound with bounded dependence with $\delta = 1$, we infer that $\Prob{X\geq \epsilon \cdot \alpha(G)} = e^{-n^{1 - o(1)}}$, so our algorithm indeed computes a $(1-\epsilon)$-approximate maximum independent set with probability at least $1 - 1/\poly(n)$.
\end{proof}

\subparagraph{Lower Bounds.} To obtain the round complexity $O\left(\frac{1}{\epsilon} \cdot \log \frac{1}{\epsilon} \right)$, it is \emph{necessary} that we restrict our attention to a special graph class, as higher lower bounds were known for general graphs: $(1-\epsilon)$-approximation of maximum independent set requires $\Omega\left(\frac{1}{\epsilon} \cdot \log n \right)$ rounds to compute~\cite{BHKK16,chang2023complexity} and constant-approximation of maximum matching requires $\Omega\left(\min \left\{ \sqrt{\log n / \log \log n}, \log \Delta/ \log \log  \Delta \right\}\right)$ rounds to compute~\cite{coupette2021breezing,KuhnMW16}, where $\Delta$ is the maximum degree of the graph. These lower bounds hold even when the approximation guarantee holds in expectation and apply to the $\LOCAL$ model.

\subparagraph{Remark.}
Although truncating the weights range for the MPX algorithm reduces the maximum diameter from $O(R \log n)$ down to $O(R \log R)$, this does \emph{not} reduce the distortion of the clustering for scale $R$ from $O(\log n)$ to $O(\log R)$.  For example, when $G$ is a cycle, 
a run of $\frac{\log  n}{\log R}$
consecutive single-vertex clusters is not unlikely, which means, for instance, when $R = \Theta(\log n / \log \log n)$, the distortion for scale $R$ is still nearly $\Omega(\log n/\log \log n)$, see \Cref{obs-mpx-2}, whose proof still works even for MPX with a cutoff of $\Theta(R \log R)$.

The reduction in cluster diameter in \cref{thm-mpx-bounded-growth} is still a significant improvement:  MPX clustering with a cutoff of $\Theta(R \log R)$ for scale $X = \Theta\left(R \sqrt{\frac{\log R}{\log n}}\right)$ has distortion $\Theta(\sqrt{\log n \log R})$, due to the rebalancing of underestimated and overestimated distances: $\frac{\Theta(R \log R)}{X} = \frac{X}{\Theta(R / \log n)}$.  This can be used, for example, to remove one factor of $\log n$ from the $\polylog(n)$ energy cost incurred in the constructions of~\cite{dani2022wake} for radio networks.
On the other hand, further improvements to the MPX construction would require at least one additional idea.

\section{Round and Energy Lower Bounds}
\label{sect:lb}

In this section, we prove round and energy lower bounds for the $(1-\epsilon)$-approximate maximum independent set problem in the $\LOCAL$ model.

\thmLBenergy*
\begin{proof}
 Let $n$ be the smallest even number such that $2/n < \epsilon$. Let $G$ be an $n$-vertex cycle. Let $\algo$ be any algorithm that solves the $(1-\epsilon)$-approximate maximum independent set problem in cycles in the $\LOCAL$ model with probability at least $0.99$. Then $\algo$ must compute a maximum independent set of $G$.
 
Let $k$ be an integer to be determined.
We pick two disjoint $(2k+1)$-vertex paths $P_1$ and $P_2$. We write $P_1 = A_1 \circ \{v_1\} \circ B_1$ and  $P_2 = A_2 \circ \{v_2\} \circ B_2$, where $A_1$, $A_2$, $B_1$, and $B_2$ are  $k$-vertex paths. We choose  $P_1$ and $P_2$ in such a way that the distance between $v_1$ and $v_2$ in $G$ is an odd number, so for any maximum independent set of $G$, exactly one of $v_1$ and $v_2$ is in the independent set.
 
 The high-level idea of the proof is to show that if the energy complexity of  $\algo$ is $o\left(\log \frac{1}{\epsilon}\right)$, then with constant probability $v_i$ does not know anything outside of $P_i$, for both $i \in \{1,2\}$. Therefore,  $v_1$ and $v_2$ have to decide independently whether to join the independent set, so $\algo$ fails to output a maximum independent set with constant probability.

 As we are in the $\LOCAL$ model, same as the proof of~\cite[Theorem 1]{chang2018energy}, we may assume, without loss of generality, that the algorithm $\algo$ works as follows.
Every vertex begins in exactly the same state.
Each vertex $v$ locally generates a string $r_v$ of random bits, and afterward,
behaves deterministically.
At any moment in time, each vertex $v$ maintains a connected set of vertices $S$ such that $v$ knows
$r_{u}$ if and only if $u \in S$.  Whenever a vertex $v$ transmits a message, it transmits every useful piece
of information it knows, which is the random string $r_{u}$ for all $u \in S$. 

Following~\cite[Theorem 1]{chang2018energy}, for each subpath $P$ of $G$, let $\mathcal{E}_i[P]$ be the event such that there exists a vertex in $P$ that, after its $i$th wakeup, knows of no information outside $P$, regardless of the choices of random strings outside of $P$. The quantifier over all choices of random strings outside of $P$ implies that $\mathcal{E}_i[P]$ depends only on the randomness inside $P$. It was shown in~\cite[Theorem 1]{chang2018energy} that for each path $P$ of $(13)^i$ vertices, we have
\[\Prob{\mathcal{E}_i[P]} \geq \frac{1}{2}.\]

We select $k$ in such a way that $k = (13)^i$ with $i = o(\log n) = o\left(\log \frac{1}{\epsilon}\right)$ being the energy complexity of $\algo$. For both $j \in \{1,2\}$, let $\mathcal{E}_j^\ast$ be the event where both $\mathcal{E}_i[A_j]$ and $\mathcal{E}_i[B_j]$ occur. Since $\mathcal{E}_i[A_j]$ and $\mathcal{E}_i[B_j]$ are independent events,  for both $j \in \{1,2\}$, we have \[\Prob{\mathcal{E}_j^\ast} =\Prob{\mathcal{E}_i[A_j]} \cdot \Prob{\mathcal{E}_i[B_j]} \geq \frac{1}{4}.\]
Since $\mathcal{E}_1^\ast$ and $\mathcal{E}_2^\ast$ are independent events, we have $\Prob{\mathcal{E}_1^\ast} \cdot \Prob{\mathcal{E}_2^\ast} \geq \frac{1}{4} \cdot \frac{1}{4} = \frac{1}{16}$.

Let $p$ be the probability that $v_1$ joins the independent set conditioning on $\mathcal{E}_1^\ast$. By symmetry, $p$ is also the probability that $v_2$ joins the independent set conditioning on $\mathcal{E}_2^\ast$. Therefore, $\algo$ fails with probability at least
\[\Prob{\mathcal{E}_1^\ast} \cdot \Prob{\mathcal{E}_2^\ast} \cdot (p^2 + (1-p)^2)
\geq \frac{1}{16} \cdot (p^2 + (1-p)^2) =
\frac{1}{16} \cdot \left(2\left(p- \frac{1}{2}\right)^2 + \frac{1}{2}\right) \geq \frac{1}{32} > 0.01,\]
contradicting the assumption that $\algo$ succeeds with probability at least $0.99$.
\end{proof}


\thmLBtime*

The proof of \cref{thm-lb-time-bounded-growth} is obtained by a minor modification to the lower bound proofs in~\cite{BHKK16,chang2023complexity}. In the subsequent discussion, we briefly review their proofs and describe the needed modification. By an indistinguishability argument, it was shown in~\cite{BHKK16} that constant-approximation of maximum independent set needs $\Omega(\log n)$ rounds to solve in the $\LOCAL$ model. Their proof utilizes the Ramanujan graphs constructed in~\cite{lubotzky1988ramanujan}.

\begin{theorem}[\cite{lubotzky1988ramanujan}]\label{thm:ramanujan}
For any two unequal primes $p$ and $q$ congruent to $1 \mod 4$, there exists a $(p+1)$-regular graph $X^{p,q}$ satisfying the following properties.
\begin{description}
\item[Case 1:] $\parens{\dfrac{q}{p}} = -1$.
\begin{itemize}
    \item $X^{p,q}$ is a bipartite graph with $n = q(q^2 - 1)$ vertices.
    \item The girth of $X^{p,q}$ is at least $4 \log_p q - \log_p 4$.
\end{itemize}
\item[Case 2:] $\parens{\dfrac{q}{p}} = 1$.
\begin{itemize}
    \item $X^{p,q}$ is a non-bipartite graph with $n = q(q^2 - 1)/2$ vertices.
    \item The girth of $X^{p,q}$ is at least $2 \log_p q$.
    \item The size of a maximum independent set of $X^{p,q}$ is at most $\frac{2 \sqrt{p}}{p+1} \cdot n$.
\end{itemize}
\end{description}
\end{theorem}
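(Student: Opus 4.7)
The plan is to construct $X^{p,q}$ as a Cayley graph of a finite quotient of the projective linear group, with generators obtained from integer quaternions of norm $p$. First I would invoke Jacobi's four-square theorem: since $p \equiv 1 \pmod{4}$ is prime, there are exactly $8(p+1)$ integer quaternions $\alpha = a_0 + a_1 i + a_2 j + a_3 k$ of norm $p$. Choosing normalized representatives with $a_0 > 0$ odd and $a_1, a_2, a_3$ even (a parity arrangement possible precisely because $p \equiv 1 \pmod{4}$), then quotienting by the eight units $\{\pm 1, \pm i, \pm j, \pm k\}$, leaves a canonical set $\alpha_1, \dots, \alpha_{p+1}$ that is closed under quaternion conjugation $\alpha \mapsto \bar\alpha$, so the generating set will be symmetric.

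Next I would reduce these quaternions modulo $q$. Because $q \neq p$ is an odd prime, the Hamilton quaternion algebra splits over $\mathbb{F}_q$, so there is an isomorphism from the reduction to $M_2(\mathbb{F}_q)$; each $\alpha_i$ maps to an invertible matrix, which projects to an element of $\operatorname{PGL}_2(\mathbb{F}_q)$. A short Legendre-symbol computation shows that all these generators lie inside $\operatorname{PSL}_2(\mathbb{F}_q)$ exactly when $(q/p) = 1$: in that case $X^{p,q}$ is the Cayley graph on $\operatorname{PSL}_2(\mathbb{F}_q)$, non-bipartite, with $|\operatorname{PSL}_2(\mathbb{F}_q)| = q(q^2-1)/2$ vertices, while when $(q/p) = -1$ the generators escape $\operatorname{PSL}_2(\mathbb{F}_q)$ and the Cayley graph on $\operatorname{PGL}_2(\mathbb{F}_q)$ is bipartite with the two $\operatorname{PSL}_2$-cosets as sides, giving $q(q^2-1)$ vertices.

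For the girth bound I would translate a closed walk of length $\ell$ in $X^{p,q}$ back into a reduced product $\alpha_{i_1} \cdots \alpha_{i_\ell}$ whose image is trivial modulo $q$. The quaternion norm of such a product is $p^\ell$, and triviality in $\operatorname{PGL}_2(\mathbb{F}_q)$ forces all imaginary components of the product to be divisible by $q$. Applying Cauchy--Schwarz to the four-square decomposition bounds those components by $2 p^{\ell/2}$; comparing with $q$ gives $\ell \ge 2 \log_p q$. In Case~1 the graph is bipartite, so all closed walks have even length, and a parity refinement strengthens the bound to $4 \log_p q - \log_p 4$.

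The main obstacle, and the genuinely deep ingredient, is the independent-set bound in Case~2. Here I would invoke the Ramanujan property: every nontrivial adjacency eigenvalue $\lambda$ of $X^{p,q}$ satisfies $|\lambda| \le 2\sqrt{p}$. This comes from identifying these eigenvalues with Hecke eigenvalues of weight-$2$ cusp forms on a congruence subgroup of $\operatorname{SL}_2(\mathbb{Z})$ and applying Deligne's theorem (the Ramanujan--Petersson conjecture for $\operatorname{GL}_2$, a consequence of the Weil conjectures). Given this eigenvalue bound, the Hoffman inequality for $d$-regular graphs, $\alpha(G) \le \frac{|\lambda_{\min}|}{d + |\lambda_{\min}|} \cdot n$, together with $d = p+1$ and $|\lambda_{\min}| \le 2\sqrt{p}$, yields $\alpha(X^{p,q}) \le \frac{2\sqrt{p}}{p+1+2\sqrt{p}} \cdot n \le \frac{2\sqrt{p}}{p+1} \cdot n$. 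The quaternion combinatorics and the girth argument are elementary (if intricate); the Ramanujan eigenvalue bound is not, and a complete proof would defer to \cite{lubotzky1988ramanujan} for this step.
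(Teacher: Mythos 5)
This theorem is imported by the paper as a black box from the cited reference \cite{lubotzky1988ramanujan}; the paper contains no proof of it (the proof environment labeled ``Proof of Theorem \ref{thm:ramanujan}'' near the end of Section \ref{sect:lb} is a mislabeled cross-reference --- its content actually proves the round lower bound for bounded-growth graphs, not this statement). Your sketch is a faithful reconstruction of the Lubotzky--Phillips--Sarnak argument itself: the quaternion Cayley-graph construction, the girth bound via the norm form (your appeal to quadratic reciprocity is what lets the paper's condition $\parens{q/p}=\pm 1$ stand in for the determinant condition $\parens{p/q}=\pm 1$, valid since $p\equiv q\equiv 1 \pmod 4$), and the independence bound via the Hoffman ratio bound combined with the Ramanujan eigenvalue estimate, whose reliance on Deligne's theorem you correctly flag as the one genuinely non-elementary step to be deferred to the reference.
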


In \cref{thm:ramanujan}, $\parens{\dfrac{q}{p}} = q^{\frac{p-1}{2}} \mod p \in \{-1,0,1\}$ is the Legendre symbol.
For any fixed prime $p$ congruent to $1 \mod 4$, the families of graphs $X^{p,q}$ in the above case 1 and case 2 are infinite. In these graphs with a fixed constant $p$, any $o(\log n)$-round algorithm is not able to distinguish between case 1 and case 2, as the $o(\log n)$-radius neighborhood of each vertex in these graphs are $p$-regular trees. 
Therefore, for any constant $\alpha > 0$, there is a number $p$ such that using an indistinguishability argument for the graphs $X^{p,q}$, it is possible to obtain an $\Omega(\log n)$ lower bound for $\alpha$-approximate maximum independent set~\cite{BHKK16}. 

More specifically, for example, if $p = 17$, then the size of a maximum independent set is $n/2$ in case 1 and is smaller than $0.46 \cdot n$ in case 2. This implies that an $0.92$-approximate maximum independent set requires $\Omega(\log n)$ rounds to compute. Intuitively, the lower bound follows from the fact that such an independent set allows us to distinguish between the two cases for $p = 17$. More formally, by an indistinguishability argument, if an $o(\log n)$-round algorithm computes an independent set of size at least $0.92 \cdot n/2 = 0.46 \cdot n$ in expectation for case 1, then the same algorithm also computes an independent set of size at least $0.46 \cdot n$ in expectation for case 2, which is impossible.

The $\Omega(\log n)$ lower bound can be extended to an $\Omega\left(\frac{1}{\epsilon} \cdot \log n \right)$ lower bound for $(1-\epsilon)$-approximation by replacing each edge in $X^{p,q}$ with a path of odd length $\Theta(1/\epsilon)$~\cite{chang2023complexity,FFK21}. For case 1, the subdivision maintains the bipartiteness property, so the size of a maximum independent set is still $n/2$ after subdivision. For case 2, by selecting a small enough leading constant in $\Theta(1/\epsilon)$, we can make sure that the size of a maximum independent set is smaller than $(1-\epsilon) \cdot n$ in case 2. Hence the desired $\Omega\left(\frac{1}{\epsilon} \cdot \log n \right)$ lower bound for $(1-\epsilon)$-approximation of maximum independent set is obtained~\cite{chang2023complexity}.

Now we prove \cref{thm-lb-time-bounded-growth} by extending this lower bound to bounded growth graphs.

\begin{proof}[Proof of \cref{thm-lb-time-bounded-growth}]
To adapt the lower bound to bounded growth graphs, the only modification needed is to restrict ourselves to the case where $q = \poly(1/\epsilon)$.
This ensures that the graph after the subdivision has bounded growth. As $n = \poly(q) = \poly(1/\epsilon)$, we infer that any algorithm that solves the $(1-\epsilon)$-approximate maximum independent set problem  in expectation requires $\Omega\left(\frac{1}{\epsilon} \cdot \log \frac{1}{\epsilon} \right)$ rounds in bounded growth graphs in the $\LOCAL$ model.   
\end{proof}

\bibliography{refs}

\appendix

\section{Distance Distortion of MPX Clustering}
\label{sec:MPXanalysis}

In this section, we show that the $O(\log n)$ distance distortion bound (\cref{thm:MPX}) for MPX clustering shown in~\cite{Chang20bfs} cannot be improved significantly, even on cycles. Recall that for a given scale factor $R$, the MPX clustering of $G=(V,E)$ is an additively weighted Voronoi clustering $\vor(S,W)$ with $S = V$ and $W(v) = \delta_v$, where $\delta_v$ is an exponential random variable with mean $R$. 

\begin{proposition}\label{obs-mpx-1}
For any $R = O(n^{1-\epsilon})$, where $\epsilon > 0$ is a constant, the MPX clustering on a cycle $G=(V,E)$ with parameter $R$ contains a cluster whose diameter is $\Omega(R \log n)$ with probability $1 - 1/\poly(n)$.
\end{proposition}
\begin{proof}
Let $\ell = \lfloor (\epsilon/2) R \ln n \rfloor$. Given a path $P$ of length exactly $\ell$, let $\mathcal{E}_P$ be the event that the difference between the largest and the second largest values in $\{\delta_v \, : \, v \in P\}$ is at least $\ell$. If $\mathcal{E}_P$ occurs, then there is a cluster whose diameter is at least $(\ell-2)/3 = \Omega(R \log n)$, because the vertices in $P$ can only belong to the following three clusters.
\begin{itemize}
    \item The cluster $[v]$, where $v \in P$  attains the highest value of $\delta_v$ among all vertices in $P$.
    \item The cluster $[u]$, where $u \in V \setminus P$ is a neighboring vertex of the left endpoint of $P$.
    \item The cluster $[w]$, where $w \in V \setminus P$ is a neighboring vertex of the right endpoint of $P$.
\end{itemize}
By the memoryless property of the exponential distribution, the difference between the largest and the second largest values in $\{\delta_v \, : \, v \in P\}$ also follows the exponential distribution with mean $R$, so 
\[\Prob{\mathcal{E}_P} = e^{-\ell/R} \geq n^{-\epsilon/2}.\]
Therefore, with probability at least $n^{-\epsilon/2}$, there is a cluster whose diameter is $\Omega(R \log n)$.
To amplify the probability to $1 - 1/\poly(n)$, we just need to apply a Chernoff bound to any collection of $\Omega\left(n^{\epsilon/2} \log n\right)$ vertex-disjoint paths of length exactly $\ell$, as the events $\mathcal{E}_P$ and $\mathcal{E}_Q$ are independent whenever $P$ and $Q$ are vertex-disjoint.
\end{proof}

\cref{obs-mpx-1} implies the existence of two vertices $v$ and $w$ such that $d(v,w) = \Theta(R \log n)$ and $d'([v],[w]) = 0$, where $d'$ is the distance in the cluster graph. The cluster graph distance \emph{underestimates} the distance between $v$ and $w$ by a distortion factor of $\Theta(\log n)$, as the ``ideal'' distance between $[v]$ and $[w]$ is $\Theta(\log n)$ for scale $R$.

\begin{proposition}\label{obs-mpx-2}
For any $2 \leq R = n^{o(1)}$, the MPX clustering on a cycle $G=(V,E)$ with parameter $R$ contains a sequence of $\Omega(\log n / \log R)$ consecutive single-vertex clusters with probability $1 - 1/\poly(n)$. 
\end{proposition}
\begin{proof}
We fix $\tau = 2R \ln R$. Given a subpath $P$ of the cycle $G$, we consider the following two independent events.
\begin{description}
    \item[Event $\mathcal{E}_P^1$:] $\tau-1 \leq \delta_v < \tau$ for all vertices $v \in P$.
    \item[Event $\mathcal{E}_P^2$:]  $\delta_v < \tau + \dist(v,P) - 1$ for all vertices $v \in V \setminus P$, where $\dist(v,P) = \min_{u \in P} \dist(u,v)$.
\end{description}
If both events occur, then every vertex in $P$ forms a single-vertex cluster. If  $P$ contains at least $\ell = \frac{\log n}{6\log R}$ vertices, then 
\[\Prob{\mathcal{E}_P^1} \geq \left( \Prob{ \Exponential(1/R) \in[\tau-1, \tau)} \right)^\ell
\geq R^{-3 \ell} = \frac{1}{\sqrt{n}}.
\]
Regardless of the length of $P$, we always have
\begin{align*}
 \Prob{\mathcal{E}_P^2} &\geq \left(1 - \sum_{i=1}^\infty \Prob{\Exponential(1/R) > \tau + i-1}\right)^2\\
&= \left(1 - \sum_{i=1}^\infty e^{-\frac{\tau + i-1}{R}}\right)^2\\
&= \left(1 - \frac{1}{R^2} \cdot \sum_{i=1}^\infty e^{-\frac{i-1}{R}}\right)^2\\
&\geq \left(1 - \frac{1}{R(1-1/e)} \right)^2\\
&> 0.01. &(R \geq 2)
\end{align*}
Therefore, by considering any path of at least $\ell$ vertices, we infer that with probability at least $\frac{1}{100 \sqrt{n}}$, there is a sequence of at least $\ell$ single-vertex clusters.

To amplify the probability to $1 - 1/\poly(n)$, we make the following observation: With probability $1 - 1/\poly(n)$, $\delta_v = O(R \log n)$ for all $v \in V$. Therefore, we can afford to consider only the vertices $v \in V \setminus P$ with $\dist(v,P) = O(R \log n)$ in the definition of event $\mathcal{E}_P^2$. With this modification, for any two paths $P$ and $Q$ that are separated by $\Omega(R \log n)$ vertices, the events $\mathcal{E}_P^1$, $\mathcal{E}_P^2$, $\mathcal{E}_Q^1$, and $\mathcal{E}_Q^2$ are independent. Therefore, to amplify the probability to $1 - 1/\poly(n)$, we can apply a Chernoff bound to any collection of $\Omega(\sqrt{n} \log n)$ paths of at least $\ell$ vertices where any two of these paths are separated by $\Omega(R \log n)$ vertices.
\end{proof}

When $R = \Theta(\log n / \log \log n)$, \cref{obs-mpx-2} implies the existence of two vertices $v$ and $w$ such that both $d(v,w)$ and $d'([v],[w])$ are $\Theta(\log n / \log \log n)$, where $d'$ is the distance in the cluster graph. The cluster graph distance \emph{overestimates} the distance between $v$ and $w$ by a distortion factor of $\Theta(\log n / \log \log n)$, as the ``ideal'' distance between $[v]$ and $[w]$ is $\Theta(1)$ for scale $R$.

\section{Graphs With Uniformly Bounded Independence}
\label{appsec:bdi-eg}

In this section, we exhibit some graph classes that have the uniform bounded independence property.

\subsection*{Bounded Growth Graphs}

Recall from \cref{def:bounded_growth} that a graph $G = (V,E)$ is of bounded growth if there are constants $\beta, \ddim >0$ such that for all $v \in V$, for  all $r$,  $|\nbd{r}{v}| \le \beta r^{\ddim}$.
If a graph $G$ has bounded growth, then clearly it also has bounded independence, but we are interested in graphs for which not only $G$ but also all of its power graphs have bounded independence with the same parameters. This does not follow from bounded growth, but does follow from something just a little stronger.

\begin{definition} 
A graph $G = (V,E)$ is of {\bf \emph{strongly} bounded growth} if there are constants $\alpha, \beta, \ddim >0$ such that for all $v \in V$, for all $r>0$,  $\min\{\alpha r^{\ddim}, n\} \le |\nbd{r}{v}| \le \beta r^{\ddim}$.
\end{definition} 
 
\begin{proposition}\label{lem-strongly-bounded-growth}
    A graph $G$ with strongly bounded growth has uniformly bounded independence.
\end{proposition}

\begin{proof}
    Since $G$ has strongly bounded growth, there are constants $\alpha, \beta, \ddim >0$ such that for all $v \in V$, for all $r>0$,
    \[
    \min\{\alpha r^{\ddim}, n\} \le |\nbd{r}{v}| \le \beta r^{\ddim}.
    \]
    Let $\gamma = 3^{\ddim}\beta /\alpha$. We show that $G$ has uniformly bounded independence with parameters $\gamma$ and $\ddim$.
    
    Fix $v \in V$, $R >0$, and $r>0$.  Let $S \subseteq \nbd{Rr}{v}$ be an independent set in $G^{\le R}$.
    Then the balls of radius $R/2$ around vertices in $S$ are pairwise disjoint, \emph{i.e.}, for all $x, y \in S$, $\nbd{R/2}{x} \cap \nbd{R/2}{y} = \emptyset$. By the strong bounded growth property, each such ball has size at least $\min\{\alpha (R/2)^{\ddim}, n\}$, and therefore, 
    \begin{equation}\label{eq:bdgrLB}
    \left| \bigcup_{x \in S} \nbd{R/2}{x} \right| \ge |S| \cdot \min\{\alpha (R/2)^{\ddim}, n\}.
    \end{equation}
    On the other hand, since $S \subseteq \nbd{Rr}{v}$, by the triangle inequality, the set $\bigcup_{x \in S} \nbd{R/2}{x}$ is contained in the ball of radius $R(r+1/2)$ around $v$. Using the upper bound for strong bounded growth and the observation that $r+1/2 \le 3r/2$ as $r\ge 1$ (and also the fact that the total number of vertices is $n$), we see that 
    \begin{equation}\label{eq:bdgrUB}
    \left| \bigcup_{x \in S} \nbd{R/2}{x} \right| \le \min\left\{ \beta R^{\ddim}\left(r+\frac12\right)^{\ddim}, n\right\} \le \min\{ \beta R^{\ddim}(3r/2)^{\ddim}, n\}.
    \end{equation}
    Combining \eqref{eq:bdgrLB} and \eqref{eq:bdgrUB}, we get
    \[
        |S| \, \le \frac{\min\{ \beta R^{\ddim}(3r/2)^{\ddim}, n\}}{\min\{\alpha (R/2)^{\ddim}, n\}}
        \, \le \frac{\beta R^{\ddim}(3r/2)^{\ddim}}{\alpha (R/2)^{\ddim}}
        \, = (3r)^{\ddim} \beta/\alpha 
    \]
    where the second inequality follows from a case analysis of whether $n$ is below, between or above 
    $\alpha (R/2)^{\ddim} < \beta R^{\ddim}(3r/2)^{\ddim}$.
    Setting $\gamma = 3^{\ddim} \beta / \alpha$ completes the proof.
\end{proof}

Thus all graphs with strongly bounded growth have uniformly bounded independence. In particular, this includes paths, cycles, and $k$-dimensional grids and lattices for any constant $k$.

\subsection*{Geometric Graphs} 


Let $G = (V,E)$, where $V$ is a subset of a metric space, and each edge $\{v,w\}$ is present if and only if the (metric) distance between $v$ and $w$ is less than a specified threshold $r$. Then $G$ is called a \emph{geometric graph}. Of particular interest is the case when the metric space in question is $\R^{\ddim}$ with the Euclidean distance. We will call these Euclidean geometric graphs. When $\ddim =2$ and the threshold distance $r = 1$, we get the usual model of \emph{unit disk graphs}.

A popular special case of geometric graphs is when $V$ is chosen somehow randomly, such as by a Poisson point process.  In this case, $G$ is often referred to as a \emph{random geometric graph.}    

\input{comb}

It is easy to see that Euclidean geometric graphs have bounded independence, since the number of Euclidean balls of radius 1/2 needed to cover a Euclidean ball of radius $r$ (centered at a vertex $v$) in $\R^{\ddim}$ is  $\Theta_k(r^k)$,  
and each such ball can contain at most one vertex of an independent set $S$. 
We would like to be able to say that Euclidean geometric graphs also have \emph{uniformly} 
bounded independence, but unfortunately this is not true as evidenced by the \emph{comb graph} (\Cref{fig:comb}). In such a graph $G$, there exists a vertex $v$ whose ball of radius $r$ in $G^{\le R}$ covers $\Theta(rR)$ paths of length $\Theta(rR)$ in $G$, each of which can contribute $\Theta(r)$ vertices to an independent set in $G^{\le R}$, so the ball of radius $r$ around $v$ in $G^{\le R}$ contains an independent set of size $\Theta(Rr^2)$. The dependence on $R$ means that $G$ lacks \emph{uniformly} bounded independence.
The problem arises because there are  vertices that are nearby in Euclidean distance 
but arbitrarily far away in graphical distance in the comb graph.

In contrast, \emph{random} geometric graphs almost surely do not display this behavior, at least when the random process generating the graph has a sufficient rate. 
Specifically, combining \cite[Theorem 4]{DDHM22} for the case of $\ddim = 2$ and the discussion in \cite[Section 5.1]{DDHM22} for higher dimensions, we have the following.

\begin{theorem}[\cite{DDHM22}]
    If $V$ is a uniformly randomly chosen set of $n$ vertices chosen from 
    a $\ddim$-dimensional cube of total volume $n$, and 
    $G$ is the radius-$r$ disk graph, where $r = \omega(\sqrt{\log n})$.  Then, with probability $1 - O(1/n^2)$, Euclidean distances
    equal graphical distance, scaled by a factor $r$, up to rounding and an error factor of $1 + o(1)$.  In particular, for all $v \ne w \in V$
    \[
    d_G(v,w) \le 2 \|v - w\|/r
    \]
\end{theorem}

Note that by appropriate change of scale in the ambient Euclidean space, we may assume that the threshold distance for edges in a geometric graph is $r=1$.

We will show that even when the positions of the vertices of the geometric graph are selected adversarially, as long as there are no large ``holes'' in the ambient space, the graphical distance is bounded by a constant multiple of the Euclidean distance, and this is sufficient to guarantee uniformly bounded independence.

\begin{definition}
Let $G=(V,E)$ be a geometric graph in $\R^{\ddim}$ where for $u,v \in V$, $\{u,v\} \in E$ if $\| u-v\| \le 1$.
We say that $G$ is $\alpha$-dense if the Euclidean balls of radius $1/\alpha$ centered on $V$ cover the convex hull of $V$. Equivalently, every Euclidean ball of radius $1/\alpha$ centered in the convex hull of $V$ contains at least one point of $V$.
\end{definition}

\begin{proposition}
    If a geometric graph $G$ in $\R^{\ddim}$ is $4$-dense,  then for all $u,v \in V$ with $\|u-v\| \, >1$,
    \[
    d_G(u,v) \le 2 \|u-v\|.
    \]
\end{proposition}

\begin{proof}
    Let $u, v \in V$, with $\|u-v\| \, >1$, so that $u$ and $v$ are not adjacent in $G$. Consider a sequence of points $x_0, x_1 \dots x_{\ell}$ on the the line joining $u$ and $v$, such that $x_0$ is at distance 1/4 from $u$ and for $i\ge 1$,  $x_i$ is at distance 1/2 from $x_{i-1}$.
    Here $\ell = \lceil 2\|u-v\| - 1\rceil$ (see \Cref{fig:4dense}).
    All the $x_i$s are in the convex hull of $V$, so by the 4-density of $G$, the Euclidean balls of radius 1/4 around them each contain some point from $V$.  Let $u= w_0, w_1, \dots, w_{\ell-1}, w_{\ell} = v$ be these points. By the triangle inequality, $\|w_i - w_{i-1}\| \le 1$. Thus $(w_0, w_1, \dots, w_{\ell})$ is a path of length $\ell$ from $u$ to $v$ in $G$. It follows that 
    \[
    d_G(u,v) \le \ell \le  \lceil 2\|u-v\| - 1\rceil \le 2\|u-v\|. \qedhere
    \] 
\end{proof}

\begin{figure}[ht]
    \centering
\tikzset{every picture/.style={line width=0.75pt}} 

\begin{tikzpicture}[x=0.75pt,y=0.75pt,yscale=-1,xscale=1]

\draw  [color={rgb, 255:red, 0; green, 0; blue, 0 }  ,draw opacity=1 ][fill={rgb, 255:red, 0; green, 0; blue, 0 }  ,fill opacity=1 ] (254,81) .. controls (254,79.34) and (255.34,78) .. (257,78) .. controls (258.66,78) and (260,79.34) .. (260,81) .. controls (260,82.66) and (258.66,84) .. (257,84) .. controls (255.34,84) and (254,82.66) .. (254,81) -- cycle ;
\draw  [color={rgb, 255:red, 0; green, 0; blue, 0 }  ,draw opacity=1 ][fill={rgb, 255:red, 0; green, 0; blue, 0 }  ,fill opacity=1 ] (514,111) .. controls (514,109.34) and (515.34,108) .. (517,108) .. controls (518.66,108) and (520,109.34) .. (520,111) .. controls (520,112.66) and (518.66,114) .. (517,114) .. controls (515.34,114) and (514,112.66) .. (514,111) -- cycle ;
\draw    (137,111) -- (517,111) ;
\draw  [color={rgb, 255:red, 80; green, 227; blue, 194 }  ,draw opacity=1 ][fill={rgb, 255:red, 80; green, 227; blue, 194 }  ,fill opacity=1 ][line width=1.5]  (179,111) .. controls (179,109.34) and (180.34,108) .. (182,108) .. controls (183.66,108) and (185,109.34) .. (185,111) .. controls (185,112.66) and (183.66,114) .. (182,114) .. controls (180.34,114) and (179,112.66) .. (179,111) -- cycle ;
\draw  [color={rgb, 255:red, 80; green, 227; blue, 194 }  ,draw opacity=1 ][line width=1.5]  (137,111) .. controls (137,86.15) and (157.15,66) .. (182,66) .. controls (206.85,66) and (227,86.15) .. (227,111) .. controls (227,135.85) and (206.85,156) .. (182,156) .. controls (157.15,156) and (137,135.85) .. (137,111) -- cycle ;

\draw  [color={rgb, 255:red, 80; green, 227; blue, 194 }  ,draw opacity=1 ][fill={rgb, 255:red, 80; green, 227; blue, 194 }  ,fill opacity=1 ][line width=1.5]  (269,111) .. controls (269,109.34) and (270.34,108) .. (272,108) .. controls (273.66,108) and (275,109.34) .. (275,111) .. controls (275,112.66) and (273.66,114) .. (272,114) .. controls (270.34,114) and (269,112.66) .. (269,111) -- cycle ;
\draw  [color={rgb, 255:red, 80; green, 227; blue, 194 }  ,draw opacity=1 ][line width=1.5]  (227,111) .. controls (227,86.15) and (247.15,66) .. (272,66) .. controls (296.85,66) and (317,86.15) .. (317,111) .. controls (317,135.85) and (296.85,156) .. (272,156) .. controls (247.15,156) and (227,135.85) .. (227,111) -- cycle ;

\draw  [color={rgb, 255:red, 80; green, 227; blue, 194 }  ,draw opacity=1 ][fill={rgb, 255:red, 80; green, 227; blue, 194 }  ,fill opacity=1 ][line width=1.5]  (359,111) .. controls (359,109.34) and (360.34,108) .. (362,108) .. controls (363.66,108) and (365,109.34) .. (365,111) .. controls (365,112.66) and (363.66,114) .. (362,114) .. controls (360.34,114) and (359,112.66) .. (359,111) -- cycle ;
\draw  [color={rgb, 255:red, 80; green, 227; blue, 194 }  ,draw opacity=1 ][line width=1.5]  (317,111) .. controls (317,86.15) and (337.15,66) .. (362,66) .. controls (386.85,66) and (407,86.15) .. (407,111) .. controls (407,135.85) and (386.85,156) .. (362,156) .. controls (337.15,156) and (317,135.85) .. (317,111) -- cycle ;

\draw  [color={rgb, 255:red, 80; green, 227; blue, 194 }  ,draw opacity=1 ][fill={rgb, 255:red, 80; green, 227; blue, 194 }  ,fill opacity=1 ][line width=1.5]  (449,111) .. controls (449,109.34) and (450.34,108) .. (452,108) .. controls (453.66,108) and (455,109.34) .. (455,111) .. controls (455,112.66) and (453.66,114) .. (452,114) .. controls (450.34,114) and (449,112.66) .. (449,111) -- cycle ;
\draw  [color={rgb, 255:red, 80; green, 227; blue, 194 }  ,draw opacity=1 ][line width=1.5]  (407,111) .. controls (407,86.15) and (427.15,66) .. (452,66) .. controls (476.85,66) and (497,86.15) .. (497,111) .. controls (497,135.85) and (476.85,156) .. (452,156) .. controls (427.15,156) and (407,135.85) .. (407,111) -- cycle ;

\draw  [color={rgb, 255:red, 80; green, 227; blue, 194 }  ,draw opacity=1 ][fill={rgb, 255:red, 80; green, 227; blue, 194 }  ,fill opacity=1 ][line width=1.5]  (539,111) .. controls (539,109.34) and (540.34,108) .. (542,108) .. controls (543.66,108) and (545,109.34) .. (545,111) .. controls (545,112.66) and (543.66,114) .. (542,114) .. controls (540.34,114) and (539,112.66) .. (539,111) -- cycle ;
\draw  [color={rgb, 255:red, 80; green, 227; blue, 194 }  ,draw opacity=1 ][line width=1.5]  (497,111) .. controls (497,86.15) and (517.15,66) .. (542,66) .. controls (566.85,66) and (587,86.15) .. (587,111) .. controls (587,135.85) and (566.85,156) .. (542,156) .. controls (517.15,156) and (497,135.85) .. (497,111) -- cycle ;

\draw  [color={rgb, 255:red, 0; green, 0; blue, 0 }  ,draw opacity=1 ][fill={rgb, 255:red, 0; green, 0; blue, 0 }  ,fill opacity=1 ] (134,111) .. controls (134,109.34) and (135.34,108) .. (137,108) .. controls (138.66,108) and (140,109.34) .. (140,111) .. controls (140,112.66) and (138.66,114) .. (137,114) .. controls (135.34,114) and (134,112.66) .. (134,111) -- cycle ;
\draw  [color={rgb, 255:red, 0; green, 0; blue, 0 }  ,draw opacity=1 ][fill={rgb, 255:red, 0; green, 0; blue, 0 }  ,fill opacity=1 ] (384,136) .. controls (384,134.34) and (385.34,133) .. (387,133) .. controls (388.66,133) and (390,134.34) .. (390,136) .. controls (390,137.66) and (388.66,139) .. (387,139) .. controls (385.34,139) and (384,137.66) .. (384,136) -- cycle ;
\draw  [color={rgb, 255:red, 0; green, 0; blue, 0 }  ,draw opacity=1 ][fill={rgb, 255:red, 0; green, 0; blue, 0 }  ,fill opacity=1 ] (424,91) .. controls (424,89.34) and (425.34,88) .. (427,88) .. controls (428.66,88) and (430,89.34) .. (430,91) .. controls (430,92.66) and (428.66,94) .. (427,94) .. controls (425.34,94) and (424,92.66) .. (424,91) -- cycle ;
\draw [line width=1.5]    (137,111) -- (257,81) ;
\draw [line width=1.5]    (257,81) -- (387,136) ;
\draw [line width=1.5]    (387,136) -- (427,91) ;
\draw [line width=1.5]    (427,91) -- (517,111) ;
\draw [color={rgb, 255:red, 0; green, 0; blue, 0 }  ,draw opacity=1 ]   (137,177) -- (182,177) ;
\draw [shift={(182,177)}, rotate = 180] [color={rgb, 255:red, 0; green, 0; blue, 0 }  ,draw opacity=1 ][line width=0.75]    (0,5.59) -- (0,-5.59)   ;
\draw [shift={(137,177)}, rotate = 180] [color={rgb, 255:red, 0; green, 0; blue, 0 }  ,draw opacity=1 ][line width=0.75]    (0,5.59) -- (0,-5.59)   ;

\draw (145,179.4) node [anchor=north west][inner sep=0.75pt]  [font=\footnotesize]  {$1/4$};
\draw (115,90.4) node [anchor=north west][inner sep=0.75pt]    {$u$};
\draw (519,90.4) node [anchor=north west][inner sep=0.75pt]    {$v$};

\end{tikzpicture}
\caption{A sequence of balls of radius $1/4$ covering the line joining $u$ and $v$ in $\R^{\ddim}$. Since each of these must have a vertex in it, the graphical distance is at most twice the Euclidean distance.}
    \label{fig:4dense}
\end{figure}

In two dimensions, we can do a little better, in that we do not need the vertex set to be as dense to control the graphical distances.

\begin{proposition}\label{lem:2d}
    If a unit disk graph $G$ in $\R^2$ is $2\sqrt{2}$-dense, then for all $u,v \in V$ with $\|u-v\| \, >1$,
    \[
    d_G(u,v) \le 2\sqrt{2} \, \|u-v\|.
    \]
\end{proposition}

\begin{proof}
Since $G$ is $2\sqrt{2}$-dense, every ball of radius $\frac{1}{2\sqrt{2}}$ centered in the convex 
hull of $V$ contains a point of $V$. Since such a ball is inscribed in a square of unit diagonal, 
it follows that every square of unit diagonal centered in the convex 
hull of $V$ contains a point of $V$. 
    Let $u, v \in V$, with $\|u-v\| \, >1$, so that $u$ and $v$ are not adjacent in $G$. Consider the line joining $u$ and $v$ (see \Cref{fig:adaptive}). 

In this case, we define a sequence of vertices $u = w_0, w_1, \dots, w_{\ell} = v$ as follows.  Let $w_0 = u$.  Given $w_i$ for $i \ge 0$, define $S_i$ to be the square of unit diagonal, centered on the line from $u$ to $v$, with $w_i$ on its edge perpendicular to $\overline{uv}$.  Let $w_{i+1}$ be a point in $S_i$, whose projection onto the line $\overline{uv}$ is as far towards $v$ as possible.  Then the sequence $w_0, w_1, \dots, w_{\ell}$ is a path in $G$ from $u$ to $v$.  Since, for each $i \ge 0$, $w_{i+2}$ is not in $S_i$, it follows that $w_{i+2}$ is at least $1/\sqrt{2}$ farther along line $\overline{uv}$ than $w_i$, so 
\[\|u-v\| \, \geq \left\lfloor \frac{\ell+1}{2}\right\rfloor \cdot \frac{1}{\sqrt{2}} \geq \frac{\ell}{2\sqrt{2}} \geq \frac{\dist_G(u,v)}{2\sqrt{2}}. \qedhere\]
\end{proof}

 \begin{figure}[ht]
 \centering
 \resizebox{0.9\linewidth}{!}{
 
\tikzset{
pattern size/.store in=\mcSize, 
pattern size = 5pt,
pattern thickness/.store in=\mcThickness, 
pattern thickness = 0.3pt,
pattern radius/.store in=\mcRadius, 
pattern radius = 1pt}
\makeatletter
\pgfutil@ifundefined{pgf@pattern@name@_k398lmzgf}{
\pgfdeclarepatternformonly[\mcThickness,\mcSize]{_k398lmzgf}
{\pgfqpoint{0pt}{-\mcThickness}}
{\pgfpoint{\mcSize}{\mcSize}}
{\pgfpoint{\mcSize}{\mcSize}}
{
\pgfsetcolor{\tikz@pattern@color}
\pgfsetlinewidth{\mcThickness}
\pgfpathmoveto{\pgfqpoint{0pt}{\mcSize}}
\pgfpathlineto{\pgfpoint{\mcSize+\mcThickness}{-\mcThickness}}
\pgfusepath{stroke}
}}
\makeatother
\tikzset{every picture/.style={line width=0.75pt}} 

\begin{tikzpicture}[x=0.75pt,y=0.75pt,yscale=-1,xscale=1]

\draw  [fill={rgb, 255:red, 0; green, 0; blue, 0 }  ,fill opacity=1 ] (94,102) .. controls (94,100.34) and (95.34,99) .. (97,99) .. controls (98.66,99) and (100,100.34) .. (100,102) .. controls (100,103.66) and (98.66,105) .. (97,105) .. controls (95.34,105) and (94,103.66) .. (94,102) -- cycle ;
\draw  [fill={rgb, 255:red, 0; green, 0; blue, 0 }  ,fill opacity=1 ] (594,102) .. controls (594,100.34) and (595.34,99) .. (597,99) .. controls (598.66,99) and (600,100.34) .. (600,102) .. controls (600,103.66) and (598.66,105) .. (597,105) .. controls (595.34,105) and (594,103.66) .. (594,102) -- cycle ;
\draw    (97,102) -- (597,102) ;
\draw  [color={rgb, 255:red, 208; green, 2; blue, 27 }  ,draw opacity=1 ][dash pattern={on 5.63pt off 4.5pt}][line width=1.5]  (187,42) -- (307,42) -- (307,162) -- (187,162) -- cycle ;
\draw  [fill={rgb, 255:red, 0; green, 0; blue, 0 }  ,fill opacity=1 ] (269,140) .. controls (269,138.34) and (270.34,137) .. (272,137) .. controls (273.66,137) and (275,138.34) .. (275,140) .. controls (275,141.66) and (273.66,143) .. (272,143) .. controls (270.34,143) and (269,141.66) .. (269,140) -- cycle ;
\draw  [fill={rgb, 255:red, 0; green, 0; blue, 0 }  ,fill opacity=1 ] (184,60) .. controls (184,58.34) and (185.34,57) .. (187,57) .. controls (188.66,57) and (190,58.34) .. (190,60) .. controls (190,61.66) and (188.66,63) .. (187,63) .. controls (185.34,63) and (184,61.66) .. (184,60) -- cycle ;
\draw  [color={rgb, 255:red, 245; green, 166; blue, 35 }  ,draw opacity=1 ][line width=0.75]  (272,42) -- (392,42) -- (392,162) -- (272,162) -- cycle ;
\draw  [fill={rgb, 255:red, 0; green, 0; blue, 0 }  ,fill opacity=1 ] (334,85) .. controls (334,83.34) and (335.34,82) .. (337,82) .. controls (338.66,82) and (340,83.34) .. (340,85) .. controls (340,86.66) and (338.66,88) .. (337,88) .. controls (335.34,88) and (334,86.66) .. (334,85) -- cycle ;
\draw  [color={rgb, 255:red, 208; green, 2; blue, 27 }  ,draw opacity=1 ][dash pattern={on 5.63pt off 4.5pt}][line width=1.5]  (337,42) -- (457,42) -- (457,162) -- (337,162) -- cycle ;
\draw  [fill={rgb, 255:red, 0; green, 0; blue, 0 }  ,fill opacity=1 ] (399,152) .. controls (399,150.34) and (400.34,149) .. (402,149) .. controls (403.66,149) and (405,150.34) .. (405,152) .. controls (405,153.66) and (403.66,155) .. (402,155) .. controls (400.34,155) and (399,153.66) .. (399,152) -- cycle ;
\draw  [fill={rgb, 255:red, 0; green, 0; blue, 0 }  ,fill opacity=1 ] (497,70) .. controls (497,68.34) and (498.34,67) .. (500,67) .. controls (501.66,67) and (503,68.34) .. (503,70) .. controls (503,71.66) and (501.66,73) .. (500,73) .. controls (498.34,73) and (497,71.66) .. (497,70) -- cycle ;
\draw  [color={rgb, 255:red, 245; green, 166; blue, 35 }  ,draw opacity=1 ] (401,42) -- (521,42) -- (521,162) -- (401,162) -- cycle ;
\draw  [color={rgb, 255:red, 208; green, 2; blue, 27 }  ,draw opacity=1 ][dash pattern={on 5.63pt off 4.5pt}][line width=1.5]  (501,42) -- (621,42) -- (621,162) -- (501,162) -- cycle ;
\draw [color={rgb, 255:red, 74; green, 144; blue, 226 }  ,draw opacity=1 ][line width=1.5]    (97,102) -- (187,60) ;
\draw [color={rgb, 255:red, 74; green, 144; blue, 226 }  ,draw opacity=1 ][line width=1.5]    (187,60) -- (272,140) ;
\draw [color={rgb, 255:red, 74; green, 144; blue, 226 }  ,draw opacity=1 ][line width=1.5]    (337,85) -- (272,140) ;
\draw [color={rgb, 255:red, 74; green, 144; blue, 226 }  ,draw opacity=1 ][line width=1.5]    (337,85) -- (402,152) ;
\draw [color={rgb, 255:red, 74; green, 144; blue, 226 }  ,draw opacity=1 ][line width=1.5]    (500,70) -- (402,152) ;
\draw [color={rgb, 255:red, 74; green, 144; blue, 226 }  ,draw opacity=1 ][line width=1.5]    (500,70) -- (597,102) ;
\draw  [color={rgb, 255:red, 245; green, 166; blue, 35 }  ,draw opacity=1 ][line width=0.75]  (97,42) -- (217,42) -- (217,162) -- (97,162) -- cycle ;

\draw  [fill={rgb, 255:red, 0; green, 0; blue, 0 }  ,fill opacity=1 ] (233,310) .. controls (233,308.34) and (234.34,307) .. (236,307) .. controls (237.66,307) and (239,308.34) .. (239,310) .. controls (239,311.66) and (237.66,313) .. (236,313) .. controls (234.34,313) and (233,311.66) .. (233,310) -- cycle ;
\draw    (236,310) -- (496,310) ;
\draw  [fill={rgb, 255:red, 19; green, 120; blue, 235 }  ,fill opacity=1 ] (303,250) .. controls (303,248.34) and (304.34,247) .. (306,247) .. controls (307.66,247) and (309,248.34) .. (309,250) .. controls (309,251.66) and (307.66,253) .. (306,253) .. controls (304.34,253) and (303,251.66) .. (303,250) -- cycle ;
\draw  [color={rgb, 255:red, 128; green, 128; blue, 128 }  ,draw opacity=1 ][fill={rgb, 255:red, 128; green, 128; blue, 128 }  ,fill opacity=1 ] (259,332) .. controls (259,330.34) and (260.34,329) .. (262,329) .. controls (263.66,329) and (265,330.34) .. (265,332) .. controls (265,333.66) and (263.66,335) .. (262,335) .. controls (260.34,335) and (259,333.66) .. (259,332) -- cycle ;
\draw  [color={rgb, 255:red, 208; green, 2; blue, 27 }  ,draw opacity=1 ][dash pattern={on 5.63pt off 4.5pt}][line width=1.5]  (306,235) -- (456,235) -- (456,385) -- (306,385) -- cycle ;
\draw  [fill={rgb, 255:red, 0; green, 0; blue, 0 }  ,fill opacity=1 ] (396,365) .. controls (396,363.34) and (397.34,362) .. (399,362) .. controls (400.66,362) and (402,363.34) .. (402,365) .. controls (402,366.66) and (400.66,368) .. (399,368) .. controls (397.34,368) and (396,366.66) .. (396,365) -- cycle ;
\draw  [color={rgb, 255:red, 245; green, 166; blue, 35 }  ,draw opacity=1 ] (236,235) -- (386,235) -- (386,385) -- (236,385) -- cycle ;
\draw  [color={rgb, 255:red, 128; green, 128; blue, 128 }  ,draw opacity=1 ][fill={rgb, 255:red, 128; green, 128; blue, 128 }  ,fill opacity=1 ] (279,293) .. controls (279,291.34) and (280.34,290) .. (282,290) .. controls (283.66,290) and (285,291.34) .. (285,293) .. controls (285,294.66) and (283.66,296) .. (282,296) .. controls (280.34,296) and (279,294.66) .. (279,293) -- cycle ;
\draw  [draw opacity=0][pattern=_k398lmzgf,pattern size=6pt,pattern thickness=0.75pt,pattern radius=0pt, pattern color={rgb, 255:red, 155; green, 155; blue, 155}] (306,234.5) -- (386,234.5) -- (386,385) -- (306,385) -- cycle ;
\draw [color={rgb, 255:red, 4; green, 114; blue, 250 }  ,draw opacity=1 ][line width=1.5]    (306,250) -- (236,310) ;
\draw [color={rgb, 255:red, 4; green, 118; blue, 248 }  ,draw opacity=1 ][line width=1.5]    (306,250) -- (399,365) ;

\draw (206,293.4) node [anchor=north west][inner sep=0.75pt]  [font=\large]  {$u$};
\draw (76,102.4) node [anchor=north west][inner sep=0.75pt]  [font=\large]  {$u$};
\draw (599,105.4) node [anchor=north west][inner sep=0.75pt]  [font=\large]  {$v$};

\end{tikzpicture}}
\caption{Illustration of the proof of \Cref{lem:2d}. In two dimensions we can construct the path adaptively, using squares with unit diagonal, each of which starts just past the previous point on the path. Since the next point cannot lie in the shaded region, any two consecutive steps are guaranteed to make at least one square's width of progress in the horizontal direction.}
    \label{fig:adaptive}
\end{figure}


\begin{proposition}
    Let $G$ be a geometric graph in $\R^{\ddim}$, such that for all non-adjacent $u,v \in V$, 
    \[d_G(u,v) \le C \|u-v\| \]
    for some constant $C > 0$. Then $G$ has uniformly bounded independence. 
\end{proposition}

\begin{proof}
Fix $R$ and $u \in V$. 
    suppose $S \subseteq B_{rR}(u)$ is an independent set of $G^{\le R}$.  
    Then for all $v, w \in S$, $d_G(v,w) > R$. By hypothesis, it follows that for all $v, w \in S$, $\|v-w\| \, > R/C$. Thus the Euclidean balls of radius $R/(2C)$
    centered at vertices of $S$ are pairwise disjoint.  Moreover, these are all contained within the Euclidean ball of radius  $\left(r+\frac{1}{2C}\right)R$ around $u$ (by the triangle inequality, and since for any geometric graph, the Euclidean distance is at most the graphical distance.)
    It follows that
    \[
    |S| \mathcal{V}_k \left(\frac{R}{2C}\right) ^{\ddim} \le \mathcal{V}_k \left(r+\frac{1}{2C}\right)^{\ddim} R^{\ddim},
    \]
where $\mathcal{V}_k$ is the volume of the unit (Euclidean) ball in $\R^{\ddim}$.
    Thus $|S| \, \le (2Cr+1)^{\ddim}$. Since the size of $S$ grows polynomially in $r$, but does not depend on $R$, we have established that $G$ has uniformly bounded independence.
\end{proof}

\end{document}

\section{Graphs with Uniformly Bounded Independence}\label{sec:bdi-eg}

\subsection*{Bounded Growth Graphs}

\begin{definition} 
A graph $G = (V,E)$ is bounded growth if there are constants $\beta, \ddim >0$ such that for all $v \in V$, for  all $r$,  $|\nbd{r}{v}| \le \beta r^{\ddim}$.
\end{definition} 

\yijun{Examples of distributed computing papers studying bounded growth graphs:~\cite{abraham2005name,schneider2011distributed}.}

\begin{definition} 
A graph $G = (V,E)$ is \emph{strongly} bounded growth if there are constants $\alpha, \beta, d >0$ such that for all $v \in V$, for all $r>0$,  $\min\{\alpha r^{\ddim}, n\} \le |\nbd{r}{v}| \le \beta r^{\ddim}$.
\end{definition} 
 
\begin{lemma}\label{lem-strongly-bounded-growth}
    A graph $G$ with strongly bounded growth has uniformly bounded independence
\end{lemma}

\begin{proof}
    Since $G$ is strongly bounded growth, there are constants $\alpha, \beta, \ddim >0$ such that for all $v \in V$, for all $r>0$,
    \[
    \min\{\alpha r^{\ddim}, n\} \le |\nbd{r}{v}| \le \beta r^{\ddim}.
    \]
    Let $\gamma = 3^{\ddim}\beta /\alpha$. We will show that $G$ has uniformly bounded independence with parameters $\gamma$ and $\ddim$.
    
    Fix $v \in V$, $R >0$ and $r>0$.  Let $S \subset \nbd{Rr}{v}$ be an independent set in $G^{\le R}$.
    Then the balls of radius $R/2$ around vertices in $S$ are pairwise disjoint, \emph{i.e.}, for all $x, y \in S$, $\nbd{R/2}{x} \cap \nbd{R/2}{y} = \emptyset$. By the strong bounded growth property, each such ball has size at least $\min\{\alpha (R/2)^{\ddim}, n\}$, and therefore, 
    \begin{equation}\label{eq:bdgrLB}
    \left| \bigcup_{x \in S} \nbd{R/2}{x} \right| \ge |S| \cdot \min\{\alpha (R/2)^{\ddim}, n\}.
    \end{equation}
    On the other hand, since $S \subset \nbd{Rr}{v}$, by the triangle inequality, the set $\bigcup_{x \in S} \nbd{R/2}{x}$ is contained in the ball of radius $R(r+1/2)$ around $v$. Using the upper bound for strong bounded growth and noting that $r+1/2 \le 3r/2$ since $r\ge 1$, (and also that the total number of vertices is $n$) we see that 
    \begin{equation}\label{eq:bdgrUB}
    \left| \bigcup_{x \in S} \nbd{R/2}{x} \right| \le \min\left\{ \beta R^{\ddim}\left(r+\frac12\right)^{\ddim}, n\right\} \le \min\{ \beta R^{\ddim}(3r/2)^{\ddim}, n\}.
    \end{equation}
    Combining \eqref{eq:bdgrLB} and \eqref{eq:bdgrUB}, we get
    \[
        |S| \, \le \frac{\min\{ \beta R^{\ddim}(3r/2)^{\ddim}, n\}}{\min\{\alpha (R/2)^{\ddim}, n\}}
        \, \le \frac{\beta R^{\ddim}(3r/2)^{\ddim}}{\alpha (R/2)^{\ddim}}
        \, = (3r)^{\ddim} \beta/\alpha 
    \]
    where the second inequality follows from a case analysis of whether $n$ is below, between or above 
    $\alpha (R/2)^{\ddim} < \beta R^{\ddim}(3r/2)^{\ddim}$.
    Setting $\gamma = 3^{\ddim} \beta / \alpha$ completes the proof.
\end{proof}

\subsection*{Geometric Graphs} 


Let $G = (V,E)$, where $V$ is a subset of a metric space, and each edge $\{v,w\}$ is present if and only if the (metric) distance between $v$ and $w$ is less than a specified threshold $r$. Then $G$ is called a \emph{geometric graph}. Of particular interest is the case when the metric space in question is $\R^{\ddim}$ with the Euclidean distance. We will call these Euclidean geometric graphs. When $\ddim =2$ and the threshold distance $r = 1$, we get the usual model of \emph{unit disk graphs}.

A popular special case of geometric graphs is when $V$ is chosen somehow randomly, such as by a Poisson point process.  In this case, $G$ is often referred to as a \emph{random geometric graph.}    

\input{comb}

It is easy to see that Euclidean geometric graphs have bounded independence, since the number of Euclidean balls of radius 1 needed to cover a Euclidean ball of radius $r$ (centered at a vertex $v$) in $\R^{\ddim}$ is  $\Theta_k(r^k)$,  \tom{changed: was ``constant that depends only on the dimension"} and each such ball can contain at most one vertex of an independent set $S$ that is contained in the $r$-neighborhood of $v$. 
We would like to be able to say that Euclidean geometric graphs also have \emph{uniformly} 
bounded independence, but unfortunately this is not true as evidenced by the comb graph (see 
Figure~\ref{fig:comb}.) The problem is that there can be vertices that are nearby in Euclidean distance 
but arbitrarily far away in the graph.  \tom{For the comb graph example, say what the large independent set is, and give equation for its volume.} \varsha{you do it}

In contrast, random geometric graphs almost surely do not display this behavior, at least when the random process generating the graph has a sufficient rate. 
Specifically, combining \cite[Theorem 4]{DDHM22} for the case of $\ddim = 2$ and the discussion  in \cite[Section 5.1]{DDHM22} for higher dimensions, we have,

\begin{theorem}\emph{\cite{DDHM22}}
    If $V$ is a uniformly randomly chosen set of $n$ vertices chosen from 
    a $\ddim$-dimensional cube of total volume $n$, and 
    $G$ is the radius-$r$ disk graph, where $r = \omega(\sqrt{\log n})$.  Then, with probability $1 - O(1/n^2)$, Euclidean distances
    equal graphical distance, scaled by a factor $r$, up to rounding and an error factor of $1 + o(1)$.  In particular, for all $v \ne w \in V$
    \[
    d_G(v,w) \le 2 \|v - w\|/r
    \]
\end{theorem}
\tom{Should probably make the statement more precise.}

This is \cite[Theorem 4]{DDHM22} for the case of $\ddim = 2$ and is discussed in \cite[Section 5.1]{DDHM22} for higher dimensions. \tom{Reproduce these here?}

Note that by appropriate change of scale in the ambient Euclidean space, we may assume that the threshold distance for edges in a geometric graph is $r=1$.

\begin{definition}
Let $G=(V,E)$ be a geometric graph in $\R^{\ddim}$ where for $u,v \in V$, $(u,v) \in E$ if $\| u-v\| \le 1$.
We say that $G$ is $\alpha$-dense if the Euclidean balls of radius $1/\alpha$ centered on $V$ cover the convex hull of $V$. Equivalently, every Euclidean ball of radius $1/\alpha$ centered in the convex hull of $V$ contains at least one point of $V$.
\end{definition}

\begin{lemma}
    If a geometric graph $G$ in $\R^{\ddim}$ is $4$-dense then, for all $u,v \in V$ with $\|u-v\| >1$
    \[
    d_G(u,v) \le 2 \|u-v\|
    \]
\end{lemma}

\begin{proof}
    Let $u, v \in V$, with $\|u-v\| >1$, so that $u$ and $v$ are not adjacent in $G$. Consider a sequence of points $x_0, x_1 \dots x_{\ell}$ on the the line joining $u$ and $v$, such that $x_0$ is at distance 1/4 from $u$ and for $i\ge 1$,  $x_i$ is at distance 1/2 from $x_{i-1}$.
    Here $\ell = \lceil 2\|u-v\| - 1\rceil$. (See Figure~\ref{fig:4dense}
    All the $x_i$s are in the convex hull of $V$ so by the 4-density of $G$, the Euclidean balls of radius 1/4 around them each contain some point from $V$.  Let $u= w_0, w_1, \dots, w_{\ell-1}, w_{\ell} = v$ be these points. By the triangle inequality, $\|w_i - w_{i-1}\| \le 1$. Thus $(w_0, w_1, \dots, w_{\ell})$ is a path of length $\ell$ from $u$ to $v$ in $G$. It follows that 
    \[
    d_G(u,v) \le \ell \le  \lceil 2\|u-v\| - 1\rceil \le 2\|u-v\| \qedhere
    \] 
\end{proof}

\begin{figure}
    \centering
\tikzset{every picture/.style={line width=0.75pt}} 

\begin{tikzpicture}[x=0.75pt,y=0.75pt,yscale=-1,xscale=1]

\draw  [color={rgb, 255:red, 0; green, 0; blue, 0 }  ,draw opacity=1 ][fill={rgb, 255:red, 0; green, 0; blue, 0 }  ,fill opacity=1 ] (254,81) .. controls (254,79.34) and (255.34,78) .. (257,78) .. controls (258.66,78) and (260,79.34) .. (260,81) .. controls (260,82.66) and (258.66,84) .. (257,84) .. controls (255.34,84) and (254,82.66) .. (254,81) -- cycle ;
\draw  [color={rgb, 255:red, 0; green, 0; blue, 0 }  ,draw opacity=1 ][fill={rgb, 255:red, 0; green, 0; blue, 0 }  ,fill opacity=1 ] (514,111) .. controls (514,109.34) and (515.34,108) .. (517,108) .. controls (518.66,108) and (520,109.34) .. (520,111) .. controls (520,112.66) and (518.66,114) .. (517,114) .. controls (515.34,114) and (514,112.66) .. (514,111) -- cycle ;
\draw    (137,111) -- (517,111) ;
\draw  [color={rgb, 255:red, 80; green, 227; blue, 194 }  ,draw opacity=1 ][fill={rgb, 255:red, 80; green, 227; blue, 194 }  ,fill opacity=1 ][line width=1.5]  (179,111) .. controls (179,109.34) and (180.34,108) .. (182,108) .. controls (183.66,108) and (185,109.34) .. (185,111) .. controls (185,112.66) and (183.66,114) .. (182,114) .. controls (180.34,114) and (179,112.66) .. (179,111) -- cycle ;
\draw  [color={rgb, 255:red, 80; green, 227; blue, 194 }  ,draw opacity=1 ][line width=1.5]  (137,111) .. controls (137,86.15) and (157.15,66) .. (182,66) .. controls (206.85,66) and (227,86.15) .. (227,111) .. controls (227,135.85) and (206.85,156) .. (182,156) .. controls (157.15,156) and (137,135.85) .. (137,111) -- cycle ;

\draw  [color={rgb, 255:red, 80; green, 227; blue, 194 }  ,draw opacity=1 ][fill={rgb, 255:red, 80; green, 227; blue, 194 }  ,fill opacity=1 ][line width=1.5]  (269,111) .. controls (269,109.34) and (270.34,108) .. (272,108) .. controls (273.66,108) and (275,109.34) .. (275,111) .. controls (275,112.66) and (273.66,114) .. (272,114) .. controls (270.34,114) and (269,112.66) .. (269,111) -- cycle ;
\draw  [color={rgb, 255:red, 80; green, 227; blue, 194 }  ,draw opacity=1 ][line width=1.5]  (227,111) .. controls (227,86.15) and (247.15,66) .. (272,66) .. controls (296.85,66) and (317,86.15) .. (317,111) .. controls (317,135.85) and (296.85,156) .. (272,156) .. controls (247.15,156) and (227,135.85) .. (227,111) -- cycle ;

\draw  [color={rgb, 255:red, 80; green, 227; blue, 194 }  ,draw opacity=1 ][fill={rgb, 255:red, 80; green, 227; blue, 194 }  ,fill opacity=1 ][line width=1.5]  (359,111) .. controls (359,109.34) and (360.34,108) .. (362,108) .. controls (363.66,108) and (365,109.34) .. (365,111) .. controls (365,112.66) and (363.66,114) .. (362,114) .. controls (360.34,114) and (359,112.66) .. (359,111) -- cycle ;
\draw  [color={rgb, 255:red, 80; green, 227; blue, 194 }  ,draw opacity=1 ][line width=1.5]  (317,111) .. controls (317,86.15) and (337.15,66) .. (362,66) .. controls (386.85,66) and (407,86.15) .. (407,111) .. controls (407,135.85) and (386.85,156) .. (362,156) .. controls (337.15,156) and (317,135.85) .. (317,111) -- cycle ;

\draw  [color={rgb, 255:red, 80; green, 227; blue, 194 }  ,draw opacity=1 ][fill={rgb, 255:red, 80; green, 227; blue, 194 }  ,fill opacity=1 ][line width=1.5]  (449,111) .. controls (449,109.34) and (450.34,108) .. (452,108) .. controls (453.66,108) and (455,109.34) .. (455,111) .. controls (455,112.66) and (453.66,114) .. (452,114) .. controls (450.34,114) and (449,112.66) .. (449,111) -- cycle ;
\draw  [color={rgb, 255:red, 80; green, 227; blue, 194 }  ,draw opacity=1 ][line width=1.5]  (407,111) .. controls (407,86.15) and (427.15,66) .. (452,66) .. controls (476.85,66) and (497,86.15) .. (497,111) .. controls (497,135.85) and (476.85,156) .. (452,156) .. controls (427.15,156) and (407,135.85) .. (407,111) -- cycle ;

\draw  [color={rgb, 255:red, 80; green, 227; blue, 194 }  ,draw opacity=1 ][fill={rgb, 255:red, 80; green, 227; blue, 194 }  ,fill opacity=1 ][line width=1.5]  (539,111) .. controls (539,109.34) and (540.34,108) .. (542,108) .. controls (543.66,108) and (545,109.34) .. (545,111) .. controls (545,112.66) and (543.66,114) .. (542,114) .. controls (540.34,114) and (539,112.66) .. (539,111) -- cycle ;
\draw  [color={rgb, 255:red, 80; green, 227; blue, 194 }  ,draw opacity=1 ][line width=1.5]  (497,111) .. controls (497,86.15) and (517.15,66) .. (542,66) .. controls (566.85,66) and (587,86.15) .. (587,111) .. controls (587,135.85) and (566.85,156) .. (542,156) .. controls (517.15,156) and (497,135.85) .. (497,111) -- cycle ;

\draw  [color={rgb, 255:red, 0; green, 0; blue, 0 }  ,draw opacity=1 ][fill={rgb, 255:red, 0; green, 0; blue, 0 }  ,fill opacity=1 ] (134,111) .. controls (134,109.34) and (135.34,108) .. (137,108) .. controls (138.66,108) and (140,109.34) .. (140,111) .. controls (140,112.66) and (138.66,114) .. (137,114) .. controls (135.34,114) and (134,112.66) .. (134,111) -- cycle ;
\draw  [color={rgb, 255:red, 0; green, 0; blue, 0 }  ,draw opacity=1 ][fill={rgb, 255:red, 0; green, 0; blue, 0 }  ,fill opacity=1 ] (384,136) .. controls (384,134.34) and (385.34,133) .. (387,133) .. controls (388.66,133) and (390,134.34) .. (390,136) .. controls (390,137.66) and (388.66,139) .. (387,139) .. controls (385.34,139) and (384,137.66) .. (384,136) -- cycle ;
\draw  [color={rgb, 255:red, 0; green, 0; blue, 0 }  ,draw opacity=1 ][fill={rgb, 255:red, 0; green, 0; blue, 0 }  ,fill opacity=1 ] (424,91) .. controls (424,89.34) and (425.34,88) .. (427,88) .. controls (428.66,88) and (430,89.34) .. (430,91) .. controls (430,92.66) and (428.66,94) .. (427,94) .. controls (425.34,94) and (424,92.66) .. (424,91) -- cycle ;
\draw [line width=1.5]    (137,111) -- (257,81) ;
\draw [line width=1.5]    (257,81) -- (387,136) ;
\draw [line width=1.5]    (387,136) -- (427,91) ;
\draw [line width=1.5]    (427,91) -- (517,111) ;
\draw [color={rgb, 255:red, 0; green, 0; blue, 0 }  ,draw opacity=1 ]   (137,177) -- (182,177) ;
\draw [shift={(182,177)}, rotate = 180] [color={rgb, 255:red, 0; green, 0; blue, 0 }  ,draw opacity=1 ][line width=0.75]    (0,5.59) -- (0,-5.59)   ;
\draw [shift={(137,177)}, rotate = 180] [color={rgb, 255:red, 0; green, 0; blue, 0 }  ,draw opacity=1 ][line width=0.75]    (0,5.59) -- (0,-5.59)   ;

\draw (145,179.4) node [anchor=north west][inner sep=0.75pt]  [font=\footnotesize]  {$1/4$};
\draw (115,90.4) node [anchor=north west][inner sep=0.75pt]    {$u$};
\draw (519,90.4) node [anchor=north west][inner sep=0.75pt]    {$v$};

\end{tikzpicture}
\caption{A sequence of balls of radius $1/4$ covering the line joining $u$ and $v$ in $\R^{\ddim}$. Since each of these must have a vertex in it, the graphical distance is at most twice the Euclidean distance.}
    \label{fig:4dense}
\end{figure}

In two dimensions, we can do a little better, in that we do not need the vertex set to be as dense to control the graphical distances.

\begin{lemma}
    If a unit disk graph $G$ in $\R^2$ is $2\sqrt{2}$-dense then, for all $u,v \in V$ with $\|u-v\| >1$
    \[
    d_G(u,v) \le 2\sqrt{2} \|u-v\|
    \]
\end{lemma}

\begin{proof}
Since $G$ is $2\sqrt{2}$-dense, every ball of radius $\frac{1}{2\sqrt{2}}$ centered in the convex 
hull of $V$ contains a point of $V$. Since such a ball is inscribed in a square of unit diagonal, 
it follows that every square of unit diagonal centered in the convex 
hull of $V$ contains a point of $V$. 
    Let $u, v \in V$, with $\|u-v\| >1$, so that $u$ and $v$ are not adjacent in $G$ and consider the line joining $u$ and $v$ (see Figure~\ref{fig:adaptive}). 

In this case, we define a sequence of vertices $w_0, w_1, \dots, w_{\ell}$ as follows.  Let $w_0 = u$.  Given $w_i$ for $i \ge 0$, define $S_i$ to be the square of unit diagonal, centered on the line from $u$ to $v$, with $w_i$ on its edge perpendicular to $\overline{uv}$.  Let $w_{i+1}$ be a point in $S_i$, whose projection onto the line $\overline{uv}$ is as far towards $v$ as possible.  Then the sequence $w_0, w_1, \dots, w_{\ell}$ is a path in $G$ from $u$ to $v$.  Since, for each $i \ge 0$, $w_{i+2}$ is not in $S_i$, it follows that $w_{i+2}$ is at least $1/\sqrt{2}$ farther along line $\overline{uv}$ than $w_i$, and so $\|u-v\| \le 2 \sqrt{2} \ell$.
\end{proof}

 \begin{figure}
 \centering
 
\tikzset{
pattern size/.store in=\mcSize, 
pattern size = 5pt,
pattern thickness/.store in=\mcThickness, 
pattern thickness = 0.3pt,
pattern radius/.store in=\mcRadius, 
pattern radius = 1pt}
\makeatletter
\pgfutil@ifundefined{pgf@pattern@name@_k398lmzgf}{
\pgfdeclarepatternformonly[\mcThickness,\mcSize]{_k398lmzgf}
{\pgfqpoint{0pt}{-\mcThickness}}
{\pgfpoint{\mcSize}{\mcSize}}
{\pgfpoint{\mcSize}{\mcSize}}
{
\pgfsetcolor{\tikz@pattern@color}
\pgfsetlinewidth{\mcThickness}
\pgfpathmoveto{\pgfqpoint{0pt}{\mcSize}}
\pgfpathlineto{\pgfpoint{\mcSize+\mcThickness}{-\mcThickness}}
\pgfusepath{stroke}
}}
\makeatother
\tikzset{every picture/.style={line width=0.75pt}} 

\begin{tikzpicture}[x=0.75pt,y=0.75pt,yscale=-1,xscale=1]

\draw  [fill={rgb, 255:red, 0; green, 0; blue, 0 }  ,fill opacity=1 ] (94,102) .. controls (94,100.34) and (95.34,99) .. (97,99) .. controls (98.66,99) and (100,100.34) .. (100,102) .. controls (100,103.66) and (98.66,105) .. (97,105) .. controls (95.34,105) and (94,103.66) .. (94,102) -- cycle ;
\draw  [fill={rgb, 255:red, 0; green, 0; blue, 0 }  ,fill opacity=1 ] (594,102) .. controls (594,100.34) and (595.34,99) .. (597,99) .. controls (598.66,99) and (600,100.34) .. (600,102) .. controls (600,103.66) and (598.66,105) .. (597,105) .. controls (595.34,105) and (594,103.66) .. (594,102) -- cycle ;
\draw    (97,102) -- (597,102) ;
\draw  [color={rgb, 255:red, 208; green, 2; blue, 27 }  ,draw opacity=1 ][dash pattern={on 5.63pt off 4.5pt}][line width=1.5]  (187,42) -- (307,42) -- (307,162) -- (187,162) -- cycle ;
\draw  [fill={rgb, 255:red, 0; green, 0; blue, 0 }  ,fill opacity=1 ] (269,140) .. controls (269,138.34) and (270.34,137) .. (272,137) .. controls (273.66,137) and (275,138.34) .. (275,140) .. controls (275,141.66) and (273.66,143) .. (272,143) .. controls (270.34,143) and (269,141.66) .. (269,140) -- cycle ;
\draw  [fill={rgb, 255:red, 0; green, 0; blue, 0 }  ,fill opacity=1 ] (184,60) .. controls (184,58.34) and (185.34,57) .. (187,57) .. controls (188.66,57) and (190,58.34) .. (190,60) .. controls (190,61.66) and (188.66,63) .. (187,63) .. controls (185.34,63) and (184,61.66) .. (184,60) -- cycle ;
\draw  [color={rgb, 255:red, 245; green, 166; blue, 35 }  ,draw opacity=1 ][line width=0.75]  (272,42) -- (392,42) -- (392,162) -- (272,162) -- cycle ;
\draw  [fill={rgb, 255:red, 0; green, 0; blue, 0 }  ,fill opacity=1 ] (334,85) .. controls (334,83.34) and (335.34,82) .. (337,82) .. controls (338.66,82) and (340,83.34) .. (340,85) .. controls (340,86.66) and (338.66,88) .. (337,88) .. controls (335.34,88) and (334,86.66) .. (334,85) -- cycle ;
\draw  [color={rgb, 255:red, 208; green, 2; blue, 27 }  ,draw opacity=1 ][dash pattern={on 5.63pt off 4.5pt}][line width=1.5]  (337,42) -- (457,42) -- (457,162) -- (337,162) -- cycle ;
\draw  [fill={rgb, 255:red, 0; green, 0; blue, 0 }  ,fill opacity=1 ] (399,152) .. controls (399,150.34) and (400.34,149) .. (402,149) .. controls (403.66,149) and (405,150.34) .. (405,152) .. controls (405,153.66) and (403.66,155) .. (402,155) .. controls (400.34,155) and (399,153.66) .. (399,152) -- cycle ;
\draw  [fill={rgb, 255:red, 0; green, 0; blue, 0 }  ,fill opacity=1 ] (497,70) .. controls (497,68.34) and (498.34,67) .. (500,67) .. controls (501.66,67) and (503,68.34) .. (503,70) .. controls (503,71.66) and (501.66,73) .. (500,73) .. controls (498.34,73) and (497,71.66) .. (497,70) -- cycle ;
\draw  [color={rgb, 255:red, 245; green, 166; blue, 35 }  ,draw opacity=1 ] (401,42) -- (521,42) -- (521,162) -- (401,162) -- cycle ;
\draw  [color={rgb, 255:red, 208; green, 2; blue, 27 }  ,draw opacity=1 ][dash pattern={on 5.63pt off 4.5pt}][line width=1.5]  (501,42) -- (621,42) -- (621,162) -- (501,162) -- cycle ;
\draw [color={rgb, 255:red, 74; green, 144; blue, 226 }  ,draw opacity=1 ][line width=1.5]    (97,102) -- (187,60) ;
\draw [color={rgb, 255:red, 74; green, 144; blue, 226 }  ,draw opacity=1 ][line width=1.5]    (187,60) -- (272,140) ;
\draw [color={rgb, 255:red, 74; green, 144; blue, 226 }  ,draw opacity=1 ][line width=1.5]    (337,85) -- (272,140) ;
\draw [color={rgb, 255:red, 74; green, 144; blue, 226 }  ,draw opacity=1 ][line width=1.5]    (337,85) -- (402,152) ;
\draw [color={rgb, 255:red, 74; green, 144; blue, 226 }  ,draw opacity=1 ][line width=1.5]    (500,70) -- (402,152) ;
\draw [color={rgb, 255:red, 74; green, 144; blue, 226 }  ,draw opacity=1 ][line width=1.5]    (500,70) -- (597,102) ;
\draw  [color={rgb, 255:red, 245; green, 166; blue, 35 }  ,draw opacity=1 ][line width=0.75]  (97,42) -- (217,42) -- (217,162) -- (97,162) -- cycle ;

\draw  [fill={rgb, 255:red, 0; green, 0; blue, 0 }  ,fill opacity=1 ] (233,310) .. controls (233,308.34) and (234.34,307) .. (236,307) .. controls (237.66,307) and (239,308.34) .. (239,310) .. controls (239,311.66) and (237.66,313) .. (236,313) .. controls (234.34,313) and (233,311.66) .. (233,310) -- cycle ;
\draw    (236,310) -- (496,310) ;
\draw  [fill={rgb, 255:red, 19; green, 120; blue, 235 }  ,fill opacity=1 ] (303,250) .. controls (303,248.34) and (304.34,247) .. (306,247) .. controls (307.66,247) and (309,248.34) .. (309,250) .. controls (309,251.66) and (307.66,253) .. (306,253) .. controls (304.34,253) and (303,251.66) .. (303,250) -- cycle ;
\draw  [color={rgb, 255:red, 128; green, 128; blue, 128 }  ,draw opacity=1 ][fill={rgb, 255:red, 128; green, 128; blue, 128 }  ,fill opacity=1 ] (259,332) .. controls (259,330.34) and (260.34,329) .. (262,329) .. controls (263.66,329) and (265,330.34) .. (265,332) .. controls (265,333.66) and (263.66,335) .. (262,335) .. controls (260.34,335) and (259,333.66) .. (259,332) -- cycle ;
\draw  [color={rgb, 255:red, 208; green, 2; blue, 27 }  ,draw opacity=1 ][dash pattern={on 5.63pt off 4.5pt}][line width=1.5]  (306,235) -- (456,235) -- (456,385) -- (306,385) -- cycle ;
\draw  [fill={rgb, 255:red, 0; green, 0; blue, 0 }  ,fill opacity=1 ] (396,365) .. controls (396,363.34) and (397.34,362) .. (399,362) .. controls (400.66,362) and (402,363.34) .. (402,365) .. controls (402,366.66) and (400.66,368) .. (399,368) .. controls (397.34,368) and (396,366.66) .. (396,365) -- cycle ;
\draw  [color={rgb, 255:red, 245; green, 166; blue, 35 }  ,draw opacity=1 ] (236,235) -- (386,235) -- (386,385) -- (236,385) -- cycle ;
\draw  [color={rgb, 255:red, 128; green, 128; blue, 128 }  ,draw opacity=1 ][fill={rgb, 255:red, 128; green, 128; blue, 128 }  ,fill opacity=1 ] (279,293) .. controls (279,291.34) and (280.34,290) .. (282,290) .. controls (283.66,290) and (285,291.34) .. (285,293) .. controls (285,294.66) and (283.66,296) .. (282,296) .. controls (280.34,296) and (279,294.66) .. (279,293) -- cycle ;
\draw  [draw opacity=0][pattern=_k398lmzgf,pattern size=6pt,pattern thickness=0.75pt,pattern radius=0pt, pattern color={rgb, 255:red, 155; green, 155; blue, 155}] (306,234.5) -- (386,234.5) -- (386,385) -- (306,385) -- cycle ;
\draw [color={rgb, 255:red, 4; green, 114; blue, 250 }  ,draw opacity=1 ][line width=1.5]    (306,250) -- (236,310) ;
\draw [color={rgb, 255:red, 4; green, 118; blue, 248 }  ,draw opacity=1 ][line width=1.5]    (306,250) -- (399,365) ;

\draw (206,293.4) node [anchor=north west][inner sep=0.75pt]  [font=\large]  {$u$};
\draw (76,102.4) node [anchor=north west][inner sep=0.75pt]  [font=\large]  {$u$};
\draw (599,105.4) node [anchor=north west][inner sep=0.75pt]  [font=\large]  {$v$};

\end{tikzpicture}
\caption{Caption}
    \label{fig:adaptive}
\end{figure}


\begin{lemma}
    Let $G$ be a geometric graph in $\R^{\ddim}$, such that for all non-adjacent $u,v \in V$, 
    \[d_G(u,v) \le C \|u-v\| \]
    for some constant $C$. Then $G$ has uniformly bounded independence. 
\end{lemma}

\begin{proof}
Fix $R$ and $u \in V$ 
    suppose $S \subset B_{rR}(u)$ is an independent set of $G^{\le R}$.  
    Then for all $v, w \in S$, $d_G(v,w) > R$. By hypothesis, it follows that for all $v, w \in S$, $\|v-w\| > R/C$. Thus the Euclidean balls of radius $R/C$
    centered at vertices of $S$ are pairwise disjoint.  Moreover, these are all contained within the Euclidean ball of radius  $\left(r+\frac{1}{C}\right)R$ around $u$ (by the triangle inequality, and since for any geometric graph, the Euclidean distance is at most the graphical distance.)
    It follows that
    \[
    |S| \mathcal{V}_k \left(\frac{R}{C}\right) ^{\ddim} \le \mathcal{V}_k \left(r+\frac{1}{C}\right)^{\ddim} R^{\ddim},
    \]
where $\mathcal{V}_k$ is the volume of the unit (Euclidean) ball in $\R^{\ddim}$.
    Thus $|S| \le (Cr+1)^{\ddim}$. Since the size of $S$ grows polynomially in $r$, but does not depend on $R$ we have established that $G$ has uniformly bounded independence..
\end{proof}





\label{appsec:bdi-eg}

In this section we exhibit some graph classes that have the uniform bounded independence property.

\subsection*{Bounded Growth Graphs}

\begin{definition} 
A graph $G = (V,E)$ is bounded growth if there are constants $\beta, \ddim >0$ such that for all $v \in V$, for  all $r$,  $|\nbd{r}{v}| \le \beta r^{\ddim}$.
\end{definition} 

Distributed algorithms on bounded growth graphs have been studied in~\cite{abraham2005name,bernshteyn2023borel,schneider2011distributed}.
If a graph $G$ is bounded growth, then clearly it is also bounded independence. But we're interested in graphs for which not only $G$ but also all of its power graphs have bounded independence with the same parameters. This does not follow from bounded growth, but does follow from something just a little stronger.

\begin{definition} 
A graph $G = (V,E)$ is \emph{strongly} bounded growth if there are constants $\alpha, \beta, d >0$ such that for all $v \in V$, for all $r>0$,  $\min\{\alpha r^{\ddim}, n\} \le |\nbd{r}{v}| \le \beta r^{\ddim}$.
\end{definition} 
 
\begin{lemma}\label{lem-strongly-bounded-growth}
    A graph $G$ with strongly bounded growth has uniformly bounded independence
\end{lemma}

\begin{proof}
    Since $G$ is strongly bounded growth, there are constants $\alpha, \beta, \ddim >0$ such that for all $v \in V$, for all $r>0$,
    \[
    \min\{\alpha r^{\ddim}, n\} \le |\nbd{r}{v}| \le \beta r^{\ddim}.
    \]
    Let $\gamma = 3^{\ddim}\beta /\alpha$. We will show that $G$ has uniformly bounded independence with parameters $\gamma$ and $\ddim$.
    
    Fix $v \in V$, $R >0$ and $r>0$.  Let $S \subset \nbd{Rr}{v}$ be an independent set in $G^{\le R}$.
    Then the balls of radius $R/2$ around vertices in $S$ are pairwise disjoint, \emph{i.e.}, for all $x, y \in S$, $\nbd{R/2}{x} \cap \nbd{R/2}{y} = \emptyset$. By the strong bounded growth property, each such ball has size at least $\min\{\alpha (R/2)^{\ddim}, n\}$, and therefore, 
    \begin{equation}\label{eq:bdgrLB}
    \left| \bigcup_{x \in S} \nbd{R/2}{x} \right| \ge |S| \cdot \min\{\alpha (R/2)^{\ddim}, n\}.
    \end{equation}
    On the other hand, since $S \subset \nbd{Rr}{v}$, by the triangle inequality, the set $\bigcup_{x \in S} \nbd{R/2}{x}$ is contained in the ball of radius $R(r+1/2)$ around $v$. Using the upper bound for strong bounded growth and noting that $r+1/2 \le 3r/2$ since $r\ge 1$, (and also that the total number of vertices is $n$) we see that 
    \begin{equation}\label{eq:bdgrUB}
    \left| \bigcup_{x \in S} \nbd{R/2}{x} \right| \le \min\left\{ \beta R^{\ddim}\left(r+\frac12\right)^{\ddim}, n\right\} \le \min\{ \beta R^{\ddim}(3r/2)^{\ddim}, n\}.
    \end{equation}
    Combining \eqref{eq:bdgrLB} and \eqref{eq:bdgrUB}, we get
    \[
        |S| \, \le \frac{\min\{ \beta R^{\ddim}(3r/2)^{\ddim}, n\}}{\min\{\alpha (R/2)^{\ddim}, n\}}
        \, \le \frac{\beta R^{\ddim}(3r/2)^{\ddim}}{\alpha (R/2)^{\ddim}}
        \, = (3r)^{\ddim} \beta/\alpha 
    \]
    where the second inequality follows from a case analysis of whether $n$ is below, between or above 
    $\alpha (R/2)^{\ddim} < \beta R^{\ddim}(3r/2)^{\ddim}$.
    Setting $\gamma = 3^{\ddim} \beta / \alpha$ completes the proof.
\end{proof}

Thus all graphs with strongly bounded growth have uniformly bounded independence. In particular, this includes the path, the cycle, and $k$-dimensional grids and lattices for constant $k$.

\subsection*{Geometric Graphs} 


Let $G = (V,E)$, where $V$ is a subset of a metric space, and each edge $\{v,w\}$ is present if and only if the (metric) distance between $v$ and $w$ is less than a specified threshold $r$. Then $G$ is called a \emph{geometric graph}. Of particular interest is the case when the metric space in question is $\R^{\ddim}$ with the Euclidean distance. We will call these Euclidean geometric graphs. When $\ddim =2$ and the threshold distance $r = 1$, we get the usual model of \emph{unit disk graphs}.

A popular special case of geometric graphs is when $V$ is chosen somehow randomly, such as by a Poisson point process.  In this case, $G$ is often referred to as a \emph{random geometric graph.}    

\input{comb}

It is easy to see that Euclidean geometric graphs have bounded independence, since the number of Euclidean balls of radius 1/2 needed to cover a Euclidean ball of radius $r$ (centered at a vertex $v$) in $\R^{\ddim}$ is  $\Theta_k(r^k)$,  
and each such ball can contain at most one vertex of an independent set $S$. 
We would like to be able to say that Euclidean geometric graphs also have \emph{uniformly} 
bounded independence, but unfortunately this is not true as evidenced by the comb graph (see 
Figure~\ref{fig:comb}.)  For the example shown,  $S$ is
an independent set in $G^{\le R}$, contained within a ball of radius $r$ in this graph, such that
$|S| = \Theta(r^2R)$; essentially there are $r$ rows of $rR$ vertices.
The problem arises because there are  vertices that are nearby in Euclidean distance 
but arbitrarily far away in graphical distance the comb graph.

In contrast, \emph{random} geometric graphs almost surely do not display this behavior, at least when the random process generating the graph has a sufficient rate. 
Specifically, combining \cite[Theorem 4]{DDHM22} for the case of $\ddim = 2$ and the discussion  in \cite[Section 5.1]{DDHM22} for higher dimensions, we have,

\begin{theorem}\emph{\cite{DDHM22}}
    If $V$ is a uniformly randomly chosen set of $n$ vertices chosen from 
    a $\ddim$-dimensional cube of total volume $n$, and 
    $G$ is the radius-$r$ disk graph, where $r = \omega(\sqrt{\log n})$.  Then, with probability $1 - O(1/n^2)$, Euclidean distances
    equal graphical distance, scaled by a factor $r$, up to rounding and an error factor of $1 + o(1)$.  In particular, for all $v \ne w \in V$
    \[
    d_G(v,w) \le 2 \|v - w\|/r
    \]
\end{theorem}

Note that by appropriate change of scale in the ambient Euclidean space, we may assume that the threshold distance for edges in a geometric graph is $r=1$.

We will show that even when the positions of the vertices of the geometric graph are selected adversarially, as long as there are no large "holes" in the ambient space, the graphical distance is bounded by a constant multiple of the Euclidean distance, and this is sufficient to guarantee uniformly bounded independence.

\begin{definition}
Let $G=(V,E)$ be a geometric graph in $\R^{\ddim}$ where for $u,v \in V$, $(u,v) \in E$ if $\| u-v\| \le 1$.
We say that $G$ is $\alpha$-dense if the Euclidean balls of radius $1/\alpha$ centered on $V$ cover the convex hull of $V$. Equivalently, every Euclidean ball of radius $1/\alpha$ centered in the convex hull of $V$ contains at least one point of $V$.
\end{definition}

\begin{lemma}
    If a geometric graph $G$ in $\R^{\ddim}$ is $4$-dense then, for all $u,v \in V$ with $\|u-v\| >1$
    \[
    d_G(u,v) \le 2 \|u-v\|
    \]
\end{lemma}

\begin{proof}
    Let $u, v \in V$, with $\|u-v\| >1$, so that $u$ and $v$ are not adjacent in $G$. Consider a sequence of points $x_0, x_1 \dots x_{\ell}$ on the the line joining $u$ and $v$, such that $x_0$ is at distance 1/4 from $u$ and for $i\ge 1$,  $x_i$ is at distance 1/2 from $x_{i-1}$.
    Here $\ell = \lceil 2\|u-v\| - 1\rceil$. (See Figure~\ref{fig:4dense}
    All the $x_i$s are in the convex hull of $V$ so by the 4-density of $G$, the Euclidean balls of radius 1/4 around them each contain some point from $V$.  Let $u= w_0, w_1, \dots, w_{\ell-1}, w_{\ell} = v$ be these points. By the triangle inequality, $\|w_i - w_{i-1}\| \le 1$. Thus $(w_0, w_1, \dots, w_{\ell})$ is a path of length $\ell$ from $u$ to $v$ in $G$. It follows that 
    \[
    d_G(u,v) \le \ell \le  \lceil 2\|u-v\| - 1\rceil \le 2\|u-v\| \qedhere
    \] 
\end{proof}

\begin{figure}
    \centering
\tikzset{every picture/.style={line width=0.75pt}} 

\begin{tikzpicture}[x=0.75pt,y=0.75pt,yscale=-1,xscale=1]

\draw  [color={rgb, 255:red, 0; green, 0; blue, 0 }  ,draw opacity=1 ][fill={rgb, 255:red, 0; green, 0; blue, 0 }  ,fill opacity=1 ] (254,81) .. controls (254,79.34) and (255.34,78) .. (257,78) .. controls (258.66,78) and (260,79.34) .. (260,81) .. controls (260,82.66) and (258.66,84) .. (257,84) .. controls (255.34,84) and (254,82.66) .. (254,81) -- cycle ;
\draw  [color={rgb, 255:red, 0; green, 0; blue, 0 }  ,draw opacity=1 ][fill={rgb, 255:red, 0; green, 0; blue, 0 }  ,fill opacity=1 ] (514,111) .. controls (514,109.34) and (515.34,108) .. (517,108) .. controls (518.66,108) and (520,109.34) .. (520,111) .. controls (520,112.66) and (518.66,114) .. (517,114) .. controls (515.34,114) and (514,112.66) .. (514,111) -- cycle ;
\draw    (137,111) -- (517,111) ;
\draw  [color={rgb, 255:red, 80; green, 227; blue, 194 }  ,draw opacity=1 ][fill={rgb, 255:red, 80; green, 227; blue, 194 }  ,fill opacity=1 ][line width=1.5]  (179,111) .. controls (179,109.34) and (180.34,108) .. (182,108) .. controls (183.66,108) and (185,109.34) .. (185,111) .. controls (185,112.66) and (183.66,114) .. (182,114) .. controls (180.34,114) and (179,112.66) .. (179,111) -- cycle ;
\draw  [color={rgb, 255:red, 80; green, 227; blue, 194 }  ,draw opacity=1 ][line width=1.5]  (137,111) .. controls (137,86.15) and (157.15,66) .. (182,66) .. controls (206.85,66) and (227,86.15) .. (227,111) .. controls (227,135.85) and (206.85,156) .. (182,156) .. controls (157.15,156) and (137,135.85) .. (137,111) -- cycle ;

\draw  [color={rgb, 255:red, 80; green, 227; blue, 194 }  ,draw opacity=1 ][fill={rgb, 255:red, 80; green, 227; blue, 194 }  ,fill opacity=1 ][line width=1.5]  (269,111) .. controls (269,109.34) and (270.34,108) .. (272,108) .. controls (273.66,108) and (275,109.34) .. (275,111) .. controls (275,112.66) and (273.66,114) .. (272,114) .. controls (270.34,114) and (269,112.66) .. (269,111) -- cycle ;
\draw  [color={rgb, 255:red, 80; green, 227; blue, 194 }  ,draw opacity=1 ][line width=1.5]  (227,111) .. controls (227,86.15) and (247.15,66) .. (272,66) .. controls (296.85,66) and (317,86.15) .. (317,111) .. controls (317,135.85) and (296.85,156) .. (272,156) .. controls (247.15,156) and (227,135.85) .. (227,111) -- cycle ;

\draw  [color={rgb, 255:red, 80; green, 227; blue, 194 }  ,draw opacity=1 ][fill={rgb, 255:red, 80; green, 227; blue, 194 }  ,fill opacity=1 ][line width=1.5]  (359,111) .. controls (359,109.34) and (360.34,108) .. (362,108) .. controls (363.66,108) and (365,109.34) .. (365,111) .. controls (365,112.66) and (363.66,114) .. (362,114) .. controls (360.34,114) and (359,112.66) .. (359,111) -- cycle ;
\draw  [color={rgb, 255:red, 80; green, 227; blue, 194 }  ,draw opacity=1 ][line width=1.5]  (317,111) .. controls (317,86.15) and (337.15,66) .. (362,66) .. controls (386.85,66) and (407,86.15) .. (407,111) .. controls (407,135.85) and (386.85,156) .. (362,156) .. controls (337.15,156) and (317,135.85) .. (317,111) -- cycle ;

\draw  [color={rgb, 255:red, 80; green, 227; blue, 194 }  ,draw opacity=1 ][fill={rgb, 255:red, 80; green, 227; blue, 194 }  ,fill opacity=1 ][line width=1.5]  (449,111) .. controls (449,109.34) and (450.34,108) .. (452,108) .. controls (453.66,108) and (455,109.34) .. (455,111) .. controls (455,112.66) and (453.66,114) .. (452,114) .. controls (450.34,114) and (449,112.66) .. (449,111) -- cycle ;
\draw  [color={rgb, 255:red, 80; green, 227; blue, 194 }  ,draw opacity=1 ][line width=1.5]  (407,111) .. controls (407,86.15) and (427.15,66) .. (452,66) .. controls (476.85,66) and (497,86.15) .. (497,111) .. controls (497,135.85) and (476.85,156) .. (452,156) .. controls (427.15,156) and (407,135.85) .. (407,111) -- cycle ;

\draw  [color={rgb, 255:red, 80; green, 227; blue, 194 }  ,draw opacity=1 ][fill={rgb, 255:red, 80; green, 227; blue, 194 }  ,fill opacity=1 ][line width=1.5]  (539,111) .. controls (539,109.34) and (540.34,108) .. (542,108) .. controls (543.66,108) and (545,109.34) .. (545,111) .. controls (545,112.66) and (543.66,114) .. (542,114) .. controls (540.34,114) and (539,112.66) .. (539,111) -- cycle ;
\draw  [color={rgb, 255:red, 80; green, 227; blue, 194 }  ,draw opacity=1 ][line width=1.5]  (497,111) .. controls (497,86.15) and (517.15,66) .. (542,66) .. controls (566.85,66) and (587,86.15) .. (587,111) .. controls (587,135.85) and (566.85,156) .. (542,156) .. controls (517.15,156) and (497,135.85) .. (497,111) -- cycle ;

\draw  [color={rgb, 255:red, 0; green, 0; blue, 0 }  ,draw opacity=1 ][fill={rgb, 255:red, 0; green, 0; blue, 0 }  ,fill opacity=1 ] (134,111) .. controls (134,109.34) and (135.34,108) .. (137,108) .. controls (138.66,108) and (140,109.34) .. (140,111) .. controls (140,112.66) and (138.66,114) .. (137,114) .. controls (135.34,114) and (134,112.66) .. (134,111) -- cycle ;
\draw  [color={rgb, 255:red, 0; green, 0; blue, 0 }  ,draw opacity=1 ][fill={rgb, 255:red, 0; green, 0; blue, 0 }  ,fill opacity=1 ] (384,136) .. controls (384,134.34) and (385.34,133) .. (387,133) .. controls (388.66,133) and (390,134.34) .. (390,136) .. controls (390,137.66) and (388.66,139) .. (387,139) .. controls (385.34,139) and (384,137.66) .. (384,136) -- cycle ;
\draw  [color={rgb, 255:red, 0; green, 0; blue, 0 }  ,draw opacity=1 ][fill={rgb, 255:red, 0; green, 0; blue, 0 }  ,fill opacity=1 ] (424,91) .. controls (424,89.34) and (425.34,88) .. (427,88) .. controls (428.66,88) and (430,89.34) .. (430,91) .. controls (430,92.66) and (428.66,94) .. (427,94) .. controls (425.34,94) and (424,92.66) .. (424,91) -- cycle ;
\draw [line width=1.5]    (137,111) -- (257,81) ;
\draw [line width=1.5]    (257,81) -- (387,136) ;
\draw [line width=1.5]    (387,136) -- (427,91) ;
\draw [line width=1.5]    (427,91) -- (517,111) ;
\draw [color={rgb, 255:red, 0; green, 0; blue, 0 }  ,draw opacity=1 ]   (137,177) -- (182,177) ;
\draw [shift={(182,177)}, rotate = 180] [color={rgb, 255:red, 0; green, 0; blue, 0 }  ,draw opacity=1 ][line width=0.75]    (0,5.59) -- (0,-5.59)   ;
\draw [shift={(137,177)}, rotate = 180] [color={rgb, 255:red, 0; green, 0; blue, 0 }  ,draw opacity=1 ][line width=0.75]    (0,5.59) -- (0,-5.59)   ;

\draw (145,179.4) node [anchor=north west][inner sep=0.75pt]  [font=\footnotesize]  {$1/4$};
\draw (115,90.4) node [anchor=north west][inner sep=0.75pt]    {$u$};
\draw (519,90.4) node [anchor=north west][inner sep=0.75pt]    {$v$};

\end{tikzpicture}
\caption{A sequence of balls of radius $1/4$ covering the line joining $u$ and $v$ in $\R^{\ddim}$. Since each of these must have a vertex in it, the graphical distance is at most twice the Euclidean distance.}
    \label{fig:4dense}
\end{figure}

In two dimensions, we can do a little better, in that we do not need the vertex set to be as dense to control the graphical distances.

\begin{lemma}\label{lem:2d}
    If a unit disk graph $G$ in $\R^2$ is $2\sqrt{2}$-dense then, for all $u,v \in V$ with $\|u-v\| >1$
    \[
    d_G(u,v) \le 2\sqrt{2} \|u-v\|
    \]
\end{lemma}

\begin{proof}
Since $G$ is $2\sqrt{2}$-dense, every ball of radius $\frac{1}{2\sqrt{2}}$ centered in the convex 
hull of $V$ contains a point of $V$. Since such a ball is inscribed in a square of unit diagonal, 
it follows that every square of unit diagonal centered in the convex 
hull of $V$ contains a point of $V$. 
    Let $u, v \in V$, with $\|u-v\| >1$, so that $u$ and $v$ are not adjacent in $G$ and consider the line joining $u$ and $v$ (see Figure~\ref{fig:adaptive}). 

In this case, we define a sequence of vertices $w_0, w_1, \dots, w_{\ell}$ as follows.  Let $w_0 = u$.  Given $w_i$ for $i \ge 0$, define $S_i$ to be the square of unit diagonal, centered on the line from $u$ to $v$, with $w_i$ on its edge perpendicular to $\overline{uv}$.  Let $w_{i+1}$ be a point in $S_i$, whose projection onto the line $\overline{uv}$ is as far towards $v$ as possible.  Then the sequence $w_0, w_1, \dots, w_{\ell}$ is a path in $G$ from $u$ to $v$.  Since, for each $i \ge 0$, $w_{i+2}$ is not in $S_i$, it follows that $w_{i+2}$ is at least $1/\sqrt{2}$ farther along line $\overline{uv}$ than $w_i$, and so $\|u-v\| \le 2 \sqrt{2} \ell$.
\end{proof}

 \begin{figure}
 \centering
 
\tikzset{
pattern size/.store in=\mcSize, 
pattern size = 5pt,
pattern thickness/.store in=\mcThickness, 
pattern thickness = 0.3pt,
pattern radius/.store in=\mcRadius, 
pattern radius = 1pt}
\makeatletter
\pgfutil@ifundefined{pgf@pattern@name@_k398lmzgf}{
\pgfdeclarepatternformonly[\mcThickness,\mcSize]{_k398lmzgf}
{\pgfqpoint{0pt}{-\mcThickness}}
{\pgfpoint{\mcSize}{\mcSize}}
{\pgfpoint{\mcSize}{\mcSize}}
{
\pgfsetcolor{\tikz@pattern@color}
\pgfsetlinewidth{\mcThickness}
\pgfpathmoveto{\pgfqpoint{0pt}{\mcSize}}
\pgfpathlineto{\pgfpoint{\mcSize+\mcThickness}{-\mcThickness}}
\pgfusepath{stroke}
}}
\makeatother
\tikzset{every picture/.style={line width=0.75pt}} 

\begin{tikzpicture}[x=0.75pt,y=0.75pt,yscale=-1,xscale=1]

\draw  [fill={rgb, 255:red, 0; green, 0; blue, 0 }  ,fill opacity=1 ] (94,102) .. controls (94,100.34) and (95.34,99) .. (97,99) .. controls (98.66,99) and (100,100.34) .. (100,102) .. controls (100,103.66) and (98.66,105) .. (97,105) .. controls (95.34,105) and (94,103.66) .. (94,102) -- cycle ;
\draw  [fill={rgb, 255:red, 0; green, 0; blue, 0 }  ,fill opacity=1 ] (594,102) .. controls (594,100.34) and (595.34,99) .. (597,99) .. controls (598.66,99) and (600,100.34) .. (600,102) .. controls (600,103.66) and (598.66,105) .. (597,105) .. controls (595.34,105) and (594,103.66) .. (594,102) -- cycle ;
\draw    (97,102) -- (597,102) ;
\draw  [color={rgb, 255:red, 208; green, 2; blue, 27 }  ,draw opacity=1 ][dash pattern={on 5.63pt off 4.5pt}][line width=1.5]  (187,42) -- (307,42) -- (307,162) -- (187,162) -- cycle ;
\draw  [fill={rgb, 255:red, 0; green, 0; blue, 0 }  ,fill opacity=1 ] (269,140) .. controls (269,138.34) and (270.34,137) .. (272,137) .. controls (273.66,137) and (275,138.34) .. (275,140) .. controls (275,141.66) and (273.66,143) .. (272,143) .. controls (270.34,143) and (269,141.66) .. (269,140) -- cycle ;
\draw  [fill={rgb, 255:red, 0; green, 0; blue, 0 }  ,fill opacity=1 ] (184,60) .. controls (184,58.34) and (185.34,57) .. (187,57) .. controls (188.66,57) and (190,58.34) .. (190,60) .. controls (190,61.66) and (188.66,63) .. (187,63) .. controls (185.34,63) and (184,61.66) .. (184,60) -- cycle ;
\draw  [color={rgb, 255:red, 245; green, 166; blue, 35 }  ,draw opacity=1 ][line width=0.75]  (272,42) -- (392,42) -- (392,162) -- (272,162) -- cycle ;
\draw  [fill={rgb, 255:red, 0; green, 0; blue, 0 }  ,fill opacity=1 ] (334,85) .. controls (334,83.34) and (335.34,82) .. (337,82) .. controls (338.66,82) and (340,83.34) .. (340,85) .. controls (340,86.66) and (338.66,88) .. (337,88) .. controls (335.34,88) and (334,86.66) .. (334,85) -- cycle ;
\draw  [color={rgb, 255:red, 208; green, 2; blue, 27 }  ,draw opacity=1 ][dash pattern={on 5.63pt off 4.5pt}][line width=1.5]  (337,42) -- (457,42) -- (457,162) -- (337,162) -- cycle ;
\draw  [fill={rgb, 255:red, 0; green, 0; blue, 0 }  ,fill opacity=1 ] (399,152) .. controls (399,150.34) and (400.34,149) .. (402,149) .. controls (403.66,149) and (405,150.34) .. (405,152) .. controls (405,153.66) and (403.66,155) .. (402,155) .. controls (400.34,155) and (399,153.66) .. (399,152) -- cycle ;
\draw  [fill={rgb, 255:red, 0; green, 0; blue, 0 }  ,fill opacity=1 ] (497,70) .. controls (497,68.34) and (498.34,67) .. (500,67) .. controls (501.66,67) and (503,68.34) .. (503,70) .. controls (503,71.66) and (501.66,73) .. (500,73) .. controls (498.34,73) and (497,71.66) .. (497,70) -- cycle ;
\draw  [color={rgb, 255:red, 245; green, 166; blue, 35 }  ,draw opacity=1 ] (401,42) -- (521,42) -- (521,162) -- (401,162) -- cycle ;
\draw  [color={rgb, 255:red, 208; green, 2; blue, 27 }  ,draw opacity=1 ][dash pattern={on 5.63pt off 4.5pt}][line width=1.5]  (501,42) -- (621,42) -- (621,162) -- (501,162) -- cycle ;
\draw [color={rgb, 255:red, 74; green, 144; blue, 226 }  ,draw opacity=1 ][line width=1.5]    (97,102) -- (187,60) ;
\draw [color={rgb, 255:red, 74; green, 144; blue, 226 }  ,draw opacity=1 ][line width=1.5]    (187,60) -- (272,140) ;
\draw [color={rgb, 255:red, 74; green, 144; blue, 226 }  ,draw opacity=1 ][line width=1.5]    (337,85) -- (272,140) ;
\draw [color={rgb, 255:red, 74; green, 144; blue, 226 }  ,draw opacity=1 ][line width=1.5]    (337,85) -- (402,152) ;
\draw [color={rgb, 255:red, 74; green, 144; blue, 226 }  ,draw opacity=1 ][line width=1.5]    (500,70) -- (402,152) ;
\draw [color={rgb, 255:red, 74; green, 144; blue, 226 }  ,draw opacity=1 ][line width=1.5]    (500,70) -- (597,102) ;
\draw  [color={rgb, 255:red, 245; green, 166; blue, 35 }  ,draw opacity=1 ][line width=0.75]  (97,42) -- (217,42) -- (217,162) -- (97,162) -- cycle ;

\draw  [fill={rgb, 255:red, 0; green, 0; blue, 0 }  ,fill opacity=1 ] (233,310) .. controls (233,308.34) and (234.34,307) .. (236,307) .. controls (237.66,307) and (239,308.34) .. (239,310) .. controls (239,311.66) and (237.66,313) .. (236,313) .. controls (234.34,313) and (233,311.66) .. (233,310) -- cycle ;
\draw    (236,310) -- (496,310) ;
\draw  [fill={rgb, 255:red, 19; green, 120; blue, 235 }  ,fill opacity=1 ] (303,250) .. controls (303,248.34) and (304.34,247) .. (306,247) .. controls (307.66,247) and (309,248.34) .. (309,250) .. controls (309,251.66) and (307.66,253) .. (306,253) .. controls (304.34,253) and (303,251.66) .. (303,250) -- cycle ;
\draw  [color={rgb, 255:red, 128; green, 128; blue, 128 }  ,draw opacity=1 ][fill={rgb, 255:red, 128; green, 128; blue, 128 }  ,fill opacity=1 ] (259,332) .. controls (259,330.34) and (260.34,329) .. (262,329) .. controls (263.66,329) and (265,330.34) .. (265,332) .. controls (265,333.66) and (263.66,335) .. (262,335) .. controls (260.34,335) and (259,333.66) .. (259,332) -- cycle ;
\draw  [color={rgb, 255:red, 208; green, 2; blue, 27 }  ,draw opacity=1 ][dash pattern={on 5.63pt off 4.5pt}][line width=1.5]  (306,235) -- (456,235) -- (456,385) -- (306,385) -- cycle ;
\draw  [fill={rgb, 255:red, 0; green, 0; blue, 0 }  ,fill opacity=1 ] (396,365) .. controls (396,363.34) and (397.34,362) .. (399,362) .. controls (400.66,362) and (402,363.34) .. (402,365) .. controls (402,366.66) and (400.66,368) .. (399,368) .. controls (397.34,368) and (396,366.66) .. (396,365) -- cycle ;
\draw  [color={rgb, 255:red, 245; green, 166; blue, 35 }  ,draw opacity=1 ] (236,235) -- (386,235) -- (386,385) -- (236,385) -- cycle ;
\draw  [color={rgb, 255:red, 128; green, 128; blue, 128 }  ,draw opacity=1 ][fill={rgb, 255:red, 128; green, 128; blue, 128 }  ,fill opacity=1 ] (279,293) .. controls (279,291.34) and (280.34,290) .. (282,290) .. controls (283.66,290) and (285,291.34) .. (285,293) .. controls (285,294.66) and (283.66,296) .. (282,296) .. controls (280.34,296) and (279,294.66) .. (279,293) -- cycle ;
\draw  [draw opacity=0][pattern=_k398lmzgf,pattern size=6pt,pattern thickness=0.75pt,pattern radius=0pt, pattern color={rgb, 255:red, 155; green, 155; blue, 155}] (306,234.5) -- (386,234.5) -- (386,385) -- (306,385) -- cycle ;
\draw [color={rgb, 255:red, 4; green, 114; blue, 250 }  ,draw opacity=1 ][line width=1.5]    (306,250) -- (236,310) ;
\draw [color={rgb, 255:red, 4; green, 118; blue, 248 }  ,draw opacity=1 ][line width=1.5]    (306,250) -- (399,365) ;

\draw (206,293.4) node [anchor=north west][inner sep=0.75pt]  [font=\large]  {$u$};
\draw (76,102.4) node [anchor=north west][inner sep=0.75pt]  [font=\large]  {$u$};
\draw (599,105.4) node [anchor=north west][inner sep=0.75pt]  [font=\large]  {$v$};

\end{tikzpicture}
\caption{Illustration of the proof of Lemma~\ref{lem:2d}. In two dimensions we can construct the path adaptively, using squares with unit diagonal, each of which starts just past the previous point on the path. Since the next point cannot lie in the shaded region, any two consecutive steps are guaranteed to make at least one square's width of progress in the horizontal direction.}
    \label{fig:adaptive}
\end{figure}


\begin{lemma}
    Let $G$ be a geometric graph in $\R^{\ddim}$, such that for all non-adjacent $u,v \in V$, 
    \[d_G(u,v) \le C \|u-v\| \]
    for some constant $C$. Then $G$ has uniformly bounded independence. 
\end{lemma}

\begin{proof}
Fix $R$ and $u \in V$ 
    suppose $S \subset B_{rR}(u)$ is an independent set of $G^{\le R}$.  
    Then for all $v, w \in S$, $d_G(v,w) > R$. By hypothesis, it follows that for all $v, w \in S$, $\|v-w\| > R/C$. Thus the Euclidean balls of radius $R/C$
    centered at vertices of $S$ are pairwise disjoint.  Moreover, these are all contained within the Euclidean ball of radius  $\left(r+\frac{1}{C}\right)R$ around $u$ (by the triangle inequality, and since for any geometric graph, the Euclidean distance is at most the graphical distance.)
    It follows that
    \[
    |S| \mathcal{V}_k \left(\frac{R}{C}\right) ^{\ddim} \le \mathcal{V}_k \left(r+\frac{1}{C}\right)^{\ddim} R^{\ddim},
    \]
where $\mathcal{V}_k$ is the volume of the unit (Euclidean) ball in $\R^{\ddim}$.
    Thus $|S| \le (Cr+1)^{\ddim}$. Since the size of $S$ grows polynomially in $r$, but does not depend on $R$ we have established that $G$ has uniformly bounded independence..
\end{proof}





\newpage

\yijun{Ruling set is not defined.}

\begin{lemma} \label{lem:distortion-plain-2}
    Suppose $S$ is an $(\alpha,\beta)$-ruling set for $V$ with respect to $G$.
    Let $\mathcal{C} = \vor(S)$ be the corresponding unweighted Voronoi clustering on $G$.  Then,
    for every $v,w \in V$, we have
    \[
    \left\lceil \frac{d(v,w)+1}{2R+1} \right\rceil \le d'([v],[w])+1 \le 
    \left\lceil \frac{d(v,w)+1}{2R+1} \right\rceil \gamma 2^k,
    \]
    where $d$ denotes shortest path distance in $G$, and $d'$ denotes shortest path distance in  $G / \mathcal{C}$.
\end{lemma}

\yijun{Both the lemma statement and proof need to be fixed. What is $R$ here? No dependence on $\alpha$? The bound obtained in the proof does not match the bound in the lemma statement.}

\begin{proof}
    Since $S$ is an $(\alpha,\beta)$-ruling set, every $v \in V$ is within distance $\beta$ of a cluster center.
    By the triangle inequality, it follows that every cluster has diameter at most $2\beta$.
    Therefore, any path of length $\ell'$ from $[v]$ to $[w]$ in  $G/\mathcal{C}$ lifts to a path from $v$ to $w$ in $G$, whose length is at most
    \[
    2\beta(\ell' +1) +\ell'  = (2\beta +1)(\ell' +1) -1.
    \]
    \yijun{Why $+\ell'$? My understanding is that the path can be divided into $(\ell'+1)$ shortest paths in different clusters, so $2\beta(\ell' +1)$ should be enough? OK. I get it now, this is to consider the $\ell'$ edges connecting the $\ell'+1$ clusters.}
    Applying this to a shortest such path in $G/\mathcal{C}$, we have 
    \[
    d(v,w) + 1 \le (2\beta +1)(d'([v],[w])+1)
    \]
    Rearranging terms and recalling that $d'([v],[w])+1$ is an integer, we have 
    \[
     d'([v],[w])+1) \ge \left\lceil \frac{d(v,w) + 1}{2\beta+1}\right\rceil 
    \]

    For the other direction, first observe that if $x,y$ are two vertices whose distance in $G$ is at most $R$, and
    $y_0$ denotes the center of $y$'s cluster, then $d(x,y_0) \le 2R$. Now suppose $v, w \in G$ are at distance $\ell$ and let $P$ be a shortest path between them. Partition $P$ into segments of $2R+1$ vertices. Then, if $x$ is the central vertex of such a segment, the centers of the clusters that intersect the segment all lie in a ball of radius $2R$ around $x$ in $G$.   Since $G^{\le R}$
    has bounded independence with parameters $\gamma, \ddim$, it follows that the number of such clusters
    is at most $\gamma 2^{\ddim}$.
    Since a shortest path of length $\ell$ can be partitioned into $\left\lceil (\ell+1)/(2R+1) \right \rceil$ segments
    of at most $2R+1$ vertices, each of which intersects at most $\gamma 2^{\ddim}$ clusters, we have constructed a path of length  $\left\lceil (\ell+1)/(2R+1) \right \rceil\gamma 2^{\ddim} -1$ from $[v]$ to $[w]$ in $G/\mathcal{C}$. It follows that 
    \[
    d'([v],[w])+1 \le \left\lceil \frac{d(v,w)+1}{2R+1} \right \rceil  \gamma 2^{\ddim},
    \]
    which completes the proof.
\end{proof}

